
\documentclass[english,12pt]{article}
\usepackage{eurosym}
\usepackage{times}
\usepackage{amsfonts}
\usepackage{amsmath,amssymb}
\usepackage{amsthm}
\usepackage{natbib}
\usepackage{bm}
\usepackage{graphicx}
\usepackage{subcaption}
\usepackage{appendix}
\usepackage[figuresright]{rotating}
\usepackage{ctable}
\usepackage{array}
\usepackage{tabularx}
\usepackage{tabulary}
\usepackage{footnote}
\usepackage{threeparttable}
\usepackage{float}
\usepackage{mathrsfs}
\usepackage{dsfont}
\usepackage{nicefrac}
\usepackage{bm}
\usepackage{xspace}
\usepackage{color}
\usepackage[colorlinks,linkcolor=blue,citecolor=blue,bookmarks=false,pagebackref]{hyperref}
\usepackage[top=1.0in, bottom=1in, left=0.7in, right=0.9in]{geometry}
\usepackage{setspace}
\usepackage{titling}
\usepackage{mathabx}
\usepackage{placeins}

\setcounter{MaxMatrixCols}{10}

\setstretch{1.2}
\settowidth{\thanksmarkwidth}{*}
\setlength{\thanksmargin}{-\thanksmarkwidth}
\allowdisplaybreaks
\theoremstyle{plain}
\newtheorem{theorem}{Theorem}[section]
\theoremstyle{plain}
\newtheorem{lemma}{Lemma}[section]
\theoremstyle{plain}
\newtheorem{assumption}{Assumption}[section]
\theoremstyle{plain}

\theoremstyle{plain}
\newtheorem{corollary}{Corollary}[section]
\theoremstyle{remark}
\newtheorem{remark}{Remark}[section]
\theoremstyle{comment}

\theoremstyle{definition}

\theoremstyle{definition}

\newcommand{\plim}{\operatornamewithlimits{plim\,}}

\mathchardef\mhyphen="2D

\floatstyle{ruled}
\newfloat{algorithm}{tbp}{loa}
\providecommand{\algorithmname}{Algorithm}
\floatname{algorithm}{\protect\algorithmname}

\begin{document}

\title{Solving the Forecast Combination Puzzle}
\author{David T.~Frazier\thanks{
Department of Econometrics and Business Statistics, Monash University,
Melbourne, Australia: \texttt{david.frazier@monash.edu} }, Ryan Covey,
Gael M. Martin and Donald Poskitt}
\maketitle

\begin{abstract}
We demonstrate that the forecasting combination puzzle is a consequence of
the methodology commonly used to produce forecast combinations. By the
combination puzzle, we refer to the empirical finding that predictions
formed by combining multiple forecasts in ways that seek to optimize
forecast performance often do not out-perform more naive, e.g.
equally-weighted, approaches. In particular, we demonstrate that, due to the
manner in which such forecasts are typically produced, tests that aim to
discriminate between the predictive accuracy of competing combination strategies
can have low power, and can lack size control, leading to an outcome that
favours the naive approach. We show that this poor performance is due
to the behavior of the corresponding test statistic, which has a
non-standard asymptotic distribution under the null hypothesis of no inferior predictive accuracy, rather than
the {standard normal distribution that is} {typically adopted}. In
addition, we demonstrate that the low power of such predictive accuracy
tests in the forecast combination setting can be completely avoided if more
efficient estimation strategies are used in the production of the
combinations, when feasible. We illustrate these findings both in the
context of forecasting a functional of interest and in terms of predictive
densities. A short empirical example {using daily financial returns} exemplifies how researchers can avoid the puzzle in practical settings.
\end{abstract}


\section{Introduction}

Since their inception (\citealp{Stone1961} and \citealp{Bates1969}),
forecast combination methods have garnered a dedicated following due to
their flexibility and accuracy (\citealp{Timmermann2006}; %
\citealp{Aastveit2019}). Such methods also align with the zeitgeist of
modern econometric thought, in that they are designed to accommodate the
fact that not all data sources are created equal, and that the models we are
working with in economics are at best an approximation to reality.

Generally, forecast combinations are constructed by producing forecasts for
individual, or constituent models, and then combining them via some
combination function or weighting scheme. \textit{Point} forecast
combinations are typically constructed by taking a weighted average of point
forecasts produced by the constituent models (\citealp{Bates1969}; %
\citealp{Stock2004}; \citealp{Timmermann2006}; \citealp{Smith2009}; %
\citealp{Claeskens2016}). In the case of \textit{distributional} forecast
combinations, two commonly used approaches are the linear opinion pool (%
\citealp{Stone1961}; \citealp{Hall2007}; \citealp{Geweke2011}; %
\citealp{Opschoor2017}; \citealp{Martin2021}), and the beta-transformed
linear opinion pool (\citealp{Ranjan2010}; \citealp{Gneiting2013}; %
\citealp{Satopaeae2014}; \citealp{Baran2018}). The weighted average, linear
pool and beta-transformed linear pool are all \textit{combination functions}
that map a set of forecasts, produced using constituent models, to a single
forecast combination.

Even though the last fifty years has seen these methods rise to prominence
among empirical forecasters (\citealp{Makridakis2018}; \citealp{Thorey2018}; %
\citealp{Wang2018}; \citealp{Makridakis2020}; \citealp{Taylor2020}), key
issues regarding the use and abuse of the methods remain. One of the most
interesting issues is the so-called `forecast combination puzzle', which is
a stylized fact that states that predictions produced using complicated
combinations of different forecasts, e.g., via optimizing the weights in the
combination using some criterion function that encapsulates some aspect of
forecast accuracy, do not generally outperform simpler procedures (such as
equal-weighted combinations). {For example, see \cite{Stock2004}, \cite%
{Smith2009}, \cite{Makridakis2018, Makridakis2020} for empirical evidence of
this phenomenon.}

Explanations for the puzzle range from the increased sampling variability of
complex weighting schemes (\citealp{Stock2004}; \citealp{Claeskens2016}), to
the similar performance of equally-weighted and optimally-weighted
combinations (\citealp{Elliott2011}), and to bias in the averages loss
functions (\citealp{Chan2018}). However, all of the above explanations are
specific to linear combinations of point forecasts evaluated according to
mean squared forecast error. To date, there is no single universally
accepted answer that sufficiently explains this puzzle across both point and
distributional forecasts, or across different performance measures. See \cite%
{graefe2014combining} for a historical summary of the puzzle.

Herein, we analyze the forecast combination puzzle in general terms, and
through the lens of tests of superior forecast accuracy (\citealp{White2000}%
; \citealp{Hansen2005}). We demonstrate that the standard two-step approach
to producing forecast combinations results in tests that have \textit{no
power} against a large class of alternatives, including fixed-, random-, and
drifting-weighting schemes. This result is a consequence of the following
unintuitive feature of forecast combinations: when produced in the standard
manner, the {usual test statistic employed to gauge differences in forecasts
is such that estimated combination weights imparts \textit{no} sampling
variability into the statistic at first-order}. This finding extends and
renders rigorous previous results documented in a variety of contexts,
including by \cite{Stock2004}, and \cite{Smith2009}. In particular, \cite%
{Smith2009} argue that \textquotedblleft the forecast combination puzzle
rests on a gain ... that has no practical significance\textquotedblright ,
and our results make rigorous this statement by showing that, in general
settings, tests of superior forecast accuracy cannot distinguish between
estimated and a large class of forecast combinations, including the
equally-weighted combination.

More generally, we show that in order for the difference between two (sets
of) forecast combinations to be meaningful, the combination weights must be
surprisingly disparate. Critically, the distance between two combination
weights needed to yield a test with non-trivial power depends entirely on
the chosen loss and the variability in the constituent model forecasts.
Therefore, when comparing two forecast combinations obtained using \textit{%
the same} constituent models \textit{but different} combination weights, the
variability in the constituent forecasting models can easily swamp large
differences in the forecast distribution due to differences in the
combination weights. Consequently, a statistically significant difference
between two forecast combinations is unlikely to eventuate in practice
unless we have: 1) large sample sizes, much larger than typically used in
empirical applications; and 2) constituent model forecasts that have low
variability, which ultimately requires that the estimators of the 
underlying parameters of the constituent models also have low variability.
Consequently, in empirical applications that require the use of
high-variance constituent model forecasts, it is \textit{unlikely} that we
will be able to detect differences amongst different forecast combination
methods.

We demonstrate that the poor behavior of forecast accuracy tests in this
setting results from the test statistic having a non-standard asymptotic
distribution under the null of no inferior predictive accuracy. {That is, w}%
hile the critical values for such tests are {typically }based on the
standard normal distribution, we show that in the case of forecast
combination methods, an appropriately scaled version of the test statistic
converges in distribution to a generalized chi-squared distribution.
Furthermore, we show that this result persists across a large class of
forecast combination methods, including those with time-varying weights.

Lastly, we demonstrate that the forecast combination puzzle can be
circumvented in cases where it is feasible to produce forecast combinations
in a single step. That is, except in trivial cases, forecast combinations
produced in a single step do not exhibit the forecast combination puzzle,
and will always yield superior forecast accuracy over standard, or
equally-weighted, forecast combination schemes. In this way, we build on an
extensive investigation of one- and two-step forecast combinations in \cite%
{Zischke2022}, which also provides support for the one-step approach. We
reiterate that it is the two-step approach to producing forecast
combinations that is the standard approach adopted in the literature on
(frequentist) forecast combinations and, hence, the reason why the
combination puzzle has been so empirically prevalent. Revisiting the S\&P500 returns density prediction example considered in \cite{Geweke2011}, we show that: 1) the forecast combination puzzle is evident in their original example for certain classes of volatility models; {and} 2) the puzzle is entirely resolved if we use forecast combinations that are built in a single step.

Before moving on, we note here the following notations used throughout the
remainder of the paper. For a probability measure $P$, a random variable $X$
and a random sequence $X_n$, we write $\mathbb{E}_P[X]$ to denote the
expectation of $X$ under $P$, $\operatornamewithlimits{plim\,}_n X_n$ to
denote the probability limit of $X_n$ as $n \to \infty$ (if it exists), and $%
X_n \Rightarrow X$ if $X_n$ converges in distribution to $X$. For some
positive sequence $R_n$, the notations $X_n = o_p(R_n)$ and $X_n = O_p(R_n)$
have their usual definitions, see \cite{VanDerVaart1998} for a textbook
treatment. We say that $X_n \asymp R_n$ if there exists constants $c$ and $C$
such that $c R_n \leq X_n \leq C R_n$ for all $n$ large enough (with
probability one). The gradient and hessian of a functional $f$ of $x$ is
written $\nabla_{x} f(x)$ and $\nabla^2_{xx} f(x)$, respectively, where for $%
x$ on the boundary of the domain of $f$ these symbols denote the left or
right derivatives, whichever of those exist at $x$.

\section{A Brief Motivating Example}

\label{sec:motiv} We first motivate our analysis by reconsidering the
results and discussion in \cite{Smith2009} and demonstrating that the
forecast combination puzzle extends far beyond their initial analysis.

\subsection{Revisiting \protect\cite{Smith2009}}

\label{subsec:revisitingsmith}

Our goal is to produce a point forecast for the random variable $Y_{t}$ at
time $t=T+1$ using observed data $\{y_t:1\le t\le T\}$ and models $f_{j}$, $%
j=1,2$. Denote the point predictions from these models by $\tilde{y}_{jt}$,
and define the linear combination of point predictions by $\tilde{y}%
_{t}^{\eta}=\eta \tilde{y}_{1t}+(1-\eta)\tilde{y}_{2t}$ where $0\le\eta\le 1$%
, and assume that the parameters underlying the models $f_1$ and $f_2$ have
been estimated in a first stage. The accuracy of the forecasts is measured
according to mean squared forecast error (MSFE).

Under the assumptions in \cite{Smith2009}, e.g., $\tilde{y}_{jt}$ is
unbiased with constant variance, $\sigma^2_j$, and covariance with $\tilde{y}%
_{it}$, $\sigma_{ji}, j,i = 1,2$, the optimal forecast combination weight $%
\eta$ is obtained by solving $\min_{\eta\in[0,1]}\mathbb{E}\left\{y_t-\tilde{%
y}_t^\eta\right\}^2$ and is given by $\eta^\star=[\sigma^2_2-\sigma_{12}]/(%
\sigma^2_1+\sigma^2_2-2\sigma_{12}). $

Following notational conventions (see, e.g., \citealp{West1996}): based on a
sample of size $T+1$, we split this sample into $R$ in-sample observations
used for model fitting, and $P$ out-of-sample observations used for forecast
evaluation, so that $R+P=T+1$. Under this evaluation regime, \cite{Smith2009}
use the out-of-sample MSFE, 
\begin{equation*}
\sigma^2_\eta=P^{-1}\sum_{t=R+1}^{T+1}(y_t-\tilde{y}^\eta_{t})^2,
\end{equation*}
to compare two different forecast combinations: 1) the equally-weighted
combination, with $\eta=1/2$; 2) the sample estimate, $\tilde\eta$, of the
optimal combination, $\eta^\star$. The key finding of \cite{Smith2009} is
that, under their assumptions, when $\eta^\star=1/2$, i.e. when the optimal
approach (`in population') is to actually equally weight the two forecasts, 
\begin{equation*}
\sigma_{\eta=1/2}^2-\sigma_{\tilde\eta}^2\approx -P^{-1}\sum_{t=R+1}^{T+1}(%
\tilde{y}^{\tilde\eta}_t-\tilde{y}^{\eta=1/2}_t)^2\le0.
\end{equation*}
That is, when the optimal combination weight is $\eta^\star=1/2$ the
additional finite-sample noise introduced via the estimation of $\eta$
induces an additional penalty due to estimation error which, in turn,
results in the fixed-weight scheme displaying superior performance. This
finding leads \cite{Smith2009} to conclude that ``The parameter estimation
effect [of the weights] is not large, nevertheless it explains the forecast
combination puzzle.''

\subsection{Extending the Findings of \protect\cite{Smith2009}}

\label{subsec:extsmith}

While the analysis of \cite{Smith2009} is insightful, their findings are not
immediately generalizable to related situations. For instance, the analysis
is based on linear combinations of point forecasts, with unbiased
constituent forecasts that do not depend on unknown parameters.\footnote{%
It is perhaps more accurate to say that the impact of having to estimate
unknown parameters in the constituent models does not feature in their
analysis.} In addition, and most importantly, the optimal MSFE weight is
assumed to be $\eta^\star=1/2$, which coincides with the default
(equally-weighted) combination that typically underpins the forecast
combination puzzle. Finally, this analysis does not immediately extend to
other loss functions.

However, there is now mounting evidence to suggest that the forecast
combination puzzle extends beyond this stylized setup. To this end, we
explore scenarios in which we allow for a range of values for $\eta^{\star}$%
, and a range of values for the fixed weight $\eta$ that \textit{differ}
from $\eta^{\star}$, for the case of both the point forecast combination of 
\cite{Smith2009} \textit{and} a distributional forecast combination, and
where these forecast combinations are estimated using different loss
functions.

Following Section 3.1 of \cite{Smith2009}, consider that the true DGP is
from the AR(2) family, 
\begin{equation*}
y_t = \phi_1 y_{t-1} + \phi_2 y_{t-2} + \epsilon_t,\ \epsilon_t \overset{%
i.i.d.}{\sim} N(0, \sigma^2_{\epsilon}),
\end{equation*}
and that our forecasts are based on a linear pool $f^{(t)}$ of two normal
constituent distributional forecasts $f^{(t)}_1$ and $f^{(t)}_2$: 
\begin{align*}
f^{(t)}_{1}(y) &= N\{y;\gamma_1 y_{t-1},1\}, \\
f^{(t)}_{2}(y) &= N\{y; \gamma_2 y_{t-2},1\}, \\
f^{(t)}(y) &= \eta f^{(t)}_{1}(y) + (1 - \eta) f^{(t)}_{2}(y),
\end{align*}
where $N\{x;\mu,\Sigma\}$ denotes the normal pdf evaluated at $x$ with mean $%
\mu$ and variance $\Sigma$, $\gamma_1$ and $\gamma_2$ are the parameters of
the constituent models, and $\eta$ is the weight assigned to the first model.

We will estimate the parameters of two forecast combinations. First, we
consider the distributional forecast combination given by the linear pool
above, and estimate the parameters ($\eta$, $\gamma_1$, $\gamma_2$) by
minimizing the log loss (or equivalently by maximizing the log likelihood).
Second, we consider the point forecast combination given by the \textit{%
expectation} of the distributional forecast combination, and estimate the
parameters by minimizing the MSFE. The second combination is identical to
the combination considered by \cite{Smith2009} and discussed in the previous
section. Parameters are estimated in the standard two-step fashion, so that $%
\gamma_1$ and $\gamma_2$ are chosen to minimize the selected loss of the
first and second constituent models, respectively, and $\eta$ is estimated
by minimizing the loss of the combination given the aforementioned estimates
for $\gamma_1$ and $\gamma_2$. Section \ref{subsec:defoptcomb} discusses
such optimal forecast combinations in more detail.

For a given (point or distributional) combination, the parameters $%
\phi_1,\phi_2$ and $\sigma^2_\epsilon$ can be chosen so that a desired value
of $\eta^{\star}$ is achieved (we refer to Appendix \ref{subsubsec:dgpparams}
for details). We then test the null of \textit{no inferior forecast accuracy}
across a variety of benchmark forecasts constructed using a range of fixed
weights $\eta \in \{0.25, 0.5, 0.75\}$ (which includes the equally-weighted
benchmark $\eta = 0.5$) against the alternative optimally-weighted
combination.

Figure \ref{fig:smith} presents the rejection frequency ($y$-axis) of the
test across each fixed weight (rows) in the case of the MSFE (for the point
forecast combination, left-hand column) and the log loss (for the
distributional forecast combination, right-hand column). We present these
results for DGP parameter values corresponding to a variety of pseudo-true
weights $\eta^{\star} \in \{0, 0.25, 0.5, 0.75, 1\}$ (colors) and across a
grid of values for the sample size $T+1$ ($x$-axis), comprising the $R =
(T+1)/2$ in-sample observations followed by $P = (T+1)/2$ out-of-sample
observations. For the sake of brevity, we leave a more detailed discussion
of the specific implementation details for this exercise to Appendix \ref%
{subsec:nidextsmith}.

\begin{figure*}[t]
\includegraphics[width = \textwidth]{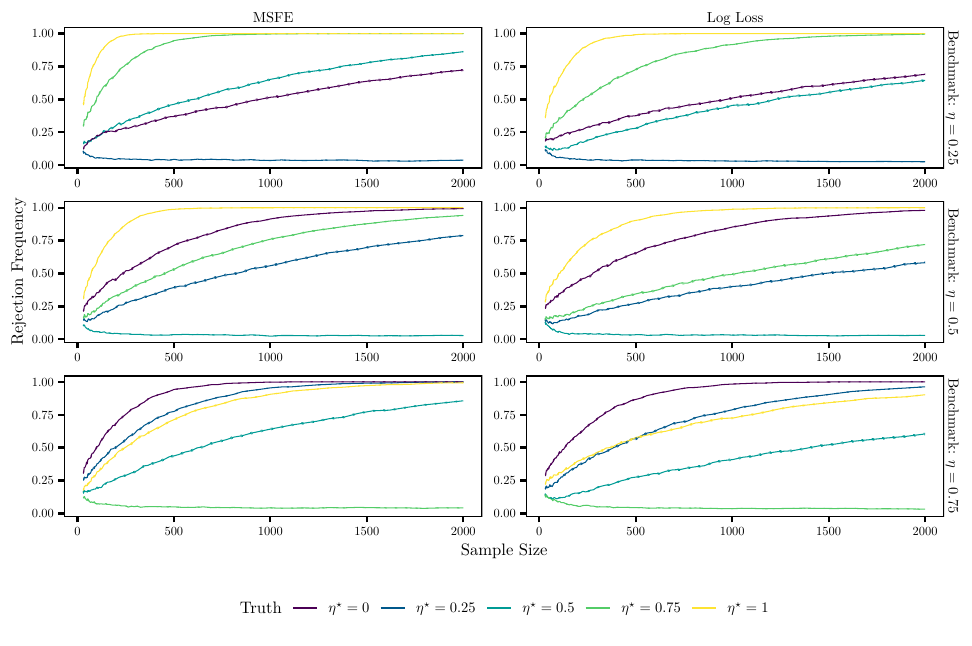}
\caption{Estimates (solid) and their 95\% confidence intervals (dotted) of
the Rejection Frequency ($y$-axis) for the hypothesis test of no inferior
predictive accuracy of a benchmark forecast combination with fixed weights
(rows) against the alternative optimal forecast combination. The test is
conducted with observations drawn from DGPs across a range of pseudo-true
weights (colors), and across a grid of sample sizes ($x$-axis). Results for
a point forecast combination with optimal weights minimizing the MSFE are
given in the first column, and results for a distributional forecast
combination with optimal weights minimizing the log loss are given in the
second column.}
\label{fig:smith}
\end{figure*}

Analyzing each figure we see that, across both loss functions, the
probability of rejection under the null hypothesis of no inferior forecast
accuracy of the benchmark (that is, where the $\eta ^{\star }$ value of the
color equals the $\eta $ value of the row) is (virtually) zero, and
certainly less than 0.05, the nominal size of the test. Analyzing the
rejection rates {in all panels, which are given by the cases where the $\eta 
$ values of the color are different from the $\eta ^{\star }$ value of the
row,} suggests that the forecast combination puzzle persists even when the
distance $\lvert \eta ^{\star }-\eta \rvert $ between $\eta ^{\star }$ and
the incorrect fixed weight $\eta $ is large. For example, a test of no
inferior predictive accuracy of the equally-weighted combination (middle
row) against the optimally-weighted combination where the truth is $\eta
^{\star }=0.25$ (dark blue) has a rejection frequency smaller than 50\%
across all sample sizes less than 1000, whether we are using the MSFE
(left-hand panel) or the log loss (right-hand panel). Hence, the power of
such tests may be quite low in practice even if there are meaningful
differences between forecasts. Later, in Section \ref{subsec:smithone}, we
will conduct a similar exercise to illustrate how this issue of low power
can be resolved by producing optimally-weighted forecast combinations in a
one-step fashion. The production of one-step combinations is described in
the next section.


\section{Measuring the Accuracy of `Optimal' Forecast Combinations}

\label{sec:acc}

Now using more formal notation, let $Y_1, \cdots, Y_{T},\dots$ denote a
sequence of random variables generated from the probability triple $(\mathsf{%
Y}, \mathscr{A}, G)$, where $\mathsf{Y}\subseteq\mathbb{R}^d$, $d\ge1$.
Since the true measure $G$ is unknown in general, we postulate a class of
models $\mathcal{Q}$ on $\mathsf{Y}$, which we identify by their
distribution functions $Q\in\mathcal{Q}$.

Given an observed sample, $y_1,\dots,y_T$, our goal is to predict some
feature of $Y_{T+h}$ at a forecast horizon of $h\ge1$. Since the true model
is unknown, there is no single direction of truth with which to predict
features of interest. The literature has generally maintained that in such
cases entertaining multiple models is a valid approach that can produce
reliable predictions, with the most common approach being to produce
forecast combinations, as we have highlighted.

\subsection{Defining `Optimal' Forecast Combinations}

\label{subsec:defoptcomb}

Denote by $F_{\gamma_1}:\mathsf{Y}\times\Gamma_1\rightarrow C[0,1]$ a
probability measure on $(\mathsf{Y},\mathscr{A})$ indexed by the parameter $%
\gamma_1\in\Gamma_1\subseteq\mathbb{R}^{d_{\gamma_1}}$ and itself lying in
the family $\mathcal{F}(\Gamma_1):=\{F_{\gamma_1}:\gamma_1\in\Gamma_1\}$.
For any $1\le n\le T$, let $\Omega_{n}$ denote the information available to
the forecaster at time $n$, and denote the predictive measure based on time-$%
n$ information as $F_{\gamma_1}^{(n)}:=F_{\gamma_1}(\cdot:\Omega_n)$.

In the majority of forecasting settings, the practitioner entertains a
collection of $K<T$ possible statistical models that can each describe, with
varying accuracy, the movements of the stochastic process $\{Y_t:t\le T\}$.
We consider that each model is specified using a (semi-) parametric family $%
\mathcal{F}(\Gamma_j)$, which depends on $\Gamma_j\subseteq\mathbb{R}%
^{d_{\gamma_j}}$ unknown parameters for each $j=1,\dots,K$. For 
\begin{equation*}
\mathcal{M}=\bigtimes_{j=1}^{K}\mathcal{F}(\Gamma_j),\;\Gamma_j\subseteq%
\mathbb{R}^{d_{\gamma_j}},\text{ for each }j=1,\dots,K,
\end{equation*}%
denoting the collection of all $K$ constituent models, we can combine these
models to produce a forecast combination. To this end, and following \cite%
{Gneiting2013}, we consider the combination function 
\begin{equation*}
C_{\eta}:\mathsf{Y}\times\mathcal{M}\times \mathcal{E}\rightarrow
C[0,1],\quad (F_{\gamma_1},\cdots,F_{\gamma_K})\mapsto
C_\eta(F_{\gamma_1},\cdots,F_{\gamma_K}),\quad \eta\in\mathcal{E}\subseteq%
\mathbb{R}^{d_\eta}.
\end{equation*}
Common choices for the combination family $\mathcal{C}:=\{C_\eta:\eta\in%
\mathcal{E}\}$ include the linear pool (see, e.g., \citealp{Geweke2011}),
and the Beta-transformed linear pool (\citealp{Gneiting2013}).

Given the family of combination functions, $\mathcal{C}$, and the member
model family $\mathcal{M}$, the class of probability measures used for
prediction is the composition of the two: define $\gamma:=(\gamma_1^{\prime
},\dots,\gamma_K^{\prime })^{\prime }$, $\theta:=(\eta^{\prime
},\gamma^{\prime })^{\prime }$, $\Theta:=\mathcal{E}\times\Gamma_1\times%
\dots\times\Gamma_K$ and consider $Q_\theta:\mathsf{Y}\times\mathcal{C}\times%
\mathcal{M}\rightarrow C[0,1]$ defined by $Q_\theta=C_\eta\circ M_\gamma$.
We denote by $\mathcal{Q}$ the class $\{Q_\theta:\theta\in\Theta\}$.%

Generally, the parameters of $Q_\theta$ are unknown and must be estimated.
Throughout, we consider that the forecaster wishes to obtain `optimal'
forecasts in the spirit of \cite{Gneiting2007}, \cite{Gneiting2011}, and 
\cite{Martin2021}. Herein, we take optimal to mean that the distributions we
choose to take out-of-sample are produced by targeting a loss function that
measures precisely the features of $Y_{T+h}$ that are of interest. Following 
\cite{Gneiting2011}, we consider a decision-theoretic framework for such
forecasts. Let $\mathsf{Y}\subseteq\mathbb{R}^d$ denote the \textit{%
observation domain}, and for some $k\ge1$ let $\mathsf{A}\subseteq\mathbb{R}%
^k$ denote the \textit{action space}. We consider two mutually exclusive
cases: the forecaster is interested in measuring predictive accuracy for a
functional of the distribution of $Y_{T+h}$, or the entire distribution.

\subsubsection{Consistent Scoring Functions}

Recall that $\mathcal{Q}$ is a class of distributions on $\mathsf{Y}$, and
consider a functional $U:\mathcal{Q}\mapsto\mathsf{A},\quad Q\mapsto
U[Q]\subseteq \mathsf{A}$ that maps a distribution $Q\in\mathcal{Q}$ to a
subset $U[Q]$ of the action space. A \textit{scoring function} is a
measurable map $S:\mathsf{A}\times \mathsf{Y}\rightarrow[0,\infty)$. We
orient the scoring functions so that a lower score implies a more accurate
forecast. The scoring function $S(\cdot,\cdot)$ \textit{is $\mathcal{Q}$%
-consistent} for a functional $U[\cdot]$ if 
\begin{flalign*}
\text{for all }x\in\mathsf{A},\;Q\in\mathcal{Q}:\quad\mathbb{E}_Q\left\{S\left(U[Q],Y\right)\right\}\leq \mathbb{E}_Q\left[S(x,Y)\right],
\end{flalign*}where $Y$ is a random variable with distribution $Q$, and here
and throughout we assume that any stated expectation exists and is finite.
We say that $S(\cdot,\cdot)$ is \textit{$\mathcal{Q}$-strictly consistent}
for $T[\cdot]$ if 
\begin{equation*}
\mathbb{E}_Q\left\{S\left(U[Q],Y\right)\right\}= \mathbb{E}_Q\left[S(x,Y)%
\right]\implies\;x=U[Q].
\end{equation*}%
If $U[\cdot]$ admits a strictly consistent scoring function, then it is
called \textit{elicitable}. Here and throughout, we consider that the
functional we are interested in is \textit{elicitable}. For several
functionals such as mean, quantiles and expectiles, this is the case, but
there are many functionals, such as variance or expected shortfall, that are
not elicitable, {at least on their own. Some of these functionals are
elicitable jointly with others, which is the case for the pairs (mean,
variance) and (value at risk, expected shortfall), for example (%
\citealp{Fissler2016})}.

Assume the forecaster is interested in the case where $U[\cdot]$ is a
particular functional of the distribution of the random variable $Y_{T+h}$,
and we have available information $\Omega_T$. Then, we are interested in
generating predictions at time $T+h$, $h\ge1$, for the functional $%
U[F^{(T)}_{Y_{T+h}}]$, where we recall that $F_{Y_{T+h}}^{(T)}$ signifies
the distribution of the random variable $Y_{T+h}$ conditional on information 
$\Omega_T$. To this end, we follow, among others, \cite{Patton2020} and
assume that any particular combination model $Q_\theta\in\mathcal{Q}$ admits
a model for $U[F_{Y_{T+h}}^{(T)}]$ of the form $m(Z_T;\theta)=U[Q^{(T)}_%
\theta]$, where $Z_T\in\Omega_T$ denotes observable variables in the
conditioning set and $m:\Omega_T\times\Theta\rightarrow\mathsf{A}$ is known
up to the unknown $\theta$. We can then produce point forecasts for $%
U[F_{Y_{T+h}}^{(T)}]$ using $m(Z_T;\theta)$, and by replacing the unknown $%
\theta$ with 
\begin{equation*}
\widehat\theta_T:=\operatornamewithlimits{argmin\,}_{\theta\in\Theta}%
\sum_{t=1}^{T}S\left[m(Z_t;\theta),y_{t+1}\right].
\end{equation*}

\subsubsection{Proper Scoring Rules}

\label{subsubsec:propriety} The above approach allows one to produce
distributions that generate reliable `point forecasts' for a given functional%
{, but there is no reason for these distributions to be accurate in any
other respect. In cases where we want distributional forecasts that are
accurate as complete representations of the uncertainty surrounding an
unobserved random variable, we can estimate our parameters according a
proper scoring rule.}

A negatively-orientated \textit{proper scoring rule} is a function $S:%
\mathcal{Q}\times\mathsf{Y}\mapsto\mathbb{R}$ such that 
\begin{equation*}
\mathbb{E}_G\left[S(G,Y)\right]\le \mathbb{E}_G\left[S(Q,Y)\right],
\end{equation*}
for all $Q,G\in\mathcal{Q}$. A \textit{strictly} proper scoring rule is a
proper scoring rule that is minimized by $G$ alone.\footnote{%
Scoring rules have a deep connection to decision theory, and we do not
review this literature here (see, e.g., \citealp{Pesaran2002} and %
\citealp{Granger2006} for a discussion in the context of economics and
financial forecasting).} While scoring rules can be used to measure the
accuracy of the predictive distribution or densities, they can also be used
to measure the accuracy of certain features of the distribution: e.g.,
quantiles, or predictive intervals (see, \citealp{Gneiting2007}). Throughout
the remainder, when we refer to scoring rules, it is meant that the action
space is either the class of densities/distributions and not some functional
of it such as quantiles or intervals; we keep this distinction as accuracy
for such latter quantities can be readily measured using (consistent)
scoring functions.

Similar to the case of scoring functions, producing density forecasts
requires estimating $\theta$ in the combination model $Q_\theta\in\mathcal{Q}
$. In this case, we can define an estimator of this parameter by minimizing
the expected scoring rule over our observed sample 
\begin{equation*}
\widehat{\theta}_T:=\operatornamewithlimits{argmin\,}_{\theta\in\Theta}%
\sum_{t=1}^{T}S[Q_\theta^{(t)},y_{t+1}].
\end{equation*}

\subsubsection{Estimating Forecast Combinations}

\label{subsubsec:estforecomb}

Regardless of whether one is producing `point' forecasts of a functional or
distributional forecasts, producing the `optimal predictive' is
categorically the same. Therefore, to simplify the presentation, we jointly
treat both approaches. Given a collection of realized observations $\{y_t:
1\le t\le n\}$, $n\le T$, we search for the most accurate combination
predictive distribution $Q_\theta\in\mathcal{Q}$ under the given loss
function $L:\mathcal{Q}\times\mathsf{Y}\rightarrow \mathbb{R}$.

The optimal predictive can generally be produced in two possible ways. In
cases where it is feasible to estimate the parameters jointly, the
combination predictive can be produced by estimating the unknown model
parameters, 
\begin{equation}
\widehat\theta_n:=\operatornamewithlimits{argmin\,}_{\theta\in\theta}L_n(%
\theta) \equiv\operatornamewithlimits{argmin\,}_{\theta\in\theta}%
\sum_{t=1}^{n}\ell_{t+1}(\theta) ,\text{ where }\ell_{t+1}(%
\theta):=L[Q^{(t)}_\theta,y_{t+1}],  \label{eq:1step}
\end{equation}%
and the predictive distribution at time $T+h$ can be taken as $%
Q^{(T)}_{\widehat\theta_n}$. However, due to the dimensionality of $\theta$,
it is often difficult to estimate $\theta$ jointly. Generally, forecast
combinations are estimated in two steps: first, the constituent model
forecasts are produced, and then the combination weights are estimated
conditional on the constituent model forecasts; see, e.g., \cite{Hall2007,
Geweke2011, Gneiting2013}. Throughout we assume that such two-step forecast
combinations are carried out in the spirit proposed in \cite{Gneiting2007}.
Namely, for each constituent model we estimate $\gamma_j$, $j = 1, 2,
\cdots, K, $ via the estimator 
\begin{equation}  \label{eq:gamma}
\tilde{\gamma}_{jn} = \operatornamewithlimits{argmin\,}_{\gamma_j \in
\Gamma_j} \sum_{t=1}^{n}L[Q^{(t)}_\theta,y_{t+1}].
\end{equation}
Collecting the $\tilde\gamma_{jn}$ into $\tilde\gamma_n:=(\tilde%
\gamma_{1n}^{\prime },\dots,\tilde\gamma_{Kn}^{\prime })^{\prime }$, we can
then estimate the combination parameters $\eta$ by maximizing $%
Q_\theta\mapsto L_n(\theta)$, conditional on $\gamma=\tilde\gamma_n$, which
yields 
\begin{equation}  \label{eq:est2s}
\tilde{\theta}_{n}=(\tilde{\eta}_n^{\prime },\tilde{\gamma}_n^{\prime
})^{\prime }= \operatornamewithlimits{argmin\,}_{\theta \in \Theta}
L_n(\eta,\gamma)\text{ s.t. }\gamma=\tilde\gamma_n.
\end{equation}
Once $\tilde\theta_n$ have been calculated the predictive distribution $%
Q^{(T)}_{\tilde\theta_n}$ can be used to produce predictions at time $T+h$.

Throughout the remainder, we refer to the predictive distributions $%
Q^{(T)}_{\widehat\theta_n}$ and $Q^{(T)}_{\tilde\theta_n}$ as the one-step
and two-step predictive combinations, respectively.

Under standard regularity conditions, the extremum estimators $\widehat{%
\theta }_{n}$ and $\tilde{\theta}_{n}$ will converge to well-defined
probability limits, which we denote by $\theta ^{0}:=(\eta ^{0^{\prime
}},\gamma ^{0^{\prime }})^{\prime }$, and $\theta ^{\star }:=(\eta ^{\star
^{\prime }},\gamma ^{\star ^{\prime }})^{\prime }$, respectively;\footnote{%
For more precise definitions of $\theta ^{0}$ and $\theta ^{\star }$, we
refer the interested reader to Appendix \ref{app:A}.} i.e., under our
regularity conditions we will have 
\begin{equation}
\widehat{\theta }_{n}\rightarrow _{p}\theta ^{0}:=(\eta ^{0^{\prime
}},\gamma ^{0^{\prime }})^{\prime }\text{ and }\tilde{\theta}_{n}\rightarrow
_{p}\theta ^{\star }:=(\eta ^{\star ^{\prime }},\gamma ^{\star ^{\prime
}})^{\prime }.  \label{limits}
\end{equation}%
Further, it is well-known, see \cite{Newey1994} for details, that one-step
and two-step estimators do not coincide in general, so that throughout we
can assume that $\theta ^{0}\neq \theta ^{\star }.$ If there is no
asymptotic difference between the one- and two-step estimators, there will
be no asymptotic difference between the resulting forecast combinations
based on these estimators. In most empirical settings, however, significant
differences between one- and two-step estimators exist, and so we restrict
our attention to this case in following analysis.

For a general discussion on two-step estimation see \cite{Newey1994}; for a
more modern treatment see \cite{Frazier2017}; and for a particular approach
to measuring the impact of two-step estimation in the context of
distributional forecast combinations see \cite{Zischke2022}.

\subsection{Testing the Accuracy of Forecast Combinations}

\subsubsection{Evaluation Scheme}

Following \cite{West1996}, \cite{White2000}, and many others, we measure
out-of-sample predictive accuracy using loss differences of forecasts over a
given out-of-sample period. For simplicity, and for consistency with the
earlier exposition of the analysis of \cite{Smith2009}, we let $h=1$ denote
the horizon over which we will make predictions, but note that our results
can also accommodate $h\ge1$ at the cost of additional notation. For ease of
exposition, we re-introduce the notation introduced in Section \ref%
{subsec:revisitingsmith}, and assume the sample consists of $T+1$ total
observations, which we partition into $R$ in-sample periods, and $P$
out-of-sample periods, across which we evaluate the predictions, where $%
R+P=T+1$.

In the approach of \cite{West1996}, the information used to estimate $\theta$
is increased by one unit for each prediction; i.e., at time $t=R+1$, $R$
observations are used to estimate $\theta$, at time $t=R+2$, $R+1$
observations are used, and so forth. This formulation is useful as it allows
one to update the parameter estimates as new information becomes available.
However, the resulting out-of-sample loss difference at time $T+1$ is then a
complex combination of all previous estimators, and also has variability due
to the $P$ out-of-sample observations themselves. Consequently,
disentangling variability due to parameter uncertainty, from the innate
variability of the average loss difference becomes difficult, and obtaining
clear intuition regarding the contribution to each of these pieces to the
behavior of the out-of-sample average loss difference becomes difficult.

Furthermore, it is critical for us to understand the precise impact of
parameter uncertainty on forecast accuracy since this effect, while not
large, ``explains the puzzle'' according to \citet{Smith2009}. Therefore, we
consider a simple framework that cleanly dissects the two types of sampling
variability, loss and parameter estimation. This is accomplished by
estimating the unknown parameters once using $R$ observations, with the
resulting estimators then held fixed over the $P$ out-of-sample periods;%
\footnote{%
More specifically, instead of the first prediction being based on an
estimator $\theta_{R}$, obtained using observations $1,\dots,R$, and the
next based on $\theta_{R+1}$, and so forth, we only consider estimators
based on $R$ observations (with $R\rightarrow\infty$ as $T\rightarrow\infty$%
).} the $P$ out-of-sample periods are then used for evaluation only;
further, we will also maintain that the in-sample period, $R$, and the
out-of-sample period, $P$, are in rough proportion.\footnote{%
We refer to Section 8 of \cite{clark2013advances} for a discussion on the
benefits and disadvantages of various splitting schemes for forecast
evaluation.}

\begin{assumption}[Maintained Assumption]
\label{ass:samp} $R,P\rightarrow\infty$ as $T\rightarrow\infty$, and $%
c:=\lim_TR/P$, with $0<c<\infty$.
\end{assumption}

\subsubsection{Tests of Forecasting Accuracy}

\label{subsubsec:tests}

Following \cite{White2000} and \cite{Hansen2005}, we measure the accuracy of
forecasts by testing the null hypothesis of \textit{no inferior forecast
performance} using the average loss difference over the $P$ out-of-sample
periods. Consider that we wish to test the accuracy of a benchmark forecast
distribution $Q_{\vartheta ^{b}}^{(T)}$, indexed by unknown parameters $%
\vartheta ^{b}$, against an alternative distribution $Q_{\theta ^{a}}^{(T)}$%
, indexed by unknown parameters $\theta ^{a}$. The null hypothesis of 
\textit{no inferior forecast accuracy} of the benchmark ($Q_{\vartheta
^{b}}^{(T)}$) over the alternative ($Q_{\theta ^{a}}^{(T)}$) is 
\begin{equation}
\text{H}_{0}:\;\mathbb{E}(L[Q_{\vartheta ^{b}}^{(T)},Y_{T+1}])\leq \mathbb{E}%
(L[Q_{\theta ^{a}}^{(T)},Y_{T+1}]).  \label{eq:null}
\end{equation}

To test the null in \eqref{eq:null}, we approximate the above expectation,
and the unknown $\vartheta^b,\theta^a$, using their sample counterparts. For
any $t> R$, and any sequence of consistent estimators $\vartheta_R,\theta_R%
\in\Theta$ of $\vartheta^b, \theta^a$, respectively, define the average loss
difference statistic 
\begin{equation*}
\Delta_P(\vartheta_R,\theta_R):=P^{-1}\sum_{t=R+1}^{T+1}d_{t}(\vartheta_R,%
\theta_R), \quad
d_{t}(\vartheta_R,\theta_R)=\ell_{t}(\vartheta_{R})-\ell_{t}(\theta_R),
\end{equation*}
where we recall that $\ell_{t}(\theta_{R})= L[Q^{(t-1)}_{\theta_{R}},y_{t}]$%
. The null hypothesis in \eqref{eq:null} can then be tested using the
standardized statistic 
\begin{equation}
D_{P}(\vartheta_R,\theta_R):=\hat{\Omega}_R^{-1/2}\sqrt{P}%
\Delta_P(\vartheta_R,\theta_R),  \label{dp}
\end{equation}
where $\hat{\Omega}_R$ is a consistent estimator of the asymptotic variance
of $\Delta_P(\vartheta^b,\theta^a)$.

\section{Solving the Forecasting Combination Puzzle}

\label{sec:puzzle}

\subsection{A General Phenomenon}

Our results in Section \ref{subsec:extsmith} demonstrate that even when the
optimal combination weight (in population), $\eta ^{\star }$, is
very different from a fixed, hypothetical combination weight, the standard
approach to testing for differences in forecasting accuracy does not result
in meaningful rejection rates; i.e., even though the hypothetical
combination weight is inferior to the optimal combination weight, the
resulting testing procedure does not reliably detect differences. This
finding holds across two particular loss functions and a host of different
values of $\eta ^{\star }$. In this section, we demonstrate that this
{phenomenon} is present in any class of forecast {combinations produced} in the
standard manner, i.e., in two steps. Thus we give, for the first time, a
truly generic explanation for the puzzle that is agnostic to the chosen
loss, and which is valid under standard regularity conditions.

To state this result, consider the setting where we are given \textit{a
known, i.e., hypothesised} combination weight $\eta _{T}^{\delta }$, which
we can always represent as 
\begin{equation*}
\eta _{T}^{\delta }:=\eta ^{\star }+\delta _{T},
\end{equation*}%
where $\eta ^{\star }$ {is as defined in (\ref{limits}), and }$\eta ^{\star
} $ and $\delta _{T}$ are individually unknown. Consider that the (possibly
random) sequence $\{\delta _{T}:T\geq 1\}$ is described by one of the
following cases: throughout, let $\delta \in \mathcal{E}\subset \mathbb{R}%
^{d_{\eta }}$ be {some non-zero} and bounded vector,

\begin{equation}
\delta _{T}\asymp 
\begin{cases}
\delta /T^{\xi }, & \xi \in \lbrack 0,1/4) \\ 
\delta /T^{\xi }, & \xi =1/4 \\ 
\delta /T^{\xi }, & \xi \in (1/4,\infty )%
\end{cases}%
.  \label{eq:delta}
\end{equation}%
The above class of sequences will allow us to evaluate the behavior of the
standard testing framework for accessing accuracy of different forecast
combination methods across a wide range of hypothesized combination weights $%
\eta _{T}^{\delta }$. In particular, the case $\xi =0$ yields fixed
alternatives, while $\xi =1/2$ yields the class of {canonical Pitman}
sequences. As we shall see, the behavior of the test depends crucially {on
the category within which} $\delta _{T}$ resides.

To this end, consider that our benchmark forecast is $Q_{\vartheta
^{b}}^{(T)}$ {with} $\vartheta ^{b}=(\eta _{T}^{\delta },\gamma ^{\star })$,
and we wish to test the null hypothesis that this benchmark forecast has 
\textit{no inferior forecast performance} relative to the alternative
forecast $Q_{\theta ^{a}}^{(T)}$ {with} $\theta ^{a}=(\eta ^{\star },\gamma
^{\star })$. The null hypothesis in \eqref{eq:null} then becomes 
\begin{equation*}
\text{H}_{0}:\;\mathbb{E}(L[Q_{(\eta _{T}^{\delta },\gamma ^{\star
})}^{(T)},Y_{T+1}])\leq \mathbb{E}(L[Q_{(\eta ^{\star },\gamma ^{\star
})}^{(T)},Y_{T+1}]).
\end{equation*}%
{The null hypothesis} $\text{H}_{0}$ {is then tested using the statistic }in %
\eqref{dp}, {where the} infeasible $\theta ^{a}$ is replaced {by the
feasible estimator,} $\tilde{\theta}_{R}=(\tilde{\eta}_{R}^{\prime },\tilde{%
\gamma}_{R}^{\prime })^{\prime }$,{\ and} the infeasible $\vartheta ^{b}$ by
its feasible counterpart $\vartheta _{R}^{\delta }=(\eta _{R}^{\delta },%
\tilde{\gamma}_{R}^{\prime })^{\prime }$, {where }$\eta _{R}^{\delta }$ {%
varies according to the neighbourhood of }$\eta ^{\star }$ {defined \ in %
\eqref{eq:delta}. }The rejection region for the test is then defined by the
corresponding rejection region 
\begin{equation}
W_{P}(\alpha )=\{D_{P}:D_{P}(\vartheta _{R}^{\delta },\tilde{\theta}%
_{R})>\Phi ^{-1}(1-\alpha )\},\quad \alpha \in (0,1),
\label{eqn:rejectionregion}
\end{equation}%
where $\Phi ^{-1}(\alpha )$ denotes the $\alpha $-quantile of the standard
normal distribution. The {following theorem describes} the behavior of the
resulting test.

\begin{theorem}
\label{lemm:2sla} Let Assumptions \ref{ass:parm}-\ref{ass:quad}, in Appendix %
\ref{app:A} be satisfied. \medskip

\noindent (i) If $\delta_T\asymp \delta/T^\xi$, with $\xi\in[0,1/4)$, and $%
|\delta^{\prime }\nabla_\eta\mathcal{L}(\eta^\star+\delta,\gamma^\star)|>0$,
then $\lim_P\mathrm{Pr}\{W_P(\alpha)\}=1$. \newline

\noindent (ii) If $\delta_T\asymp \delta/T^{1/4}$, then $\lim_P\mathrm{Pr}%
\{W_P(\alpha)\}>\alpha$ or $\lim_P\mathrm{Pr}\{W_P(\alpha)\}<\alpha$,
depending on $\delta$. \newline

\noindent (iii) If $\delta_T\asymp \delta/T^\xi$, with $\xi\in(1/4,\infty]$,
then $\lim_{T}\mathrm{Pr}\{W_P(\alpha)\}=0$ for all $\alpha\in(0,1)$. 
\end{theorem}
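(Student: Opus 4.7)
The plan is to reduce $\sqrt{P}\Delta_P(\vartheta_R^\delta, \tilde\theta_R)$ to its leading-order expansion via a second-order Taylor expansion in the shared $\eta$-coordinate, use the in-sample first-order condition defining $\tilde\eta_R$ to eliminate the nominally dominant linear term, and then carry out a rate analysis that isolates a deterministic quadratic-in-$\delta_T$ contribution as the unique $O(1)$ piece at the critical rate $\xi = 1/4$.

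Because the benchmark $\vartheta_R^\delta = (\eta_R^\delta, \tilde\gamma_R)$ and the alternative $\tilde\theta_R = (\tilde\eta_R, \tilde\gamma_R)$ share the plug-in $\tilde\gamma_R$, the loss difference $d_t = \ell_t(\eta_R^\delta, \tilde\gamma_R) - \ell_t(\tilde\eta_R, \tilde\gamma_R)$ depends on the parameters only through the $\eta$-coordinate. Writing $v_R := \tilde\eta_R - \eta^\star$ and $\Delta\eta := \delta_T - v_R$, a second-order Taylor expansion of $\eta \mapsto \ell_t(\eta, \tilde\gamma_R)$ around $\tilde\eta_R$ yields
\begin{equation*}
\Delta_P = \Delta\eta'\,\bar g_P(\tilde\theta_R) + \tfrac{1}{2}\,\Delta\eta'\,\bar H_P(\tilde\theta_R)\,\Delta\eta + r_P,
\end{equation*}
with $\bar g_P$ and $\bar H_P$ the out-of-sample averages of $\nabla_\eta\ell_t$ and $\nabla^2_{\eta\eta}\ell_t$, and $r_P$ a higher-order remainder controlled by Assumption~\ref{ass:quad}. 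Linearizing the in-sample FOC for $\tilde\eta_R$ in $\tilde\gamma_R - \gamma^\star$ gives $v_R = -H^{-1}\bar g_R^\star + o_p(R^{-1/2})$, where $H := \nabla^2_{\eta\eta}\mathcal{L}(\theta^\star)$ is positive definite by Assumption~\ref{ass:quad} and $\bar g_R^\star$ denotes the in-sample mean of $\nabla_\eta\ell_t(\theta^\star)$; a parallel linearization on the evaluation window produces $\bar g_P(\tilde\theta_R) = \bar g_P^\star - \bar g_R^\star + o_p(T^{-1/2})$. Standard CLTs applied separately on the asymptotically independent estimation and evaluation windows then deliver $v_R = O_p(T^{-1/2})$, $\bar g_P(\tilde\theta_R) = O_p(T^{-1/2})$, and $\bar H_P(\tilde\theta_R) \to H$ in probability.

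The decisive step is the rate comparison. The purely deterministic piece $\sqrt{P}\cdot\tfrac{1}{2}\delta_T'H\delta_T$ has exact order $T^{1/2-2\xi}$, whereas $\sqrt{P}\,\Delta\eta'\bar g_P(\tilde\theta_R)$ and all residual bilinear combinations of $v_R$, $\bar g_R^\star$, $\bar g_P^\star$, and $\delta_T$ are at most $O_p(T^{-\xi}) + O_p(T^{-1/2})$, each vanishing for any $\xi > 0$. Under Assumption~\ref{ass:quad}, $\hat\Omega_R$ converges in probability to a positive constant $\Omega$, so that $D_P \asymp T^{1/2-2\xi}$. In case (i), this rate diverges, and the sign of the limit is fixed positive by positive definiteness of $H$ together with, in the fixed-alternative subcase $\xi = 0$, the hypothesis $|\delta'\nabla_\eta\mathcal{L}(\eta^\star+\delta,\gamma^\star)|>0$ applied to the mean-value identity $\mathcal{L}(\eta^\star+\delta,\gamma^\star) - \mathcal{L}(\theta^\star) = \int_0^1 \delta'\,\nabla_\eta\mathcal{L}(\eta^\star + s\delta, \gamma^\star)\,\mathrm{d}s$, so $D_P \to +\infty$ and $\mathrm{Pr}\{W_P(\alpha)\}\to 1$. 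In case (iii), every contribution vanishes under $\sqrt{P}$-scaling, so $D_P \to 0$ and $\mathrm{Pr}\{W_P(\alpha)\}\to 0$. In the borderline case (ii), the limit equals the deterministic constant $\mu(\delta) := \tfrac{1}{2}(1+c)^{-1/2}\delta'H\delta$ divided by $\sqrt{\Omega}$, which falls on either side of $\Phi^{-1}(1-\alpha)$ as the magnitude of $\delta$ varies, producing both regimes asserted in (ii). The main obstacle is the rate bookkeeping: one must verify that no bilinear remainder in $v_R$, $\bar g_P^\star$, and $\delta_T$ survives at $O(1)$ under $\sqrt{P}$-scaling at the critical rate, which rests on the asymptotic independence of in-sample and out-of-sample scores guaranteed by the West-style evaluation scheme of Assumption~\ref{ass:samp}.
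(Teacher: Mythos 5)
Your route is essentially the paper's: a second-order expansion of the out-of-sample loss difference in the $\eta$-coordinate (the paper centers it at the benchmark point $\eta^\star_R=\eta^\star+\delta_T$ rather than at $\tilde\eta_R$, which is cosmetic), $\sqrt{T}$-rates for $\tilde\eta_R-\eta^\star$ and for the out-of-sample score, and then the rate comparison showing that the only deterministic term that can survive $\sqrt{P}$-scaling is $\tfrac{1}{2}\sqrt{P}\,\delta_T'\mathcal{M}_{\eta\eta}\delta_T\asymp T^{1/2-2\xi}$, which sorts the three regimes at $\xi<1/4$, $\xi=1/4$, $\xi>1/4$. In the interior case your treatment of (ii) is, if anything, slightly sharper than the paper's: you exhibit the exact limiting constant $\tfrac{1}{2}(1+c)^{-1/2}\delta'\mathcal{M}_{\eta\eta}\delta$, whereas the paper only sandwiches the centering between $\inf$ and $\sup$ Hessian quadratic forms over $\mathcal{E}$.

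There is, however, one genuine gap. You obtain $v_R=\tilde\eta_R-\eta^\star=O_p(T^{-1/2})$ by linearizing the in-sample first-order condition, $v_R=-H^{-1}\bar g_R^\star+o_p(R^{-1/2})$, which presupposes that the first-order condition holds exactly, i.e.\ that $\eta^\star$ is interior to $\mathcal{E}$. Theorem \ref{lemm:2sla} is stated under Assumptions \ref{ass:parm}--\ref{ass:quad} alone, and these are deliberately constructed (Cartesian-product parameter space, left/right derivatives, the Andrews-style machinery) to allow $\eta^\star$ on the boundary; the paper's own experiments include $\eta^\star\in\{0,1\}$, and interiority is only imposed later, in Corollary \ref{cor:quads}(ii)--(iii), where it is actually needed. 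The paper instead derives the $\sqrt{R}$-rate from the boundary-safe quadratic-expansion inequality of Lemma \ref{lem:quad_exp}, which never uses a vanishing gradient. Your rate bookkeeping would survive in the boundary case if you substitute that lemma for the FOC step, since Assumption \ref{ass:quad}(ii) still supplies $\nabla_\eta L_P(\theta^\star)/\sqrt{P}=O_p(1)$; but as written your argument covers only interior $\eta^\star$. Two further loosenesses are shared with the paper rather than specific to you, so I flag them only briefly: the claim $\hat{\Omega}_R\rightarrow_p\Omega>0$ does not follow from Assumption \ref{ass:quad} (under the drifting sequences the population loss-difference variance itself degenerates, which is precisely the paper's point), and in the $\xi=0$ subcase the second-order Taylor remainder is not controlled for a non-shrinking $\Delta\eta$ --- the clean repair, implicit in both your mean-value identity and the paper's argument, is that Assumption \ref{ass:cons} gives a fixed positive population gap $\mathcal{L}(\eta^\star+\delta,\gamma^\star)-\mathcal{L}(\theta^\star)>0$, so a uniform law of large numbers alone forces $\sqrt{P}\Delta_P\rightarrow+\infty$.
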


The above result implies that if two forecasts have combination weights that
are at least $O(T^{-1/4+\varepsilon })$ apart, for $\varepsilon >0$, then
the standard testing approach can distinguish between the forecasts.
Surprisingly, however, if the combination weights are at a distance of $%
CT^{-1/4}$, the test can be arbitrarily over- or under-sized depending on
the magnitude of the sequence $\{\delta _{T}:T\geq 1\}$. More surprisingly,
if two sets of combination weights are within a distance of $%
O(T^{-1/4-\varepsilon })$ from one another, e.g., a parametric neighbourhood
of width $O(T^{-1/2})$, then the test has no power to detect differences
between the combination forecasts. Note that this third result {also encompasses}
the case considered in the illustrative example in Section \ref%
{subsec:revisitingsmith}, in which $\eta _{T}^{\delta }=\eta ^{\star }$ --
i.e. the benchmark fixed weight (denoted by $\eta $ therein) {coincided with}
$\eta ^{\star }$-- and the test displayed zero empirical size (up to Monte
Carlo error).

As an example of {the phenomenon in Theorem \ref{lemm:2sla}(iii)}, consider that we have two competing
combination forecasts defined by different combination weight schemes such
that $\tilde{\eta}_{1}$ and $\tilde{\eta}_{2}$ have distinct asymptotic
distributions, but $R^{1/2}(\tilde{\eta}_{1}-\tilde{\eta}_{2})=O_{p}(1)$,
then the usual test of the null hypothesis of \textit{no inferior predictive
accuracy} will only detect differences between the two forecasts on very
rare occasions, and will detect no statistically significant differences
between the forecasts with probability converging to one.\footnote{%
We recall that under the maintained assumption on $R,P,T$, we have that $%
R\asymp P\asymp T$, so that $O(R^{1/2}/T^{1/2})=O(1)$.}

Practically speaking, Theorem \ref{lemm:2sla} demonstrates that even if the
benchmark forecast, e.g., the equally weighted forecast, is far away from
the optimally weighted combination forecast, then the standard testing
approach is unlikely to reject the inadequacy of this benchmark (with
probability converging to one). In particular, Theorem \ref{lemm:2sla}(i)
demonstrates that the standard test will asymptotically reject the null only
when $\sqrt{T}(\eta^\delta_T-\tilde\eta_R)$ diverges faster than $O(T^{1/4})$,
i.e., when $\xi<1/4$. Therefore, for all intents and purposes, the standard
approach to testing for differences in combination forecasts cannot be
trusted to deliver reliable conclusions in the majority of empirical
situations where it is applied.

\begin{remark}
\label{rem:2} Part (ii) of Theorem \ref{lemm:2sla} is non-standard: in
classical hypothesis testing, sequences of alternatives do not in general
yield a consistent test, but they display at least \textit{some power}
against such a hypothesis. The lack of power in this case is entirely a
consequence of the two-step nature with which the combinations are produced.
\end{remark}

\begin{remark}
It is possible to obtain results similar to those described above in the
original framework of \cite{West1996}. However, the results are not as
intuitive as those presented above, and are much more cumbersome to dissect
and interpret. Therefore, we adopt a fixed-windows estimation scheme to {err}
on the side of simplicity and interpretability instead of technicality.
\end{remark}


\subsection{The Root of the Puzzle}

\label{sec:root}

Theorem \ref{lemm:2sla} demonstrates that when comparing between forecast
combinations, tests of forecast accuracy behave in non-standard ways.
However, it is important to understand the mechanism causing this behavior.
Recall that standard forecast combinations are produced in two steps: first,
we estimate the unknown model parameters $\gamma$ via $\tilde\gamma_R$, then
the combination weights are estimated via $\tilde\eta_R$. In this section,
we show that the two-step nature by which the forecast combination $%
Q^{(T)}_\theta$ is produced results in an average loss difference whose
limiting distribution is non-standard under a large class of combination
weights.

To state the asymptotic distribution of the usual loss-difference test
statistic $\Delta_P(\vartheta^b,\theta^a)$, we require the following
definitions. Let $\mathcal{L}(\theta):=\operatornamewithlimits{plim\,}%
_{P\rightarrow\infty} L_P(\theta)/P$, $\mathcal{M}_{\eta\eta}=
\nabla_{\eta\eta}\mathcal{L}(\eta^\star,\gamma^\star)$, $\mathcal{M}%
_{\gamma\eta}= \nabla_{\gamma\eta}\mathcal{L}(\eta^\star,\gamma^\star)$, and
let $V_{P,R}:=-\mathcal{M}_{\eta\eta}^{-1/2}\{\nabla_\eta
L_P(\eta^\star,\gamma^\star)/P+\mathcal{M}_{\eta\gamma}(\tilde\gamma_R-%
\gamma^\star)\}$. We note that the above exist under Assumptions \ref%
{ass:parm}-\ref{ass:quad} in Appendix \ref{app:A}, and that $V_{P,R}=O_p(1/%
\sqrt{P})$. 

\begin{lemma}
\label{lem:quads} Let Assumptions \ref{ass:parm}-\ref{ass:quad} in Appendix %
\ref{app:A} be satisfied. Recall that $\eta^\delta_T=\eta^\star+\delta_T$.
If $\delta _{T}= \delta /T^{\xi }$, with $\xi \in (0,\infty]$, then, for $%
\vartheta _{R}^{\delta }=(\eta _{T}^{\delta },\tilde{\gamma}_{R})$, 
\begin{equation}
\Delta _{P}(\vartheta _{R}^{\delta },\tilde{\theta}_{R})=\frac{1}{2}%
\|V_{P,R}\|^2-\frac{1}{2}\|\mathcal{J}^{1/2}\left(\eta^\delta_T-\eta^\star%
\right)-\mathcal{J}^{1/2}\left(\tilde\eta_R-\eta^\star\right)-V_{P,R}%
\|^2+o_{p}(\Vert (\eta _{T}^{\delta }-\eta ^{\star })\Vert ^{2}\vee
\|\tilde\eta_R-\eta^\star\|).  \label{p1}
\end{equation}%
%
%
\end{lemma}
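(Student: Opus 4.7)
The plan is to perform a second-order Taylor expansion of $L_P(\cdot)/P$ about the common pseudo-true value $\theta^\star = (\eta^\star,\gamma^\star)$, evaluate it separately at the two arguments $\vartheta_R^\delta = (\eta_T^\delta, \tilde\gamma_R)$ and $\tilde\theta_R = (\tilde\eta_R, \tilde\gamma_R)$, and take the difference. Because these two arguments coincide in their $\gamma$-slot, every contribution that depends only on $\tilde\gamma_R - \gamma^\star$---the constant $L_P(\theta^\star)/P$, the linear term $[\nabla_\gamma L_P(\theta^\star)/P]^\top(\tilde\gamma_R - \gamma^\star)$, and the pure $\gamma$-Hessian $\tfrac{1}{2}(\tilde\gamma_R - \gamma^\star)^\top \mathcal{M}_{\gamma\gamma}(\tilde\gamma_R - \gamma^\star)$---cancels in the subtraction. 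What survives reduces to three residual pieces: the gradient contribution $[\nabla_\eta L_P(\theta^\star)/P]^\top(\eta_T^\delta - \tilde\eta_R)$, the quadratic-in-$\eta$ contribution $\tfrac{1}{2}\{(\eta_T^\delta-\eta^\star)^\top \mathcal{M}_{\eta\eta}(\eta_T^\delta-\eta^\star) - (\tilde\eta_R-\eta^\star)^\top \mathcal{M}_{\eta\eta}(\tilde\eta_R-\eta^\star)\}$, and a single $\eta\gamma$-cross-Hessian piece $(\eta_T^\delta - \tilde\eta_R)^\top \mathcal{M}_{\eta\gamma}(\tilde\gamma_R - \gamma^\star)$.

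Next I would inject the definition of $V_{P,R}$. Its rearrangement $\nabla_\eta L_P(\theta^\star)/P = -\mathcal{J}^{1/2} V_{P,R} - \mathcal{M}_{\eta\gamma}(\tilde\gamma_R - \gamma^\star)$, with $\mathcal{J}=\mathcal{M}_{\eta\eta}$, substituted into the gradient contribution produces a genuine $V_{P,R}$-piece together with an extra cross-term $-(\eta_T^\delta - \tilde\eta_R)^\top \mathcal{M}_{\eta\gamma}(\tilde\gamma_R - \gamma^\star)$. This extra cross-term is cancelled exactly by the residual $\eta\gamma$-Hessian piece identified above, using symmetry of the Hessian ($\mathcal{M}_{\eta\gamma}^\top = \mathcal{M}_{\gamma\eta}$). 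Consequently the nuisance displacement $\tilde\gamma_R - \gamma^\star$ disappears from the leading-order expression entirely, and only $V_{P,R}$, $\mathcal{J}^{1/2}(\eta_T^\delta - \eta^\star)$, and $\mathcal{J}^{1/2}(\tilde\eta_R - \eta^\star)$ remain in the main term.

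From this point the remaining claim is a purely algebraic identity: the surviving linear-plus-quadratic form in the two standardized increments is rewritten as a difference of squared norms by completing the square, matching the form on the right-hand side of the lemma. Finally I collect and bound the Taylor-type error terms, which arise from three sources: the cubic tail of the Taylor expansion of $L_P/P$ (of order $\|\eta_T^\delta-\eta^\star\|^3 + \|\tilde\eta_R - \eta^\star\|^3 + \|\tilde\gamma_R - \gamma^\star\|^3$); the Hessian-consistency error when replacing the finite-sample Hessians $\nabla^2_{\eta\eta}L_P(\theta^\star)/P$ and $\nabla^2_{\eta\gamma}L_P(\theta^\star)/P$ by their plims $\mathcal{M}_{\eta\eta}$ and $\mathcal{M}_{\eta\gamma}$; and, where the in-sample first-order condition for $\tilde\eta_R$ is used to simplify $\tilde\eta_R$-dependent pieces, its linearization error. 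Each of these is controlled under Assumptions \ref{ass:parm}--\ref{ass:quad} of Appendix \ref{app:A} via smoothness and uniform local convergence of the loss and its derivatives.

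The main obstacle is the remainder bookkeeping in the slowly-decaying regime $\xi \in (0,1/4]$, where $\|\eta_T^\delta - \eta^\star\| \asymp T^{-\xi}$ is the dominant scale and the allowed tolerance $o_p(\|\eta_T^\delta-\eta^\star\|^2 \vee \|\tilde\eta_R-\eta^\star\|)$ is only of order $T^{-2\xi}$. There the cubic-tail contribution of order $T^{-3\xi}$ and the mixed cross-term of order $T^{-\xi-1/2}$ both require the inequalities $3\xi > 2\xi$ and $\xi + 1/2 > 2\xi$, which hold for all $\xi<1/2$, while the Hessian-consistency error $o_p(1)\cdot T^{-2\xi}$ requires only local uniform convergence of the second derivatives. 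In the complementary regime $\xi > 1/4$ the tolerance sharpens to $o_p(T^{-1/2})$ and every remainder is either $O_p(T^{-1})$ or smaller, so the bookkeeping is immediate.
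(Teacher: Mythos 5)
Your proposal is correct and takes essentially the same route as the paper: the paper applies the quadratic expansion already established in the proof of Lemma \ref{lem:quad_exp} (its equation \eqref{eq:new_expans}, in which the $\eta$-gradient at $(\eta^\star,\tilde\gamma_R)$ is expanded in $\gamma$ so that the cross term $\mathcal{M}_{\eta\gamma}(\tilde\gamma_R-\gamma^\star)$ is absorbed into $V_{P,R}$) at both $\eta^\delta_T$ and $\tilde\eta_R$, subtracts, completes the square by adding and subtracting $\tfrac{1}{2}\|V_{P,R}\|^2$, and controls the remainder via the bound \eqref{eq:remainder}, exactly matching your joint expansion about $\theta^\star$ with explicit cancellation of the pure-$\gamma$ and cross terms. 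Your regime-by-regime remainder bookkeeping in $\xi$ is, if anything, more explicit than the paper's (which subsumes it in the single $o_p(\|\eta^\delta_T-\eta^\star\|^2\vee\|\tilde\eta_R-\eta^\star\|)$ bound), and the only superfluous element is your appeal to the in-sample first-order condition for $\tilde\eta_R$: the expansion is a pure Taylor identity, and $\tilde\eta_R$ enters only through its $O_p(R^{-1/2})$ rate from Lemma \ref{lem:quad_exp}(i).
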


The expansion in Lemma \ref{lem:quads} clarifies the mechanism behind the
behavior exhibited in Theorem \ref{lemm:2sla}. Recall that the behavior of
the standard test is driven by the behavior of $\sqrt{P}\cdot \Delta
_{P}(\vartheta _{R}^{\delta },\tilde{\theta}_{R})$, see equation \eqref{dp}.
However, Lemma \ref{lem:quads} makes clear that if the sequence $%
\{\delta_T:T\ge1\}$ goes to zero fast enough, the limit distribution of $%
\sqrt{P}\cdot\Delta _{P}(\vartheta _{R}^{\delta },\tilde{\theta}_{R})$ is
degenerate. That is, since $V_{P,R}=O(1/\sqrt{P})$ and $(\tilde\eta_R-\eta^%
\star)=O_p(1/\sqrt{P})$ under Assumptions \ref{ass:parm}-\ref{ass:quad},
scaling $\Delta _{P}(\vartheta _{R}^{\delta },\tilde{\theta}_{R})$ by $\sqrt{%
P}$ results in a degenerate test statistic unless 
\begin{equation*}
\operatornamewithlimits{plim\,}_{P\rightarrow\infty}\sqrt{P}%
\|(\eta^\delta_T-\eta^\star)\|^2=\operatornamewithlimits{plim\,}%
_{P\rightarrow\infty}\sqrt{P}\| \delta_T \|^2>0.
\end{equation*}

When $\delta_T=\delta/T^\xi$, $\xi\in(1/4,\infty]$, we have that $%
\operatornamewithlimits{plim\,}_P\sqrt{P}\|(\eta^\delta_T-\eta^\star)\|^2=0$%
, which yields the result in part (iii) of Theorem \ref{lemm:2sla}. If
instead we have $\delta _{T}= T^{-1/4}$, the behavior of $\sqrt{P}\Delta
_{P}(\vartheta _{R}^{\delta },\tilde{\theta}_{R})$ is driven by the
magnitude of 
\begin{equation*}
\frac{1}{2}P^{1/4}(\eta ^{\star }-\eta _{T}^{\delta })^{\prime }\mathcal{M}%
_{\eta \eta }P^{1/4}(\eta ^{\star }-\eta _{T}^{\delta })=\delta ^{\prime }%
\mathcal{M}_{\eta \eta }\delta +o_{p}(1),
\end{equation*}%
which yields the second result in Theorem \ref{lemm:2sla}. The above term
also drives the power of the statistic $D_P(\vartheta_{R},\tilde\theta_R)$
in the case where $\delta_T\in[0,1/4)$ (i.e., part (i) of Theorem \ref%
{lemm:2sla}).\footnote{%
We note, however, that Lemma \ref{lemm:2sla} is not valid, as stated, in the
case where $\delta_T\in[0,1/4)$ since the remainder term in \eqref{p1} is no
longer negligible, since $\sqrt{P}\|\eta^\delta_T-\eta^\star\|^2$ will
diverge. Nonetheless, a similar argument to that used to prove Lemma \ref%
{lem:quads} can be used to deal with this case.}

To obtain the limit distribution of the out-of-sample average loss
difference under the null hypothesis in \eqref{eq:null}, based on the
benchmark forecast combination $Q^{(T)}_{\vartheta^{(b)}}$ with $%
\vartheta^{(b)}=(\eta^\delta_T,\gamma^\star)$, and with $\eta^\delta_T$ as
in \eqref{eq:delta}, we require a few additional definitions. Let $%
X_{P}:=\nabla_\eta L_P(\eta^\star,\gamma^\star)/P$, $Z_{R,\gamma}:=(\tilde%
\gamma_P-\gamma^\star)$, so that we can write $V_{P,R}=-\mathcal{M}%
_{\eta\eta}^{-1/2}(X_P+\mathcal{M}_{\eta\gamma}Z_{R,\gamma})$. Under
Assumptions \ref{ass:parm}-\ref{ass:quad} in Appendix \ref{app:A}, we have
that $\sqrt{P} X_{P}\Rightarrow X\sim N(0,\Sigma_X)$, where $\Sigma_X:=\lim_P%
\text{Var}\{\nabla_\eta L_P(\eta^\star,\gamma^\star)/\sqrt{P}\}$ and $\sqrt{R%
} Z_{R,\gamma}\Rightarrow Z_\gamma\sim N(0,\Sigma_\gamma)$, and where 
\begin{equation*}
\Sigma_Z:=\lim_{R\rightarrow\infty}\text{Var}[\{\left[\nabla_{\gamma\gamma}%
\mathcal{L}(\gamma^\star_1)\right]^{-1}\nabla_{\gamma_1} L_R(\gamma^\star_1)/%
\sqrt{R},\dots,\left[\nabla_{\gamma\gamma}\mathcal{L}(\gamma^\star_1)\right]%
^{-1}\nabla_{\gamma_K} L_R(\gamma^\star_K)/\sqrt{R}\}^{\prime }].
\end{equation*}%
Recall that $c:=\lim_T R/P$, with $0<c<\infty$, and by Assumption \ref%
{ass:dist}, for some matrix $Q_V$, we have that $\mathcal{V}:=(\mathcal{V}%
_1^{\prime },\mathcal{V}_2^{\prime })^{\prime }=\mathcal{L}_{\infty}(c\sqrt{R%
}V_{R,R},\sqrt{P}V_{P,R})=N(0,Q_V)$, for some matrix $Q_V$, is the limit
(joint) law of the terms $c\sqrt{R}V_{R,R}$ and $\sqrt{P}V_{P,R}$.

\begin{corollary}
\label{cor:quads} Under Assumptions \ref{ass:parm}-\ref{ass:dist}, the
statistic $P\cdot \Delta_{P}(\vartheta^\delta_R,\tilde\theta_R)$ has the
following behavior.

\noindent (i) If $\delta_T\asymp \delta/T^\xi$, with $\xi\in(0,1/2)$, then $%
P\cdot \Delta_{P}(\vartheta^\delta_R,\tilde\theta_R)\rightarrow+\infty, $
with probability converging to one.

\noindent (ii) If $\delta_T= \delta/T^{1/2}$, and $\theta^\star\in\mathrm{Int%
}(\Theta)$, then 
\begin{equation*}
P\cdot \Delta_{P}(\vartheta^\delta_R,\tilde\theta_R)\Rightarrow \frac{1}{2}%
\|X+c^{-1}\mathcal{M}_{\eta\gamma}Z_\gamma\|_{\mathcal{M}_{\eta\eta}^{-1}}^2-%
\frac{1}{2}\|\left\{1/(1+c)\right\}^{1/2}\delta-(\mathcal{V}_1-\mathcal{V}%
_2)\|^2. 
\end{equation*}

\noindent (iii) If $\delta_T= \delta/T^\xi$, with $\xi\in(1/2,\infty]$, and
if $\theta^\star\in\mathrm{Int}(\Theta)$, then 
\begin{equation*}
P\cdot \Delta_{P}(\vartheta^\delta_R,\tilde\theta_R)\Rightarrow \frac{1}{2}%
\|X+c^{-1}\mathcal{M}_{\eta\gamma}Z_\gamma\|_{\mathcal{M}_{\eta\eta}^{-1}}^2-%
\frac{1}{2}\|\mathcal{V}_1-\mathcal{V}_2\|^2. 
\end{equation*}
\end{corollary}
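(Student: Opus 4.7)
The plan is to start from the quadratic expansion in Lemma \ref{lem:quads}, multiply by $P$, and then analyse the three cases of $\xi$ separately. A preparatory step is to Taylor expand the first-order condition $\nabla_\eta L_R(\tilde\eta_R,\tilde\gamma_R)=0$ around $(\eta^\star,\gamma^\star)$: under Assumptions \ref{ass:parm}--\ref{ass:quad} this yields the key identity $\mathcal{J}^{1/2}(\tilde\eta_R-\eta^\star)=V_{R,R}+o_p(R^{-1/2})$. Substituting into Lemma \ref{lem:quads} rewrites $P\Delta_P(\vartheta^\delta_R,\tilde\theta_R)$ as a quadratic form in three objects---the deterministic drift $\sqrt{P}\mathcal{J}^{1/2}(\eta^\delta_T-\eta^\star)$, the in-sample stochastic piece $\sqrt{P}V_{R,R}$, and the out-of-sample stochastic piece $\sqrt{P}V_{P,R}$---plus a remainder of order $o_p(P r_T)$ with $r_T=\|\eta^\delta_T-\eta^\star\|^2\vee\|\tilde\eta_R-\eta^\star\|$.

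Since $P/T\to 1/(1+c)$ by Assumption \ref{ass:samp}, the drift $\sqrt{P}\mathcal{J}^{1/2}(\eta^\delta_T-\eta^\star)=P^{1/2}T^{-\xi}\mathcal{J}^{1/2}\delta$ exhibits three distinct behaviours: it diverges at rate $P^{1/2-\xi}$ when $\xi\in(0,1/2)$, converges to $\{1/(1+c)\}^{1/2}\mathcal{J}^{1/2}\delta$ when $\xi=1/2$, and vanishes when $\xi>1/2$. In parallel, Assumption \ref{ass:dist} supplies the joint weak limit $(c\sqrt{R}V_{R,R},\sqrt{P}V_{P,R})\Rightarrow(\mathcal{V}_1,\mathcal{V}_2)\sim N(0,Q_V)$. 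The marginal law of $\mathcal{V}_2$ is pinned down by the decomposition $\sqrt{P}V_{P,R}=-\mathcal{M}_{\eta\eta}^{-1/2}\{\sqrt{P}X_P+\sqrt{P/R}\,\mathcal{M}_{\eta\gamma}\sqrt{R}Z_{R,\gamma}\}$ combined with the CLT-type limits $\sqrt{P}X_P\Rightarrow X$ and $\sqrt{R}Z_{R,\gamma}\Rightarrow Z_\gamma$ from Assumptions \ref{ass:parm}--\ref{ass:quad}, producing the representation $\tfrac{1}{2}\|\mathcal{V}_2\|^2=\tfrac{1}{2}\|X+c^{-1}\mathcal{M}_{\eta\gamma}Z_\gamma\|^2_{\mathcal{M}_{\eta\eta}^{-1}}$ that constitutes the first squared-norm in the limits of (ii) and (iii).

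Parts (ii) and (iii) then follow by transporting these ingredients through the quadratic form via the continuous mapping theorem together with Slutsky's lemma. In case (ii) the finite drift contributes the additive shift $\{1/(1+c)\}^{1/2}\delta$ inside the second squared-norm, while in case (iii) the drift vanishes and the limit reduces to $\tfrac{1}{2}\|\mathcal{V}_2\|^2-\tfrac{1}{2}\|\mathcal{V}_1-\mathcal{V}_2\|^2$. For part (i), the leading term $\tfrac{P}{2}\|\mathcal{J}^{1/2}(\eta^\delta_T-\eta^\star)\|^2\asymp T^{1-2\xi}$ is positive and diverges, dominating both the $O_p(1)$ stochastic pieces and the $O_p(P^{1/2-\xi})$ cross term, which forces $P\Delta_P\to+\infty$ in probability.

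The main obstacle is guaranteeing that the Lemma's remainder $o_p(P r_T)=o_p(T^{1-2\xi}\vee\sqrt{P})$ does not swamp the signal in part (i) when $\xi\in[1/4,1/2)$, where the nominal remainder bound can a priori exceed $T^{1-2\xi}$. This is resolved by bypassing Lemma \ref{lem:quads} in that subcase and applying a direct uniform law of large numbers implicit in Assumption \ref{ass:quad}: $\Delta_P=\mathcal{L}(\eta^\delta_T,\gamma^\star)-\mathcal{L}(\eta^\star,\gamma^\star)+o_p(P^{-1/2})=\tfrac{1}{2}\delta_T^\top\mathcal{M}_{\eta\eta}\delta_T+o_p(\|\delta_T\|^2)$, which delivers $P\Delta_P\asymp T^{1-2\xi}\to+\infty$ unconditionally. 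A secondary, purely notational step is to align the $c\sqrt{R}$-normalisation of Assumption \ref{ass:dist} with the $\sqrt{P}$-scaling used throughout the expansion, which is absorbed into the limit covariance $Q_V$ via $\sqrt{R}/\sqrt{P}\to c^{-1/2}$.
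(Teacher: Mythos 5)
Your argument for parts (ii) and (iii) is, step for step, the paper's own proof: multiply the expansion \eqref{eq:delta_exp} from Lemma \ref{lem:quads} by $P$; linearize the first-order condition under $\theta^\star\in\mathrm{Int}(\Theta)$ to get $\mathcal{J}^{1/2}(\tilde\eta_R-\eta^\star)=V_{R,R}+o_p(R^{-1/2})$, so that $\sqrt{P}\mathcal{J}^{1/2}(\tilde\eta_R-\eta^\star)-\sqrt{P}V_{P,R}$ converges jointly with $\sqrt{P}V_{P,R}$ to $(\mathcal{V}_1-\mathcal{V}_2,\mathcal{V}_2)$ via Assumption \ref{ass:dist}; compute the drift $\sqrt{P}(\eta^\delta_T-\eta^\star)\rightarrow\{1/(1+c)\}^{1/2}\delta$ at $\xi=1/2$ (zero for $\xi>1/2$, which is the paper's ``case (ii) with $\delta=0$''); identify $\tfrac{1}{2}\|\mathcal{V}_2\|^2$ with $\tfrac{1}{2}\|X+c^{-1}\mathcal{M}_{\eta\gamma}Z_\gamma\|^2_{\mathcal{M}_{\eta\eta}^{-1}}$; and conclude by the continuous mapping theorem. (Note also that part (i) does not need your preparatory FOC step, which is just as well since interiority is not assumed there; $\sqrt{P}(\tilde\eta_R-\eta^\star)=O_p(1)$ from Lemma \ref{lem:quad_exp}(i) suffices, exactly as in the paper.)

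The genuine problem is your patch for part (i) on $\xi\in(1/4,1/2)$. You are right to flag the subtlety --- the remainder as stated in Lemma \ref{lem:quads}, $o_p(\|\eta^\delta_T-\eta^\star\|^2\vee\|\tilde\eta_R-\eta^\star\|)$, scales under $P$ to $o_p(T^{1-2\xi}\vee\sqrt{P})$, which is not nominally dominated by the drift $T^{1-2\xi}$ once $\xi>1/4$, a point the paper's terse case-(i) argument glosses over --- but your repair fails on its own terms. First, the claimed approximation $\Delta_P=\mathcal{L}(\eta^\delta_T,\gamma^\star)-\mathcal{L}(\eta^\star,\gamma^\star)+o_p(P^{-1/2})$ is not delivered by the assumptions: Assumption \ref{ass:cons}(i) gives only a uniform $o_p(1)$ rate for $L_P(\theta)/P-\mathcal{L}(\theta)$, and Assumption \ref{ass:quad}(iv) controls Hessians, not the criterion at a $P^{-1/2}$ uniform rate. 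Second, and decisively, even granting $o_p(P^{-1/2})$: after multiplying by $P$ the error is $o_p(P^{1/2})$, while your signal is $P\|\delta_T\|^2\asymp T^{1-2\xi}=o(P^{1/2})$ precisely when $\xi>1/4$ --- so the patch suffers from the very defect you set out to cure; you would need the stochastic error to be $o_p(\|\delta_T\|^2)$, which no generic ULLN provides. The correct fix is already inside the paper: the proof of Lemma \ref{lem:quads} establishes the sharper bound (see \eqref{eq:remainder} and the final display of that proof) $R_P\le o_p\{1+\sqrt{P}\|\delta_T\|+P\|\delta_T\|^2+\|\sqrt{P}(\tilde\theta_R-\theta^\star)\|^2\}$, and since $\|\sqrt{P}(\tilde\theta_R-\theta^\star)\|=O_p(1)$ under Assumptions \ref{ass:bound} and \ref{ass:samp}, this remainder is $o_p(T^{1-2\xi})$ for \emph{every} $\xi\in(0,1/2)$. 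The positive-definite drift $\tfrac{1}{2}P\,\delta_T'\mathcal{M}_{\eta\eta}\delta_T\asymp T^{1-2\xi}$ then dominates the $O_p(\sqrt{P}\|\delta_T\|)=O_p(T^{1/2-\xi})$ cross term, the $O_p(1)$ stochastic pieces, and the remainder, with no case split at $\xi=1/4$ required. Replace your ULLN detour with an appeal to that in-proof bound and part (i) is complete.
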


When $\delta_T\asymp \delta/T^\xi$ and $\xi\in(0,1/2)$, the result of Corollary \ref
{cor:quads}(i) demonstrates that under the null hypothesis in \eqref{eq:null} with
benchmark forecast combination $Q_{\vartheta ^{b}}^{(T)}$, $\vartheta
_{}^{(b)}=(\eta _{T}^{\delta },{\gamma}^{\star})$, and alternative forecast combination $
Q_{\theta ^{a}}^{(T)}$, with $\theta ^{a}=\theta^\star$, the test statistic $%
P\cdot \Delta _{P}(\vartheta _{R}^{\delta },\tilde{\theta}_{R})$, where $%
\vartheta _{R}^{\delta }=(\eta _{T}^{\delta },\tilde{\gamma}_{R})$,
diverges. Hence, if $\eta^\star$ and $\eta^\delta_T$ are sufficiently different, we can accurately learn differences between competing forecast combination methods. In contrast, when $\delta_T\asymp \delta/T^\xi$
, with $\xi\in[1/2,\infty]$, Corollary \ref{cor:quads}(ii)-(iii)
demonstrates that the asymptotic distribution of $P\cdot \Delta
_{P}(\vartheta _{R}^{\delta },\tilde{\theta}_{R})$ is non-standard.

In the regime where $\delta_T\asymp \delta/T^\xi$, with $\xi\in[1/2,\infty]$%
, the test statistic $P\cdot \Delta _{P}(\vartheta _{R}^{\delta },\tilde{%
\theta}_{R})$ {converges} in distribution to a random variable with two
components. {The first component is a generalized}
chi-squared random variable, which does not admit a closed-form formula for
its density or distribution function.\footnote{%
Since the quadratic form $\Vert X+\mathcal{M}_{\eta \gamma }Z_\gamma\Vert _{%
\mathcal{M}_{\eta \eta }^{-1}}^2$ cannot be re-written as a quadratic form 
with an idempotent weighting matrix, the distribution is not chi-squared.}
{The second component is itself possibly comprised of two components:} so long as $\xi\ge1/2$, the second term depends on the difference of
two mean-zero but correlated normal {random variables}, which captures the behavior
of (a scaled version of) the out-of-sample loss difference due to
differences in the combination weights, i.e., the $\eta$-components, and a centering term that captures the difference between $\eta^\delta_T$ and $\eta^\star$.\footnote{%
	This term results from taking a {second-order} Taylor expansion of the loss
	difference, and grouping terms appropriately.}
 In the regime where $\xi=1/2$, an additional term is present that captures the fact that $\sqrt{P}(\eta^\delta_T-\eta_\star)=(\sqrt{P/T})\sqrt{T}(\eta^\delta_T-\eta^\star)=\sqrt{P/(P+R)}\delta$, which, since $c=\lim_T R/P$ converges to $(1/(1+c))^{1/2}\delta$. 

However, it is important to note that since the first {component} of the asymptotic distribution is a generalized chi-squared random variable, even if the
second {component were} not present it would be infeasible to obtain
closed-form quantiles for the null distribution of the test statistic. In
addition, it is important to realise that the distribution of this test
statistic depends on the loss used in the analysis, the choice of
constituent models, and the specific combination function chosen. As such,
there is no{ hope for a single set of generally applicable critical values}, and any
simulated critical values will need to be application-specific {as a consequence.}

While it is not feasible to obtain closed-form quantiles for the null
distribution of the test statistic, since the second component is always
negative, it is possible to deduce a conservative test that uses just the
critical values of the generalized chi-distribution. That is, since 
\begin{equation*}
\frac{1}{2}\|X+c^{-1}\mathcal{M}_{\eta\gamma}Z_\gamma\|_{\mathcal{M}%
_{\eta\eta}^{-1}}^2-\frac{1}{2}\|\mathcal{V}_1-\mathcal{V}_2\|^2\le \frac{1}{%
2}\|X+c^{-1}\mathcal{M}_{\eta\gamma}Z_\gamma\|_{\mathcal{M}%
_{\eta\eta}^{-1}}^2 
\end{equation*}%
we can use the quantiles of the generalized chi-distribution to deduce a
conservative, but feasible, test of the the null that a benchmark forecast $%
Q^{(T)}_{\vartheta^b}$, with $\vartheta^{(b)}=(\eta^\delta_T,\gamma^\star)$,
is not inferior to an alternative forecast $Q^{(T)}_{\theta^a}$, with $%
\theta^a=\theta^\star$. Such a test will be asymptotically conservative in general, but
will not be too conservative  so long as the differences between $\sqrt{R}%
V_{R,R}$ and $\sqrt{P}V_{P,R}$ are small.

To implement such a test, for a {given} benchmark forecast $%
Q_{\vartheta^{(b)}}^{(T)}$ based on known combination weight $\eta^\delta_T$%
, we must construct a feasible critical value from the generalized
chi-distribution $\frac{1}{2}\|X+c^{-1}\mathcal{M}_{\eta\gamma}Z_\gamma\|_{%
\mathcal{M}_{\eta\eta}^{-1}}^2$ via simulation. In particular, we start out
by first simulating $h=1,\dots,H$ realisations for the random variables $%
X^{(h)},Z^{(h)}$ from normal distributions where $\Sigma_X,\Sigma_Z$ are
replaced with consistent estimators based on $\eta^\delta_R,\tilde\gamma_R$,
and then, for each $h=1,\dots,H$, we form the statistic 
\begin{equation*}
\Delta^{(h)}:=\frac{1}{2}\|X^{(h)}+(P/R)^{1/2}\widehat{\mathcal{M}}_{\eta
\gamma}Z^{(h)^{\prime }}\|^2_{\widehat{\mathcal{M}}_{\eta\eta}^{-1}},
\end{equation*}%
where $\widehat{\mathcal{M}}_{\eta\eta}$ and $\widehat{\mathcal{M}}%
_{\gamma\eta}$ are the usual sample estimate counterparts of the matrices $%
\mathcal{M}_{\eta\eta}$ and $\mathcal{M}_{\gamma\eta}$ and calculated at $%
\eta^\delta_R,\tilde\gamma_R$. Sorting $\{\Delta^{(h)}:h=1,\dots,H\}$, we
can obtain an $\alpha$-level critical value by choosing the $%
\lceil(1-\alpha)H \rceil$-th smallest value, and rejecting the null when the
observed value of the statistic $P\cdot
\Delta_{P}(\vartheta^\delta_R,\tilde\theta_R)$ exceeds this value.

Denoting the above simulated critical value by $\widehat{\mathrm{cv}}%
_{\lceil(1-\alpha)H\rceil}$, we can define a test of the null hypothesis of
no inferior predictive accuracy of the benchmark model $Q^{(T)}_{\eta^%
\delta_T,\gamma^\star}$ against the alternative model $Q^{(T)}_{\theta^\star}
$ that is estimated in two-steps, via the corresponding rejection region 
\begin{equation}
{W}^{2s}_{P}(\alpha )=\{\Delta_{P}:P\cdot\Delta_{P}(\vartheta _{R}^{\delta },%
\tilde{\theta}_{R})>\widehat{{\mathrm{cv}}}_{\lceil(1-\alpha)H\rceil}\},%
\quad \alpha \in (0,1).  \label{eqn:2srejectionregion}
\end{equation}%
Unlike the usual test, based on $W_{P}(\alpha)$, the test based on the
rejection region ${W}^{2s}_{P}(\alpha )$ explicitly accounts for the
two-step nature by which $P\cdot\Delta_{P}(\vartheta _{R}^{\delta },\tilde{%
\theta}_{R})$ is constructed. As we have already seen, failure to account
for the two-step nature of the forecast combinations does not deliver a test
with appropriate size control. Conversely, since a test based on ${W}%
^{2s}_{P}(\alpha )$ accounts for the two-step nature of the estimation, the
test should deliver appropriate size control.

We note that, while it is feasible to use the above simulation method to
construct an appropriate critical value for the testing differences between
competing forecast combinations, we do not necessarily advocate for this
approach in general. In particular, while the above procedure would deliver
an appropriately sized test, it is unclear if the above test is the most
powerful approach, and additional research is necessary to determine this.
In addition, as we elaborate on in the following Section \ref{subsec:avoid},
in certain cases there is a simpler alternative to {adopting} the above simulated
critical value that allows us to entirely avoid the forecast combination
puzzle.

\subsection{Revisiting the \protect\cite{Smith2009} Two-Step Combinations}

{Meanwhile, in this section}, we return to the example setting discussed in \cite%
{Smith2009} and demonstrate {numerically} that the simulated critical {value}
suggested in Section \ref{sec:root} delivers a testing procedure with
approximately correct size under the null hypothesis of no inferior forecast
accuracy, and also has higher power than the standard approach under the
alternative.

To demonstrate {the empirical size of the test}, we return to the example in Section \ref{sec:motiv}
and consider values of $\phi_1,\phi_2$ and $\sigma^2$ for the AR(2) family
such that, under MSFE (respectively, log-loss), the optimal value of the
combination weight, $\eta^\star$, obtained under MSFE (respectively,
log-loss), is (approximately) equal to the benchmark combination weight of
1/2, i.e., the equally-weighted combination benchmark.\footnote{%
Under log score, a DGP that {ensures that} the optimal two-step forecast combination
weight is (approximately) equal to 1/2 can be obtained by setting $%
(\phi_1,\phi_2,\sigma^2)=(.4000,-.4421,1)^\top$ in the AR(2) model. For
MSFE, taking $(\phi_1,\phi_2,\sigma^2)=(.4000,-.4070,1)^\top$ delivers a DGP
such that the optimal (two-step) forecast combination weight is
(approximately) 1/2. The values of $(\phi_1,\phi_2,\sigma^2)$ in both cases
were obtained numerically by maximizing the corresponding loss function
using a sample size of {10} million observations generated from the DGP.}
Under this DGP, we generate 1000 replications across three different samples
sizes $T=1000,2000$ and $5000$. Across each of these datasets, we use the
first $R=T/2$ observations for training, and the remaining $P=T-R$
observations for testing.

The first three rows of Table \ref{tab:reg} compare the size of the standard
test of no inferior forecast accuracy, based on $W_P(\alpha)$ in %
\eqref{eqn:rejectionregion}, against the testing approach that caters for
the two-step nature of the combination forecast construction, via the use of 
${W}^{2s}_{P}(\alpha )$ in \eqref{eqn:2srejectionregion}. In the two-step
approach, we use $B=10000$ draws to simulate the critical value in all
cases. As a comparator, we also present the empirical rejection frequency of
the standard {`t-test' of} the null hypothesis that $\eta=1/2$,
{based on} the correct two-step standard error for $\tilde\eta_R$. This
additional {comparator serves as a benchmark of sorts, enabling us to {diagnose}, in some sense, if the rejection
rates for the tests are due primarily to differences in the predictive
ability of the combinations, or are due to differences in the combination
weights themselves.} 

The {results} in the table demonstrate that, as already highlighted in Section %
\ref{sec:motiv}, the standard approach has zero size in (nearly) all cases
when the benchmark combination weight, 1/2 in this case, is close to the
optimal combination weight. In contrast, even with an asymptotically
conservative critical value, accounting for the two-step nature by which the
forecasts were produced, via ${W}^{2s}_{P}(\alpha )$ in %
\eqref{eqn:2srejectionregion}, results in a test that has sizes {that} are
much closer to the nominal level. Depending on the chosen loss, {for the finite samples used, }the test
based on ${W}^{2s}_{P}(\alpha )$ can be slightly {over- or under-sized}, but
delivers results that are much closer to the nominal level than {does} the standard
approach.

The second set of three rows in Table \ref{tab:reg}, compares the {empirical} power of
the {alternative} testing approaches, under both loss functions, {under} a DGP that is a
relatively small distance away from the DGP that delivers {an optimal
combination weight that is} equal to the benchmark combination weight of
1/2.\footnote{%
Under log-loss, the DGP generating the observed data is fixed at $%
(\phi_1,\phi_2,\sigma^2)=(.40,-.50,1)^\top$, while under MSFE we take the
DGP generating the data to be $(\phi_1,\phi_2,\sigma^2)=(.40,-.45,1)^\top$.
That is, in both cases, the DGP deviates from the version that delivered
equivalence between the equal-weighted combination and the
optimally-weighted combination by moving the second root of the AR(2)
process, $\phi_2$, by -.05.}Across both loss functions, our testing approach
has vastly higher power than the standard approach, {with the standard testing
approach based on the log-loss having zero power under this minor deviation from 
the null hypothesis, and with its power under the MSFE being only around one
percent. Moreover, the magnitude of the power of the correctly sized test is broadly similar (for any given sample size) under the two losses. Interestingly enough, the comparator `t-test' has quite high power under log loss, but very different, and much lower power under MSFE.} 

Consequently, the results in Table \ref{tab:reg} demonstrate empirically
that the if one wishes to conduct a test of forecast accuracy based on
approaches that use forecast combinations, accounting for the two-step
nature by which these forecasts were produced will be critical in producing
tests with good power and reliable size.

\begin{table}[]
\centering
{\footnotesize \ 
\begin{tabular}{lrrrrrr}
\hline\hline
\textbf{Size} & \multicolumn{3}{c}{MSFE} & \multicolumn{3}{c}{Log-Loss} \\ 
& Two-Step & T-Test & Stand & Two-Step & T-Test & Stand \\ \hline
$T=1000$ & 0.0202 & 0.0348 & 0.0000 & 0.0530, & 0.1690 & 0.0000 \\ 
$T=2000$ & 0.0240 & 0.0330 & 0.0004 & 0.0440, & 0.3250 & 0.0000 \\ 
$T=5000$ & 0.0322 & 0.0392 & 0.0000 & 0.0510, & 0.5480 & 0.0000 \\ 
\hline\hline
\textbf{Power} & \multicolumn{3}{c}{MSFE} & \multicolumn{3}{c}{Log-Loss} \\ 
& Two-Step & T-Test & Stand & Two-Step & T-Test & Stand \\ \hline
$T=1000$ & 0.1544 & 0.1426 & 0.0122 & 0.1960 & 0.3960 & 0.0000 \\ 
$T=2000$ & 0.3038 & 0.2536 & 0.0108 & 0.3670 & 0.6980 & 0.0000 \\ 
$T=5000$ & 0.6546 & 0.5450 & 0.0102 & 0.6870 & 0.9570 & 0.0000 \\ 
\hline\hline
\end{tabular}%
}  
\caption{Monte Carlo rejection rates for testing the null hypothesis in 
\eqref{eq:null} under the size and power DGPs for the loss functions MSFE
and Log-Loss. The table compares the rejection rates of the standard test
(Stand) and the two-step version based on the rejection region in 
\eqref{eqn:2srejectionregion} (Two-Step). The column T-Stat gives the
empirical rejection rates of the T-test that $\protect\eta=1/2$. }
\label{tab:reg}
\end{table}

\subsection{Avoiding the Puzzle}

\label{subsec:avoid}

As noted, the use of a two-step forecast combination, i.e., $%
Q^{(T)}_{\tilde\theta_R}$, is much more common in practice than a one-step
forecast combination, i.e., $Q^{(T)}_{\widehat\theta_R}$, where $%
\widehat\theta_R$ is defined in \eqref{eq:1step}. However, the analysis in
Section \ref{sec:puzzle} demonstrates that the forecast combination puzzle
is entirely due to this (two-step) estimation approach. This then begs the
question of whether it may be possible to avoid the puzzle altogether by
changing the way forecast combinations are produced.

In this section we compare the accuracy of the one-step forecast
combination, $Q^{(T)}_{\widehat\theta_R}$ against the standard two-step
forecast combination, $Q^{(T)}_{\tilde\theta_R}$. Since the two-step
approach is the standard approach in the literature, we test that the 
\textit{two-step (benchmark) approach is not inferior to the one-step
(alternative) approach}. The following result demonstrates that if the
one-step forecast combination approach is computationally feasible, it will
always yield superior forecast performance.

\begin{theorem}
\label{thm:diff}  Under Assumptions \ref{ass:parm}-\ref{ass:dist} in
Appendix \ref{app:A}, $\mathrm{Pr}[D_P(\tilde\theta_T,\widehat\theta_T)>0]%
\rightarrow1$ as $n\rightarrow\infty.$
\end{theorem}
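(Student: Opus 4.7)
The plan is to show that $\sqrt{P}\cdot\Delta_P(\tilde\theta_R,\widehat\theta_R)$ diverges to $+\infty$ in probability; because $\hat\Omega_R^{-1/2}$ is bounded away from zero in probability under the maintained regularity conditions, this immediately yields $\Pr[D_P(\tilde\theta_R,\widehat\theta_R)>0]\to 1$. Concretely, my proof would proceed through the following three steps.

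First, I would identify the population limits of the two estimators. By construction, $\widehat\theta_n$ is the unconstrained extremum estimator for $\mathcal{L}(\theta):=\operatorname{plim}_n L_n(\theta)/n$, so $\widehat\theta_R\to_p \theta^0=\operatorname{argmin}_{\theta\in\Theta}\mathcal{L}(\theta)$. In contrast, $\tilde\theta_R$ arises from the two-step procedure \eqref{eq:gamma}--\eqref{eq:est2s}, which is a profiled/constrained optimization: each $\tilde\gamma_{jn}$ targets the constituent loss, leaving $\gamma^\star$ as the limit, and $\tilde\eta_R$ then minimizes $\mathcal{L}(\eta,\gamma^\star)$ over $\eta$. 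Hence $\tilde\theta_R\to_p\theta^\star=(\eta^\star,\gamma^\star)$. Under the standing assumption $\theta^0\neq\theta^\star$ recorded after \eqref{limits}, together with uniqueness of $\theta^0$ as the global minimizer of $\mathcal{L}$ (implied by Assumption \ref{ass:parm}), one obtains the strict inequality
\begin{equation*}
\Delta^\infty := \mathcal{L}(\theta^\star)-\mathcal{L}(\theta^0)>0.
\end{equation*}

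Second, I would apply a uniform law of large numbers on $\Theta$ (which the smoothness and moment conditions in Assumptions \ref{ass:parm}--\ref{ass:quad} are designed to deliver) to deduce that $P^{-1}\sum_{t=R+1}^{T+1}\ell_t(\theta)\to_p \mathcal{L}(\theta)$ uniformly in $\theta$ over a neighbourhood containing both $\theta^\star$ and $\theta^0$. Combined with the consistency results above, this gives
\begin{equation*}
\Delta_P(\tilde\theta_R,\widehat\theta_R) = P^{-1}\sum_{t=R+1}^{T+1}\ell_t(\tilde\theta_R)-P^{-1}\sum_{t=R+1}^{T+1}\ell_t(\widehat\theta_R)\;\to_p\; \Delta^\infty > 0.
\end{equation*}

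Third, since $\hat\Omega_R$ is a consistent estimator of a finite, strictly positive long-run variance of the loss difference under the alternative (guaranteed by Assumption \ref{ass:dist}), $\hat\Omega_R^{-1/2}=O_p(1)$ and is bounded away from zero. Therefore
\begin{equation*}
D_P(\tilde\theta_R,\widehat\theta_R) = \hat\Omega_R^{-1/2}\sqrt{P}\,\Delta_P(\tilde\theta_R,\widehat\theta_R) = \hat\Omega_R^{-1/2}\sqrt{P}\,\bigl(\Delta^\infty+o_p(1)\bigr)\;\to_p\;+\infty,
\end{equation*}
which is strictly greater than zero with probability approaching one.

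The main obstacle is verifying the strict positivity of $\Delta^\infty$: this rests on $\theta^0$ being the unique global minimizer of $\mathcal{L}$ and on $\theta^0\neq\theta^\star$. The first is a standard identification requirement built into Assumption \ref{ass:parm}; the second is an explicit standing hypothesis, justified in the text by appeal to \cite{Newey1994}. A secondary, minor concern is that under the alternative the object $\hat\Omega_R$ typically estimates is the long-run variance of $d_t(\theta^\star,\theta^0)$ rather than a null-based quantity, but any standard HAC-type construction remains consistent for this bounded, positive limit and hence does not affect the divergence argument.
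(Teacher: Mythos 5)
Your proof is correct in substance but takes a genuinely more elementary route than the paper's. The paper does not argue from consistency of $\Delta_P$ alone: its proof invokes the quadratic expansions of Lemmas \ref{lem:lem1} and \ref{lem:lem2} to write
$\sqrt{P}\,\Delta_P(\tilde\theta_R,\widehat\theta_R)=\sqrt{P}\{\mathcal{L}(\theta^\star)-\mathcal{L}(\theta^0)\}+[1:\nabla_\gamma\mathcal{L}(\theta^\star)']Z_P+O_p(P^{-1/2})$,
i.e., a deterministic drift diverging at rate $\sqrt{P}$ plus a term that is asymptotically $N(0,\Omega)$ under Assumption \ref{ass:dist}, and then concludes that $\mathrm{Pr}[\Delta_P/\sqrt{\Omega}\le 0]\to 0$ via Polya's theorem. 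Your argument replaces all of this with a uniform law of large numbers plus consistency: the sign of $D_P$ equals the sign of $\Delta_P$, and $\Delta_P\to_p \mathcal{L}(\theta^\star)-\mathcal{L}(\theta^0)>0$, so no central limit theory is needed (Assumption \ref{ass:dist} enters only through finiteness and positivity of the variance scaling). Note that the out-of-sample ULLN you need is available: writing $\sum_{t=R+1}^{T+1}\ell_t(\theta)$ as a difference of two full-sample criteria and using Assumption \ref{ass:samp}, the uniform convergence in Assumption \ref{ass:cons}(i) transfers to the evaluation window. What your route buys is simplicity. What the paper's route buys is information: it identifies the first-order fluctuation of the loss difference, showing (via Lemma \ref{lem:lem2}) that it is driven entirely by $\sqrt{P}(\tilde\gamma_R-\gamma^\star)$ with no contribution from $\tilde\eta_R$ --- the mechanism at the heart of the paper --- and it quantifies the $\sqrt{P}$ divergence rate, which would be needed for any refinement where the population gap shrinks with $T$; your consistency argument could not deliver such extensions.

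One correction to how you ground the key inequality. Strict positivity of $\Delta^\infty=\mathcal{L}(\theta^\star)-\mathcal{L}(\theta^0)$ does not follow from Assumption \ref{ass:parm} (which imposes only compactness and a product structure on $\Theta$) together with uniqueness of $\theta^0$: the paper deliberately does \emph{not} assume a unique global minimizer. Assumption \ref{ass:cons}(ii) allows a finite identified set $\Theta_{\mathrm{I}}=\operatornamewithlimits{arginf\,}_{\Theta}\mathcal{L}(\theta)$, and under set identification the standing hypothesis $\theta^0\neq\theta^\star$ alone would not rule out $\theta^\star\in\Theta_{\mathrm{I}}$, in which case $\Delta^\infty=0$ and your argument collapses. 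The strict gap is instead assumed directly: Assumption \ref{ass:cons}(ii) stipulates $\theta^\star\notin\Theta_{\mathrm{I}}$, which is exactly $\mathcal{L}(\theta^\star)>\inf_{\Theta}\mathcal{L}(\theta)$, and this is also the fact the paper's proof uses to send $z_P\rightarrow-\infty$. Relatedly, your consistency step for $\widehat\theta_R$ should be phrased as convergence to the set $\Theta_{\mathrm{I}}$ (as in Lemma \ref{lem:lem1}); since every element of $\Theta_{\mathrm{I}}$ attains the same minimal value and $\mathcal{L}$ is continuous near each, $P^{-1}\sum_{t=R+1}^{T+1}\ell_t(\widehat\theta_R)\to_p\inf_{\Theta}\mathcal{L}(\theta)$ still holds, so this fix is cosmetic rather than structural.
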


Theorem \ref{thm:diff} yields the following immediate corollary on the
forecast accuracy of a benchmark equally-weighted combination against the
optimally estimated one-step combination: let $\theta_R^{ew}=(K^{-1}\iota^{%
\prime },\tilde\gamma_R^{\prime })^{\prime }$, with $\iota$ a $K$%
-dimensional vector of ones.

\begin{corollary}
\label{cor:eqs}  If $\eta^0\ne K^{-1}\iota$, and Assumptions \ref{ass:parm}-%
\ref{ass:dist} in Appendix \ref{app:A} are satisfied, then  $\mathrm{Pr}%
[D_P(\theta^{ew}_R,\widehat\theta_T)>0]\rightarrow1$ as $n\rightarrow\infty.$
\end{corollary}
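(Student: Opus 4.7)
The plan is to proceed directly via a population-identification argument, exploiting the fact that Corollary~\ref{cor:eqs} is a constrained-weights analogue of Theorem~\ref{thm:diff}. The benchmark $\theta^{ew}_R=(K^{-1}\iota',\tilde\gamma_R')'$ converges in probability, by Assumptions~\ref{ass:parm}--\ref{ass:dist}, to $\theta^{ew}:=(K^{-1}\iota',\gamma^{\star\prime})'$, whereas the one-step estimator satisfies $\widehat\theta_T\rightarrow_p\theta^0$. Because $\theta^0$ is, by the identification part of Assumptions~\ref{ass:parm}--\ref{ass:dist}, the \emph{unique} minimizer of $\theta\mapsto\mathcal{L}(\theta)$, and since by hypothesis $\eta^0\neq K^{-1}\iota$ forces $\theta^{ew}\neq\theta^0$, strict identification immediately delivers the population separation
\begin{equation*}
\Delta^{\infty}:=\mathcal{L}(\theta^{ew})-\mathcal{L}(\theta^0)>0.
\end{equation*}
This is the analogue, in the equally-weighted case, of the separation that underlies Theorem~\ref{thm:diff}.

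The next step is a uniform law of large numbers argument. By the continuity of $\theta\mapsto\mathcal{L}(\theta)$ and a standard stochastic equicontinuity / ULLN step (exactly of the type already invoked to establish the probability limits in~\eqref{limits}), one has
\begin{equation*}
\Delta_P(\theta^{ew}_R,\widehat\theta_T)\;=\;\frac{1}{P}\sum_{t=R+1}^{T+1}\{\ell_t(\theta^{ew}_R)-\ell_t(\widehat\theta_T)\}\;\rightarrow_p\;\Delta^{\infty}>0.
\end{equation*}
Combined with the consistency of $\hat\Omega_R$ for the (finite) asymptotic variance appearing in~\eqref{dp}, the standardised statistic satisfies $D_P(\theta^{ew}_R,\widehat\theta_T)=\hat\Omega_R^{-1/2}\sqrt P\,\Delta_P(\theta^{ew}_R,\widehat\theta_T)\rightarrow_p+\infty$, from which $\Pr\{D_P(\theta^{ew}_R,\widehat\theta_T)>0\}\to 1$ is immediate.

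The main delicate point, and the only place where care is required beyond invoking Theorem~\ref{thm:diff}, is in ensuring that the inequality $\Delta^{\infty}>0$ is genuinely strict, i.e.\ that the identification condition on the full loss $\mathcal{L}(\theta)$ rules out degeneracies of the form $\mathcal{L}(\theta^{ew})=\mathcal{L}(\theta^0)$ along the $\eta$-direction at fixed $\gamma^\star$. This follows from the strict convexity / unique-minimizer structure embedded in Assumptions~\ref{ass:parm}--\ref{ass:dist} together with the hypothesis $\eta^0\neq K^{-1}\iota$: if equality held, then $\theta^{ew}$ would also minimize $\mathcal{L}$, contradicting uniqueness of $\theta^0$. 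A secondary technical point is to verify that $\hat\Omega_R$ remains bounded in probability at the equally-weighted benchmark; this is not automatic from the proof of Theorem~\ref{thm:diff}, but follows from exactly the same moment and mixing conditions used there, applied at the (non-random) weight $K^{-1}\iota$ rather than at $\tilde\eta_R$. Once these two points are checked, the corollary follows directly, without any re-derivation of the asymptotic expansions of Lemma~\ref{lem:quads}.
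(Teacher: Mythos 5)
Your overall strategy is sound and, in fact, more elementary than the paper's. The paper offers no separate proof of Corollary \ref{cor:eqs}: it is obtained as an immediate consequence of Theorem \ref{thm:diff}, whose proof runs through the quadratic/linear expansions of Lemmas \ref{lem:lem1} and \ref{lem:lem2} to get $\sqrt{P}\Delta_P = \sqrt{P}\{\mathcal{L}(\theta^\star)-\mathcal{L}(\theta^0)\} + O_p(1)$ with an asymptotically normal fluctuation term, and then a Polya-theorem argument to drive $\mathrm{Pr}[\Delta_P\le 0]\to 0$; for the corollary one simply replaces the two-step benchmark by $\theta^{ew}_R$. You instead observe that since $\hat\Omega_R^{-1/2}\sqrt{P}>0$, the sign of $D_P$ is the sign of $\Delta_P$, so it suffices that $\Delta_P(\theta^{ew}_R,\widehat\theta_T)\rightarrow_p \mathcal{L}(\theta^{ew})-\mathcal{L}(\theta^0)>0$, which follows from the ULLN in Assumption \ref{ass:cons}(i) (the out-of-sample average is a difference of two full-sample averages, and $R/P\to c$ by Assumption \ref{ass:samp}) together with consistency of the plugged-in estimators. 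This buys simplicity -- you bypass the expansions and the normal approximation entirely -- at the cost of not recovering the $\sqrt{P}$ divergence rate of the drift that the paper's proof of Theorem \ref{thm:diff} delivers; for the stated conclusion that loss of sharpness is immaterial. (Two small caveats, neither fatal: your plug-in step needs continuity of $\mathcal{L}$ at $\theta^{ew}$, whereas Assumption \ref{ass:quad}(i) only guarantees smoothness near $\Theta_{\mathrm{I}}\cup\{\theta^\star\}$ -- though the paper's own route has the same unstated need; and your appeal to ``moment and mixing conditions'' for $\hat\Omega_R$ has no counterpart in the paper, which handles the distributional side through the high-level Assumption \ref{ass:dist} -- and, as just noted, the behavior of $\hat\Omega_R$ is actually irrelevant to the sign claim.)

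There is, however, one concrete misstep: you assert that $\theta^0$ is the \emph{unique} minimizer of $\theta\mapsto\mathcal{L}(\theta)$ and derive strictness of $\Delta^{\infty}$ by contradiction with that uniqueness. The paper deliberately does \emph{not} assume this: Assumption \ref{ass:cons}(ii) only requires that $\Theta_{\mathrm{I}}=\operatornamewithlimits{arginf\,}_{\Theta}\mathcal{L}(\theta)$ be a non-empty \emph{finite set}, and the remark following Theorem \ref{thm:diff} explicitly stresses that point identification is not required for these accuracy results. So ``if equality held, $\theta^{ew}$ would also minimize $\mathcal{L}$, contradicting uniqueness'' is not available under the stated assumptions. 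The repair is straightforward and needs no uniqueness: since $\Theta_{\mathrm{I}}$ is by definition the entire argmin set, strict separation $\mathcal{L}(\theta^{ew})>\mathcal{L}(\theta^0)$ is exactly the statement $\theta^{ew}=(K^{-1}\iota^{\prime},\gamma^{\star\prime})^{\prime}\notin\Theta_{\mathrm{I}}$, which is the correct reading of the hypothesis $\eta^0\neq K^{-1}\iota$ in the set-identified case (every element of $\Theta_{\mathrm{I}}$ attains the same minimal loss, and $\widehat\theta_T$ converges to some element of $\Theta_{\mathrm{I}}$ by Lemma \ref{lem:lem1}, so your limit step goes through unchanged with $\mathcal{L}(\theta^0)$ interpreted as that common minimal value). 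With that substitution your proof is complete and valid.
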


\begin{remark}
Theorem \ref{thm:diff} and Corollary \ref{cor:eqs} state that when measuring
predictive accuracy using average loss differences, optimally estimated
one-step forecast combinations always (weakly) outperform two-step forecast
combination approaches, including the equally-weighted combination. This
result is loss function agnostic, and applies to any strictly proper scoring
rule or any consistent scoring function.
\end{remark}

\begin{remark}
In certain cases, joint optimization over the parameters in the forecast
combination (i.e., in one-step) will not produce a unique minima for the
optimization program in \eqref{eq:1step}; for instance, in cases where we
wish to combine two mean point forecasts from separate regression models,
then a one-step optimization may yield a set of optimal values. In such
cases, the resulting point onto which the one-step estimator $%
\widehat\theta_n$ converges, i.e., $\theta^0$, will not be unique. However,
if one is only interested in forecasting accuracy, then this is immaterial
for practice: since the forecasts are constructed from $L[Q_%
\theta^{(T)},Y_{T+1}]$, and are evaluated using an out-of-sample estimator
for $\mathbb{E} L[Q_\theta^{(T)},Y_{T+1}]$, each of the optima will
(asymptotically) produce the same level of accuracy for the forecast
combination. Moreover, since the assumptions used to derive the results in
Theorem \ref{thm:diff} do not require point identification, the results on
the accuracy of the one-step forecast combination given above will remain
valid.
\end{remark}

\subsection{Revisiting Smith and Wallis (2009) for One- and Two-Step
Combinations}

\label{subsec:smithone}

In Figure \ref{fig:avoid}, we repeat the simulation of Section \ref%
{subsec:extsmith}, but for different benchmark-alternative pairs (rows),
with a view to illustrating the implications of Theorem \ref{thm:diff} and
Corollary \ref{cor:eqs}, and thereby illustrating the benefits of the
one-step combination.\footnote{{We omit results for $\eta^{\star} = 0$ and $%
\eta^{\star} = 1$ to exclude cases where $\theta^0 = \theta^{\star}$,
whereby the one- and two-step parameter estimators converge to the same
values in the limit, violating Assumption \ref{ass:cons}.}} As before,
discussion of implementation details is left to Appendix \ref%
{subsec:nidpuzzlecause}.

\begin{figure*}[ht]
\includegraphics[width = \textwidth]{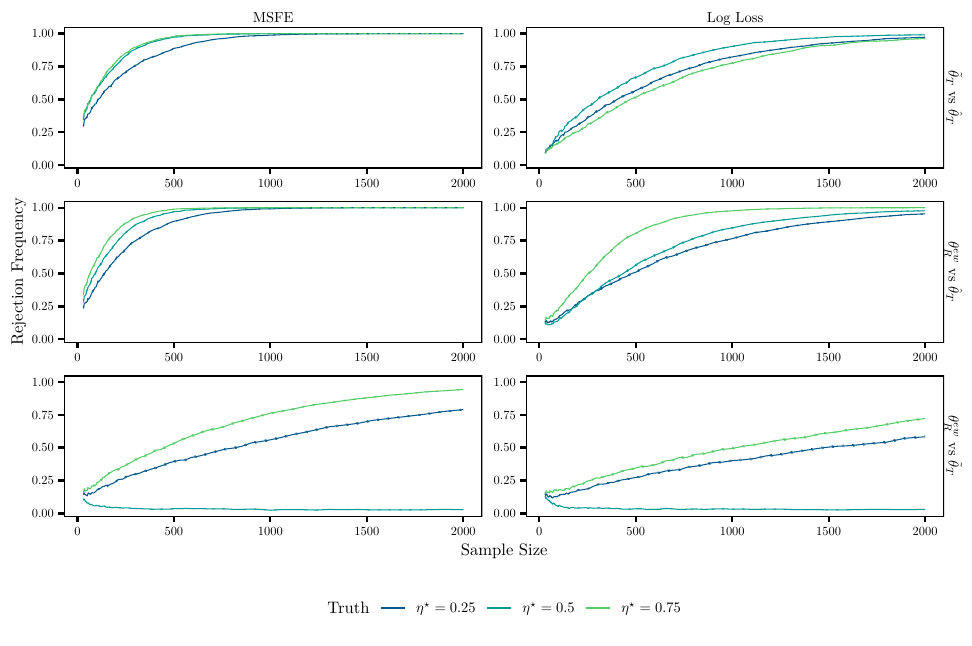}  
\caption{Estimates of the rejection frequency ($y$-axis) for the hypothesis
test of no inferior predictive accuracy of a benchmark forecast combination
against an alternative combination (rows, benchmark vs alternative). 95\%
confidence intervals for the rejection frequency only cover the thickness of
the line, and are omitted. The test is conducted with observations drawn
from DGPs across a range of pseudo-true weights (colours), and across a grid
of sample sizes ($x$-axis). Results for a point forecast combination with
optimal weights minimising the MSFE are given in the first column, and
results for a distributional forecast combination with optimal weights
minimising the log loss are given in the second column.}
\label{fig:avoid}
\end{figure*}

In the first row, we test the benchmark two-step combination against the
alternative one-step combination, and find that the rejection frequency ($y$%
-axis) quickly converges to one (in favour of the one-step combination) as
the sample size ($x$-axis) increases, for combinations optimising both the
MSFE (left-hand column) and the log loss (right-hand column) and for all
values of $\eta^{\star}$ (colours). This reflects the result of Theorem \ref%
{thm:diff}, and supports preferring the one-step combination over its
two-step counterpart when optimising the combination to maximise forecast
performance.

In the second row, the hypothesis of no inferior predictive accuracy of the
equally-weighted two-step benchmark is tested against the alternative
one-step combination. Here, we again find that the rejection frequency ($y$%
-axis) rapidly converges to one as the sample size ($x$-axis) increases, for
DGPs of \textit{all} limiting two-step weights $\eta^{\star}$, including $%
\eta^{\star} = 0.5$, where $\theta^{ew}_R$ is the best-performing two-step
combination. This is due to the fact that the one-step combination $\hat{%
\theta}_T$ converges to a higher-performing combination in the limit than is
possible for \textit{any} two-step combination, including $\theta^{ew}_R$ in
the case where the optimal two-step weight $\eta^{\star} = 0.5$; this
follows since even if $\eta^{\star} = 0.5$ under the two-step approach, the
optimal one-step weight $\eta^0 \neq 0.5$.

Displayed in the bottom row are rejection frequencies for the test of no
inferior predictive accuracy of the equally-weighted two-step benchmark
against the optimally-weighted two-step alternative. Rejection frequencies
for the same test were also displayed in the middle row of Figure \ref%
{fig:smith}, and here we see again that for the two-step combination this
test is undersized and has low power, even when the equally-weighted vector $%
\theta^{ew}_R$ is far from optimal (dark blue and green), leading to the
forecast combination puzzle. Comparing the bottom and middle rows of Figure %
\ref{fig:avoid}, we find that the power of the test increases dramatically
to resolve the puzzle when estimating parameters in one step (middle row)
rather than two (bottom row). The increase in power from one-step estimation
is seen across all sample sizes ($x$-axis), all pseudo-true weights
(colours) and both losses (columns).

\subsection{Revisiting \cite{Geweke2011} for One- and Two-Step
Combinations}


\begin{table}[ht]
\centering
\begin{tabular}{lr}
  \hline
Combination & Average Log Score \\ 
  \hline
Equally-Weighted Two Step & 3.3481 \\ 
  Optimally-Weighted Two Step & 3.3459 \\ 
  One Step & 3.3596 \\ 
   \hline
\end{tabular}
\caption{The average log scores of three forecast combinations {for} S\&P500 returns.}
\label{tbl:gewekeaveragelogscores} 
\end{table}

\begin{table}[ht]
\centering
\begin{tabular}{llr}
  \hline
Benchmark & Alternative & $p$ value \\ 
  \hline
Equally-Weighted Two Step & Optimally-Weighted Two Step & 0.8251 \\ 
  Equally-Weighted Two Step & One Step & 5.675e-05 \\ 
  Optimally-Weighted Two Step & One Step & 6.935e-12 \\ 
   \hline
\end{tabular}
\caption{{P-values} (right column) for tests of the null hypothesis that a benchmark combination (left column) is not inferior to an alternative combination (right column). All combinations produce one-step-ahead distributional forecasts for S\&P500 returns, and are evaluated on a log-score basis.} 
\label{tbl:geweketests}
\end{table}

In this section, we give an empirical example which demonstrates that one-step combinations resolve the forecast combination puzzle. {Specifically}, we follow Section 3 of \cite{Geweke2011} and consider a linear pool comprising the Gaussian exponential GARCH$(1,1)$ {model} (``EGARCH'') and the GARCH$(1,1)$ model with i.i.d. Student $t$ {errors} (``$t$-GARCH''); this pool is then used to  produce one-step-ahead distributional forecasts of daily {logarithmic} S\&P500 returns. All parameters (including combination weights) are estimated using returns for the 3783 trading days from years 1990 to 2004, inclusive ({the} ``training set''). We then evaluate and compare the log-score-based forecasting performance of different combinations using returns for the out-of-sample period comprising the 3772 trading days from years 2005 to 2019, also inclusive ({the} ``test set'').

Three different ways of estimating {the forecast combinations} are compared: equally-weighted two-step estimation, optimally-weighted two-step estimation, and one-step estimation. In {the} two-step combinations, the EGARCH parameters are first chosen to maximize the {average log score of the EGARCH one-step-ahead predictive distribution over the training set}; likewise for the $t$-GARCH parameter estimates. To produce the equally-weighted two-step combination, we set the weights to $0.5$. The combination {weights for the} optimally-weighted two-step combination are estimated by maximizing the log score of the combination across the training set, with the EGARCH and $t$-GARCH parameters fixed at their first-step values. {The one-step combination parameters are jointly} chosen to maximize the training-set average log score of the one-step-ahead predictive distribution of the combination density, in a single optimization program.

Table \ref{tbl:gewekeaveragelogscores} contains the training-set average log scores of the three combinations (first column). Consistent with the forecast combination puzzle, even across the training-set the equally-weighted two-step combination outperforms the optimally-weighted two-step combination, since it has a higher average log score (second column). As suspected, both two-step combinations are beaten by the one-step approach over the training-set.

In Table \ref{tbl:geweketests} we display the $p$ values (right column) for three tests of the null hypothesis that a benchmark combination (left column) is not inferior to an alternative combination (middle column). The test proceeds according to Section \ref{subsubsec:tests} using the loss differences pertaining to the log scores of the out-of-sample test set. The asymptotic variance of the loss difference is estimated using the method described in Section 4.1 of \cite{Okui2010}, with the quadratic spectrum kernel and $S = \sqrt{T}$. In the first row we fail to reject the null hypothesis that the equally-weighted two-step combination is not inferior to the optimally-weighted two step combination, reflecting the forecast combination puzzle. The tests displayed in the second and third rows unequivocally reject the null that the benchmark - either {the} equally- or {the optimally-weighted two-step combination} - is not inferior to the one-step {combination, with both null hypotheses} rejected at the 1\% level. This perfectly reflects the theoretical {results} in \ref{thm:diff} and Corollary \ref{cor:eqs} in Section \ref{subsec:avoid}: we can avoid the puzzle and obtain a higher performing forecast combination by estimating all parameters in a single step, rather than the standard two steps.

\section{Conclusion}

In this paper, we investigate the forecast combination puzzle through the
lens of hypothesis testing approaches aimed at discriminating between the
relative performances of equally-weighted and optimally-weighted forecast
combinations. Forecast combination parameters are optimized according to a
scoring function or scoring rule for point forecasting and distributional
forecasting, respectively, and we thereby demonstrate that the forecast
combination puzzle is a phenomenon that extends far beyond point forecasts
optimized according to the MSFE.

Our theoretical analysis demonstrates that such hypothesis tests have no
local power, and lack size control, when applied to hypothesis tests aiming
to distinguish between the performance of optimally- and equally-weighted 
\textit{two-step} forecast combinations -- a result that is unusual for
tests of this nature. {This perverse behavior is entirely due to the fact
that the test statistic used to conduct the test does not account for the
two-step nature by which forecast combinations are produced. Consequently,
as we show in the paper, under the null hypothesis the limiting distribution
of the (appropriately scaled) test statistic is not asymptotically normal,
but instead converges in distribution to a generalized chi-distribution.}

The source of this problematic property is that all uncertainty in the
performance of optimally-weighted two-step forecast combinations derives
entirely, at first order, from sampling variability in the parameter
estimates of the constituent models, with no contribution from the
estimation of the weights. It is the relative performance of the different
approaches to obtaining the weights, however, that is the subject of the
hypothesis test -- a subject that does not contribute (asymptotically, at
least) to the sampling variability of the performance measure, resulting in
a test with no local power.

An extension to the Monte Carlo exercise of \cite{Smith2009} illustrates how
this lack of local power can permeate throughout the parameter space. By
producing the rejection frequencies of a variety of hypothesis tests of no
inferior forecast accuracy of a fixed-weight benchmark against the
optimally-weighted two-step alternative, we illustrate that such hypothesis
tests can require large sample sizes to reject in favor of the
optimally-weighted combination, even when the (unknown) best-performing
weights are very different from the vector of equal weights. This finding
was seen repeatedly, for point and distributional forecast combinations
optimized according to different scores, across a diverse range of DGP
parameter values, and across several fixed-weighted (and not just
equally-weighted) benchmark combinations. We also revisit a two-model distributional forecast combination of S\&P500 returns in \cite{Geweke2011} to obtain empirical evidence that this phenomenon occurs in practice.

It is shown that, under mild assumptions, optimizing combination parameters
in one step will always eventually reject the null hypothesis that it does
not have a higher forecast accuracy than an equally-weighted or
optimally-weighted two-step benchmark combination. Repeating the Monte Carlo
exercise for the one-step alternative under the optimally- and
equally-weighted two-step benchmarks reveals that the low power that is
characteristic of the forecast combination puzzle in the case of two-step
alternatives is absent when the alternative optimally-weighted combination
is estimated in one-step. In addition, we verify the superiority of one-step combinations in the  S\&P500 returns example considered in \cite{Geweke2011} by showing that {the one-step density combination delivers superior predictive accuracy relative to two-step benchmarks.}

In this way, we argue that the root-cause behind the lack of evidence for
the performance of optimally-weighted two-step combinations against their
equally-weighted counterparts is an artefact of the way in which such
combinations are generally produced. Consequently, the ``puzzle'' is
evidenced by the low power observed under standard hypothesis testing
approaches used to obtain evidence in favor of, or against,
optimally-weighted combinations produced in the standard two-step manner.
Furthermore, we demonstrate that if it feasible to produce optimal
combinations in a single step, the forecasting puzzle can be completely
avoided. Hence, if the problem at hand is such that forecast combinations
can be produced in a one-step fashion, the practitioner will (always) reap
appreciable gains, in terms of forecast accuracy, by undertaking such a
strategy.

Alternatively, if a one-step approach is infeasible, or if a two-step
approach is simply preferable, we have demonstrates how the usual testing
framework must be altered to accommodate the two-step nature by which the
combinations were produced, and to ensure that the resulting behavior does
not corrupt our testing results. In particular, we have show that under a
broad range of DGPs, when using a two-step forecast combination approach,
both the test statistic and the critical value employed must be altered in
order to deliver a test that has correct size, and meaningful power.

Before concluding, we note that there are many interesting cases where a
forecast combination procedure may seem, at the outset, not to be produced
in a two-step fashion, but which upon closer inspection reveals that such
forecast combinations are actually produced in (at least) a two-step
fashion. As an illustrative example, consider the context of volatility
forecasting using the HAR model (\citealp{corsi2009simple}; %
\citealp{Corsi2012}). By viewing HAR model forecasts as the combination of
lagged moving average models for realised variance, \cite{Clements2021}
document the existence of a forecasting combination puzzle in HAR models and
show that such models do not generally perform better than a simple weighted
average of the constituent forecasts.

Interestingly, the results of \cite{Clements2021} seem to document the
existence of a forecast combination puzzle without a `first-stage'
estimation step being required to produce the forecasts; at face value it
then seems that such an example lies outside the scope of our general
results. However, recall that the ``observed value'' of realised volatility
is not a genuine realisation of ``observed data'', but a nonparametric
estimator of integrated-variance, computed using inter-daily returns. That
is, the very construction of the realised variance series constitutes a
first-stage estimation step, and forecasts produced via HAR models can
therefore be viewed as two-stage forecast combination methods: the first
stage estimates the realised variance series, and the second the combination
scheme. Hence, the lack of power HAR models exhibit to distinguish between
equally and optimally weighted forecast combinations is also explained by
our theoretical results. We leave a full study on such types of first-stage
estimation steps for future research, but remark that, the heavy use of
high-frequency returns, and realised variance in particular, in financial
forecasting applications would seem to imply the existence of undiscovered
combination puzzles.

\bibliographystyle{apalike}
\bibliography{library}

\begin{thebibliography}{}

\bibitem[Aastveit et~al., 2019]{Aastveit2019}
Aastveit, K.~A., Mitchell, J., Ravazzolo, F., and van Dijk, H.~K. (2019).
\newblock The evolution of forecast density combinations in economics.
\newblock In {\em Oxford Research Encyclopedia of Economics and Finance}.
  Oxford University Press.

\bibitem[Andrews, 1999]{andrews1999estimation}
Andrews, D.~W. (1999).
\newblock Estimation when a parameter is on a boundary.
\newblock {\em Econometrica}, 67(6):1341--1383.

\bibitem[Baran and Lerch, 2018]{Baran2018}
Baran, S. and Lerch, S. (2018).
\newblock Combining predictive distributions for the statistical
  post-processing of ensemble forecasts.
\newblock {\em International Journal of Forecasting}, 34(3):477--496.

\bibitem[Bates and Granger, 1969]{Bates1969}
Bates, J.~M. and Granger, C.~W. (1969).
\newblock The combination of forecasts.
\newblock {\em Journal of the Operational Research Society}, 20(4):451--468.

\bibitem[Chan and Pauwels, 2018]{Chan2018}
Chan, F. and Pauwels, L.~L. (2018).
\newblock Some theoretical results on forecast combinations.
\newblock {\em International Journal of Forecasting}, 34(1):64--74.

\bibitem[Claeskens et~al., 2016]{Claeskens2016}
Claeskens, G., Magnus, J.~R., Vasnev, A.~L., and Wang, W. (2016).
\newblock The forecast combination puzzle: A simple theoretical explanation.
\newblock {\em International Journal of Forecasting}, 32(3):754--762.

\bibitem[Clark and McCracken, 2013]{clark2013advances}
Clark, T. and McCracken, M. (2013).
\newblock Advances in forecast evaluation.
\newblock {\em Handbook of economic forecasting}, 2:1107--1201.

\bibitem[Clements and Vasnev, 2021]{Clements2021}
Clements, A. and Vasnev, A.~L. (2021).
\newblock Forecast combination puzzle in the har model.
\newblock {\em Available at SSRN: https://ssrn.com/abstract=3875026 or
  http://dx.doi.org/10.2139/ssrn.3875026}.

\bibitem[Corsi, 2009]{corsi2009simple}
Corsi, F. (2009).
\newblock A simple approximate long-memory model of realized volatility.
\newblock {\em Journal of Financial Econometrics}, 7(2):174--196.

\bibitem[Corsi et~al., 2012]{Corsi2012}
Corsi, F., Audrino, F., and Renò, R. (2012).
\newblock Har modeling for realized volatility forecasting.
\newblock In {\em Handbook of Volatility Models and Their Applications}, pages
  363--382. John Wiley \& Sons, Inc, Hoboken, NJ, USA.

\bibitem[Elliott, 2011]{Elliott2011}
Elliott, G. (2011).
\newblock Averaging and the optimal combination of forecasts.
\newblock {\em Manuscript, Department of Economics, UCSD}.

\bibitem[Fissler and Ziegel, 2016]{Fissler2016}
Fissler, T. and Ziegel, J.~F. (2016).
\newblock Higher order elicitability and osband’s principle.
\newblock {\em The Annals of Statistics}, 44(4):1680--1707.

\bibitem[Frazier and Renault, 2017]{Frazier2017}
Frazier, D.~T. and Renault, E. (2017).
\newblock Efficient two-step estimation via targeting.
\newblock {\em Journal of Econometrics}, 201(2):212--227.

\bibitem[Geweke and Amisano, 2011]{Geweke2011}
Geweke, J. and Amisano, G. (2011).
\newblock Optimal prediction pools.
\newblock {\em Journal of Econometrics}, 164(1):130--141.

\bibitem[Gill et~al., 2021]{Gill2021}
Gill, P.~E., Murray, W., and Wright, M.~H. (2021).
\newblock {\em Numerical linear algebra and optimization}.
\newblock SIAM.

\bibitem[Gneiting and Raftery, 2007]{Gneiting2007}
Gneiting, T. and Raftery, A.~E. (2007).
\newblock Strictly proper scoring rules, prediction, and estimation.
\newblock {\em Journal of the American Statistical Association}.

\bibitem[Gneiting and Ranjan, 2011]{Gneiting2011}
Gneiting, T. and Ranjan, R. (2011).
\newblock Comparing density forecasts using threshold- and quantile-weighted
  scoring rules.
\newblock {\em Journal of Business \& Economic Statistics}, 29(3):411--422.

\bibitem[Gneiting and Ranjan, 2013]{Gneiting2013}
Gneiting, T. and Ranjan, R. (2013).
\newblock Combining predictive distributions.
\newblock {\em Electronic Journal of Statistics}, 7:1747--1782.

\bibitem[Graefe et~al., 2014]{graefe2014combining}
Graefe, A., Armstrong, J.~S., Jones~Jr, R.~J., and Cuz{\'a}n, A.~G. (2014).
\newblock Combining forecasts: An application to elections.
\newblock {\em International Journal of Forecasting}, 30(1):43--54.

\bibitem[Granger and Machina, 2006]{Granger2006}
Granger, C.~W. and Machina, M.~J. (2006).
\newblock Forecasting and decision theory.
\newblock {\em Handbook of Economic Forecasting}, 1:81--98.

\bibitem[Hall and Mitchell, 2007]{Hall2007}
Hall, S.~G. and Mitchell, J. (2007).
\newblock Combining density forecasts.
\newblock {\em International Journal of Forecasting}, 23(1):1--13.

\bibitem[Hansen, 2005]{Hansen2005}
Hansen, P.~R. (2005).
\newblock A test for superior predictive ability.
\newblock {\em Journal of Business \& Economic Statistics}, 23(4):365--380.

\bibitem[Johnson, 2022]{Johnson2022}
Johnson, S.~G. (2022).
\newblock The nlopt nonlinear-optimization package.

\bibitem[Kraft, 1988]{Kraft1988}
Kraft, D. (1988).
\newblock A software package for sequential quadratic programming.
\newblock {\em Forschungsbericht- Deutsche Forschungs- und Versuchsanstalt fur
  Luft- und Raumfahrt}.

\bibitem[Kraft, 1994]{Kraft1994}
Kraft, D. (1994).
\newblock Algorithm 733: Tomp--fortran modules for optimal control
  calculations.
\newblock {\em ACM Transactions on Mathematical Software (TOMS)},
  20(3):262--281.

\bibitem[Makridakis et~al., 2018]{Makridakis2018}
Makridakis, S., Spiliotis, E., and Assimakopoulos, V. (2018).
\newblock The m4 competition: Results, findings, conclusion and way forward.
\newblock {\em International Journal of Forecasting}, 34(4):802--808.

\bibitem[Makridakis et~al., 2020]{Makridakis2020}
Makridakis, S., Spiliotis, E., and Assimakopoulos, V. (2020).
\newblock The m4 competition: 100,000 time series and 61 forecasting methods.
\newblock {\em International Journal of Forecasting}, 36(1):54--74.

\bibitem[Martin et~al., 2021]{Martin2021}
Martin, G.~M., Loaiza-Maya, R., Maneesoonthorn, W., Frazier, D.~T., and
  Ram{\'\i}rez-Hassan, A. (2021).
\newblock Optimal probabilistic forecasts: When do they work?
\newblock {\em International Journal of Forecasting}.

\bibitem[Newey and McFadden, 1994]{Newey1994}
Newey, W.~K. and McFadden, D. (1994).
\newblock Large sample estimation and hypothesis testing.
\newblock {\em Handbook of Econometrics}, 4:2111--2245.

\bibitem[Okui, 2010]{Okui2010}
Okui, R. (2010).
\newblock Asymptotically unbiased estimation of autocovariances and
  autocorrelations with long panel data.
\newblock {\em Econometric Theory}, 26(5):1263--1304.

\bibitem[Opschoor et~al., 2017]{Opschoor2017}
Opschoor, A., Van~Dijk, D., and van~der Wel, M. (2017).
\newblock Combining density forecasts using focused scoring rules.
\newblock {\em Journal of Applied Econometrics}, 32(7):1298--1313.

\bibitem[Patton, 2020]{Patton2020}
Patton, A.~J. (2020).
\newblock Comparing possibly misspecified forecasts.
\newblock {\em Journal of Business \& Economic Statistics}, 38(4):796--809.

\bibitem[Pesaran and Skouras, 2002]{Pesaran2002}
Pesaran, M.~H. and Skouras, S. (2002).
\newblock Decision-based methods for forecast evaluation.
\newblock {\em A Companion to Economic Forecasting}, pages 241--267.

\bibitem[Ranjan and Gneiting, 2010]{Ranjan2010}
Ranjan, R. and Gneiting, T. (2010).
\newblock Combining probability forecasts.
\newblock {\em Journal of the Royal Statistical Society: Series B (Statistical
  Methodology)}, 72(1):71--91.

\bibitem[Satop{\"a}{\"a} et~al., 2014]{Satopaeae2014}
Satop{\"a}{\"a}, V.~A., Baron, J., Foster, D.~P., Mellers, B.~A., Tetlock,
  P.~E., and Ungar, L.~H. (2014).
\newblock Combining multiple probability predictions using a simple logit
  model.
\newblock {\em International Journal of Forecasting}, 30(2):344--356.

\bibitem[Smith and Wallis, 2009]{Smith2009}
Smith, J. and Wallis, K.~F. (2009).
\newblock A simple explanation of the forecast combination puzzle.
\newblock {\em Oxford Bulletin of Economics and Statistics}, 71(3):331--355.

\bibitem[Stock and Watson, 2004]{Stock2004}
Stock, J.~H. and Watson, M.~W. (2004).
\newblock Combination forecasts of output growth in a seven-country data set.
\newblock {\em Journal of Forecasting}, 23(6):405--430.

\bibitem[Stone, 1961]{Stone1961}
Stone, M. (1961).
\newblock The linear opinion pool.
\newblock {\em Ann. Math. Statist}, 32:1339--1342.

\bibitem[Taylor, 2020]{Taylor2020}
Taylor, J.~W. (2020).
\newblock Forecast combinations for value at risk and expected shortfall.
\newblock {\em International Journal of Forecasting}, 36(2):428--441.

\bibitem[Thorey et~al., 2018]{Thorey2018}
Thorey, J., Chaussin, C., and Mallet, V. (2018).
\newblock Ensemble forecast of photovoltaic power with online crps learning.
\newblock {\em International Journal of Forecasting}, 34(4):762--773.

\bibitem[Timmermann, 2006]{Timmermann2006}
Timmermann, A. (2006).
\newblock Forecast combinations.
\newblock {\em Handbook of Economic Forecasting}, 1:135--196.

\bibitem[Van Der~Vaart, 1998]{VanDerVaart1998}
Van Der~Vaart, A. (1998).
\newblock {\em Asymptotic Statistics}.
\newblock Cambridge University Press.

\bibitem[van~der Vaart et~al., 1996]{van1996weak}
van~der Vaart, A., van~der Vaart, A.~W., van~der Vaart, A., and Wellner, J.
  (1996).
\newblock {\em Weak Convergence and Empirical Processes: With Applications to
  Statistics}.
\newblock Springer Science \& Business Media.

\bibitem[Wang et~al., 2018]{Wang2018}
Wang, L., Wang, Z., Qu, H., and Liu, S. (2018).
\newblock Optimal forecast combination based on neural networks for time series
  forecasting.
\newblock {\em Applied Soft Computing}, 66:1--17.

\bibitem[West, 1996]{West1996}
West, K.~D. (1996).
\newblock Asymptotic inference about predictive ability.
\newblock {\em Econometrica: Journal of the Econometric Society}, pages
  1067--1084.

\bibitem[White, 2000]{White2000}
White, H. (2000).
\newblock A reality check for data snooping.
\newblock {\em Econometrica}, 68(5):1097--1126.

\bibitem[Yuan and Jennrich, 1998]{yuan1998asymptotics}
Yuan, K.-H. and Jennrich, R.~I. (1998).
\newblock Asymptotics of estimating equations under natural conditions.
\newblock {\em Journal of Multivariate Analysis}, 65(2):245--260.

\bibitem[Zischke et~al., 2022]{Zischke2022}
Zischke, R., Martin, G.~M., Frazier, D.~T., and Poskitt, D.~S. (2022).
\newblock On measuring the sampling variability of estimated combinations of
  distributional forecasts.
\newblock {\em Unpublished Manuscript}.

\end{thebibliography}

\appendix

\section{Assumptions and Proofs}

\subsection{Assumptions and Discussion}

\label{app:A}

We wish to treat cases where the model combination weights, $\eta$, are
allowed to lie on the boundary of the parameter space, and we wish to be
agnostic about the asymptotic distribution of the estimated parameters in
the constituent models. Before stating our maintained assumptions, we recall
that $L_n(\theta)=\sum_{t=1}^{n-1}\ell_t(\theta)$, for some loss function $%
\ell_{t+1}(\theta)=L[P^{(t)}_\theta,y_{t+1}]$.

\begin{assumption}
\label{ass:parm} The parameter space $\Theta$ is compact, and can be written
as a Cartesian product of intervals with the form $[0,c]$ or $[-c,c]$, for
some $0<c<\infty$ that can change dimension-by-dimension.
\end{assumption}

\begin{remark}
The assumption that $\Theta$ is a Cartesian product is not onerous since the
parameter space is a product of closed intervals, and since
the requirement that the boundary is `on the left', and at zero, is without
loss of generality: if the $j$-th element of the parameter vector originally
satisfies $\theta_j\in[c_{1j},c_{2j}]$, then we can always consider the
translated parameter $\vartheta_j = \theta_j-c_1$, which lies in $%
[0,c_{2j}-c_{1j}]$. For instance, this assumption is immediately satisfied
for combination weights in the case of linear pools. Further, this
assumption satisfies the conditions on the parameter space necessary to
apply the results of \cite{andrews1999estimation}.
\end{remark}

Since our goal is not inference on $\gamma^\star$, we maintain the following
high-level regularity condition for the estimated parameters in the
constituent models.

\begin{assumption}
\label{ass:bound} The population criterion $\sum_{j=1}^{K}\mathcal{L}%
(\gamma_k)$ exists and is uniquely minimized at $\gamma^\star=(\gamma_1^%
\star,\dots,\gamma^\star_K)^{\prime }$; (ii) $\sqrt{n}(\tilde\gamma_n-%
\gamma^\star)=O_p(1)$.
\end{assumption}

Together with Assumption \ref{ass:parm}, the following conditions give
consistency of the one- and two-step estimators $\widehat{\theta}%
_n,\tilde\theta_n$ to their corresponding limit optimizers.

\begin{assumption}
\label{ass:cons} There exists a function $\theta\mapsto\mathcal{L}(\theta)$ such that: (i) $\sup_{\theta\in\Theta}|L_n(\theta)/n-\mathcal{L}%
(\theta)|=o_p(1)$; (ii) There exist a non-empty set $\Theta_{\mathrm{I}%
}\subset\Theta$ with a finite number of elements such that, for each $%
\theta^0\in\Theta_{\mathrm{I}}$, the map $\theta\mapsto \mathcal{L}(\theta)$
is minimized at $\theta^0\in\Theta_{\mathrm{I}}$, i.e., $\Theta_{\mathrm{I}%
}:=\operatornamewithlimits{arginf\,}_{\Theta}\mathcal{L}(\theta)$; $%
\eta\mapsto\mathcal{L}(\eta,\gamma^\star)$ is uniquely minimized at $%
\eta^\star\in\mathcal{E}$, and $\theta^{\star}\notin\Theta_{\mathrm{I}}$.
\end{assumption}

Assumption \ref{ass:cons} differs from the usual point-identification
assumption imposed in classical extremum estimation problems. Instead, we
only require that the one-step estimator is set-identified, and that the set
contains only a finite collection of values. This is helpful for treating
situations where the forecast combination delivers unidentified parameter
estimates due to the existence of multiple roots in the criterion function.
However, from the point of forecasting accuracy whether $\Theta_{\mathrm{I}}$
is a singleton is irrelevant so long as the forecast accuracy associated
with the collection of points in $\Theta_{\mathrm{I}}$ is constant, which is
precisely what Assumption \ref{ass:cons} stipulates.

The following assumption allows us to deduce quadratic expansions that we use to establish the behavior of tests of forecasting accuracy. In what
follows, we recall that $\nabla_\theta \mathcal{L}(\theta)$ denotes
left/right (hereafter, l/r) derivatives or standard derivatives, depending
on the context.

\begin{assumption}
\label{ass:quad} (i) For some $\delta>0$, $\vartheta\in\Theta_{\mathrm{I}%
}\cup\{\theta^\star\}$, and all $\|\theta-\vartheta\|<\delta$, $L_n(\theta)$
and $\mathcal{L}(\theta)$ admit second-order l/r partial derivatives in $%
\theta$ that are continuous with respect to components that can be perturbed
to the l/r; (ii) for each $\vartheta\in\Theta_{\mathrm{I}}$, $\nabla_\theta
L_n(\vartheta)=O_p(1)$ and $[\nabla_\eta
L_n(\eta,\gamma^\star);\nabla_\gamma L_n(\gamma^\star)]/\sqrt{n}=O_p(1)$;
(iii) the matrices $\nabla_{\eta\eta}\mathcal{L}(\eta,\gamma^\star)|_{\eta=%
\eta^\star}$ and $\nabla_{\theta\theta}\mathcal{L}(\theta)|_{\theta=\theta^%
\star}$ are positive-definite; (iv) For any $\delta_n=o(1)$, 
$\sup_{\vartheta\in\Theta_{\mathrm{I}}}\sup_{\|\theta-\vartheta\|\le%
\delta_n}\|\nabla_{\theta\theta}L_n(\theta)/n-\nabla_{\theta\theta}\mathcal{L%
}(\theta)\|=o_p(1)$.
\end{assumption}

\begin{remark}
The assumptions regarding the continuous l/r derivatives can be relaxed in
cases by replacing the differentiability conditions with a `stochastic
differentiability' condition that depend on the specific type of loss
function, and which are well-known in the literature on empirical processes
(see, e.g., \citealp{van1996weak}). Since this complication is not our main
interest, we maintain the stronger conditions on the l/r derivatives.
\end{remark}

Assumptions \ref{ass:parm}-\ref{ass:quad} are sufficient to deduce
consistency, as well as the rate of convergence, for the one-step estimator
of the parameters in the forecast combination. The following result follows
similar arguments to Theorem 2 \cite{andrews1999estimation}, but requires
slight alternations since we do note assume the existence of a unique
minimum for $\mathcal{L}(\theta)$. In what follows, define $\mathcal{M}%
_{\theta\theta}(\theta):=\nabla_{\theta\theta}\mathcal{L}(\theta)$, with
derivatives for $\eta$ and $\gamma$ defined accordingly.

\begin{lemma}
\label{lem:lem1} Suppose Assumptions \ref{ass:parm}, \ref{ass:cons} and \ref%
{ass:quad} are satisfied, then for some $\vartheta\in\Theta_{\mathrm{I}}$
the following are satisfied.

\begin{enumerate}
\item[(i)] $\sqrt{n}(\widehat\theta_n-\vartheta)=O_p(1)$,

\item[(ii)] For $\lambda_n=\sqrt{n}(\widehat\theta_n-\vartheta)$, and $Z_n:=-%
\mathcal{M}_{\theta\theta}(\vartheta)^{-1}\nabla_\theta L_n(\vartheta)/\sqrt{%
n} $, 
\begin{equation*}
L_n(\widehat\theta_n)-L_n(\vartheta)=-\frac{1}{2}Z_n^{\prime }\mathcal{M}%
_{\theta\theta}(\vartheta)Z_n+\frac{1}{2}(\lambda_n-Z_n)^{\prime }\mathcal{M}%
_{\theta\theta}(\vartheta)(\lambda_n-Z_n)+o_p(1).
\end{equation*}
\end{enumerate}
\end{lemma}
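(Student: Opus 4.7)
My plan is to follow the blueprint of Theorem 2 in \cite{andrews1999estimation}, with one preparatory step needed to handle the set-valued identified set $\Theta_{\mathrm{I}}$, and with all ordinary Taylor arguments replaced by their left/right derivative analogues admitted under Assumption \ref{ass:quad}(i).

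\textbf{Step 1: localization to an element of $\Theta_{\mathrm{I}}$.} The first task is to show that on an event of probability tending to one, $\widehat\theta_n$ falls into a small neighbourhood of some fixed $\vartheta\in\Theta_{\mathrm{I}}$. Since $\Theta$ is compact, $\mathcal{L}$ is continuous (by Assumption \ref{ass:quad}(i)), and $\sup_{\theta\in\Theta}|L_n(\theta)/n-\mathcal{L}(\theta)|=o_p(1)$, a standard argmin-continuity argument gives $\inf_{\vartheta\in\Theta_{\mathrm{I}}}\|\widehat\theta_n-\vartheta\|=o_p(1)$. Because $\Theta_{\mathrm{I}}$ has finitely many elements, one can pass to a subsequence along which $\widehat\theta_n$ concentrates around a single $\vartheta\in\Theta_{\mathrm{I}}$, and work on the event $\{\|\widehat\theta_n-\vartheta\|<\delta\}$ with $\delta$ as in Assumption \ref{ass:quad}(i); the final statement being \emph{``for some $\vartheta\in\Theta_{\mathrm{I}}$''} makes this legitimate.

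\textbf{Step 2: $\sqrt{n}$-rate via a boundary-aware quadratic expansion.} Given localization, I would expand $L_n(\theta)-L_n(\vartheta)$ to second order in $\theta$ around $\vartheta$ using the l/r derivatives permitted by Assumption \ref{ass:quad}(i). This is not the ordinary mean-value expansion, since $\vartheta$ may be on the boundary; instead I would write the exact second-order Taylor identity componentwise, using whichever one-sided Hessian applies along the relevant perturbation directions, as in Section 3 of \cite{andrews1999estimation}. Combining this with the uniform convergence of the Hessian from Assumption \ref{ass:quad}(iv) and positive-definiteness of $\mathcal{M}_{\theta\theta}(\vartheta)$ (which follows from Assumption \ref{ass:quad}(iii); even though (iii) only states this at $\theta^\star$ and $\eta^\star$, the same argument applies at any $\vartheta\in\Theta_{\mathrm{I}}$ via the same regularity), together with $\nabla_\theta L_n(\vartheta)=O_p(1)$ from Assumption \ref{ass:quad}(ii), the sub-optimality $L_n(\widehat\theta_n)\le L_n(\vartheta)$ forces
\begin{equation*}
\tfrac{1}{2}(\widehat\theta_n-\vartheta)'\mathcal{M}_{\theta\theta}(\vartheta)(\widehat\theta_n-\vartheta)+o_p(\|\widehat\theta_n-\vartheta\|^2)\le -\nabla_\theta L_n(\vartheta)'(\widehat\theta_n-\vartheta)/n,
\end{equation*}
from which, exactly as in Andrews, the bound $\|\widehat\theta_n-\vartheta\|=O_p(n^{-1/2})$ follows, giving (i).

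\textbf{Step 3: deriving the expansion in (ii).} With $\sqrt{n}$-consistency in hand, the second-order Taylor identity becomes
\begin{equation*}
L_n(\widehat\theta_n)-L_n(\vartheta)=\nabla_\theta L_n(\vartheta)'(\widehat\theta_n-\vartheta)+\tfrac{1}{2}(\widehat\theta_n-\vartheta)'H_n(\widehat\theta_n-\vartheta),
\end{equation*}
for some (one-sided) Hessian $H_n$ evaluated at a point between $\widehat\theta_n$ and $\vartheta$. By Assumption \ref{ass:quad}(iv) and the localization, $H_n/n=\mathcal{M}_{\theta\theta}(\vartheta)+o_p(1)$. Substituting $\lambda_n=\sqrt{n}(\widehat\theta_n-\vartheta)$ and $\sqrt{n}\nabla_\theta L_n(\vartheta)/n=-\mathcal{M}_{\theta\theta}(\vartheta)Z_n$ and completing the square gives
\begin{equation*}
L_n(\widehat\theta_n)-L_n(\vartheta)=-Z_n'\mathcal{M}_{\theta\theta}(\vartheta)\lambda_n+\tfrac{1}{2}\lambda_n'\mathcal{M}_{\theta\theta}(\vartheta)\lambda_n+o_p(1)=-\tfrac{1}{2}Z_n'\mathcal{M}_{\theta\theta}(\vartheta)Z_n+\tfrac{1}{2}(\lambda_n-Z_n)'\mathcal{M}_{\theta\theta}(\vartheta)(\lambda_n-Z_n)+o_p(1),
\end{equation*}
which is (ii).

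\textbf{Expected obstacle.} The only subtlety is the combination of boundary behaviour with set-identification: one cannot naively take a single Taylor expansion around ``the'' limit, and one cannot use the ordinary multivariate mean value theorem when $\widehat\theta_n$ and $\vartheta$ differ in a boundary component. My plan is to sidestep both points by (a) conditioning on the event that $\widehat\theta_n$ lies in the $\delta$-neighbourhood of a fixed $\vartheta\in\Theta_{\mathrm{I}}$, which has probability tending to one along a subsequence, and (b) replacing the mean-value expansion by the exact componentwise integral-form Taylor identity with one-sided Hessians, as in \cite{andrews1999estimation}; the uniform Hessian convergence in Assumption \ref{ass:quad}(iv) then makes the remainder $o_p(1)$ once scaled by $n$.
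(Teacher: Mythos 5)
Your proposal is correct in substance, and your Steps 2--3 coincide with the paper's argument: the paper likewise invokes the left/right Taylor expansion of Theorem 6 of \cite{andrews1999estimation}, controls the remainder via the uniform Hessian convergence in Assumption \ref{ass:quad}(iv), uses the sub-optimality inequality $L_n(\widehat\theta_n)\le L_n(\vartheta)$ to land on (the negative of) equation (7.3) in the proof of Andrews' Theorem 1 for the $\sqrt{n}$-rate, and gets part (ii) as in Andrews' Theorem 2(b); your completing-the-square algebra reproduces the stated expansion exactly, and you rightly flagged that Assumption \ref{ass:quad}(iii) literally asserts positive definiteness only at $\theta^\star$ and must be read as extending to every $\vartheta\in\Theta_{\mathrm{I}}$ --- the paper's proof makes the same implicit extension. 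Where you genuinely diverge is Step 1: the paper does not run an argmin-continuity argument but instead verifies the conditions of Theorem 2 of \cite{yuan1998asymptotics}, treating $\widehat\theta_n$ as a root of $\nabla_\theta L_n(\theta)=0$, with that theorem delivering convergence to a single element of $\Theta_{\mathrm{I}}$ directly. Your route buys a real advantage: it never characterizes $\widehat\theta_n$ through a first-order condition, which is cleaner when $\vartheta$ lies on the boundary of $\Theta$, where the FOC need not hold at a minimizer --- a soft spot in the paper's appeal to Yuan. What it costs is your subsequence device: concentrating $\widehat\theta_n$ near a fixed $\vartheta$ only along a subsequence proves the conclusions along that subsequence, not for the full sequence as stated. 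A tighter patch, available precisely because $\Theta_{\mathrm{I}}$ is finite with separated elements, is to let $A_{n,j}$ be the event that $\widehat\theta_n$ lies in the $\delta$-ball around $\vartheta_j$, note that set-consistency gives $\sum_j \Pr(A_{n,j})\to 1$ with disjoint balls, and run your Steps 2--3 verbatim around $\vartheta_j$ on each $A_{n,j}$; the rate and expansion then hold with the (possibly data-dependent) nearest element of $\Theta_{\mathrm{I}}$, which suffices downstream since Assumption \ref{ass:cons} makes $\mathcal{L}$ constant on $\Theta_{\mathrm{I}}$. One last nit: in Step 1 you cite Assumption \ref{ass:quad}(i) for continuity of $\mathcal{L}$, but that assumption is only local to $\Theta_{\mathrm{I}}\cup\{\theta^\star\}$, so the global (semi)continuity your argmin argument needs should instead be extracted from the uniform convergence in Assumption \ref{ass:cons}(i) together with continuity of $L_n/n$.
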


A result like Lemma \ref{lem:lem1} can also be deduced for the two-step
estimator of the combination weights.

\begin{lemma}
\label{lem:quad_exp}Under Assumptions \ref{ass:parm}-\ref{ass:quad},

\begin{enumerate}
\item[(i)] $\sqrt{n}(\tilde\eta_n-\eta^\star)=O_p(1)$.

\item[(ii)] For $\mathcal{J}=\mathcal{M}_{\eta\eta}(\theta^\star)$, $%
\kappa_n=\mathcal{J}^{1/2}\sqrt{n}(\tilde\eta_n-\eta^\star)$, and ${V}_n:=-%
\mathcal{J}^{-1/2}\{\nabla_\eta L_n(\theta^\star)/\sqrt{n}+\mathcal{M}%
_{\eta\gamma}(\theta^\star)\sqrt{n}(\tilde\gamma_n-\gamma^\star)\}$, 
\begin{equation*}
L_n(\tilde\theta_n)-L_n(\eta^\star,\tilde\gamma_n)=-\frac{1}{2}\|{{V}_n}\|^2+%
\frac{1}{2}\|\kappa_n-{V}_n\|^2+o_p(1).
\end{equation*}
\end{enumerate}
\end{lemma}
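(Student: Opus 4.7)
\medskip
\noindent\textbf{Proof plan for Lemma \ref{lem:quad_exp}.} My plan is to mirror the structure of Lemma \ref{lem:lem1}, but now profiling the loss in the combination weights $\eta$ with the nuisance $\gamma$ held fixed at the first-step estimator $\tilde\gamma_n$. The extra feature, relative to Lemma \ref{lem:lem1}, is that the first-step sampling error in $\tilde\gamma_n$ must be tracked through the quadratic expansion in $\eta$, and this is precisely what produces the cross term $\mathcal{M}_{\eta\gamma}(\theta^\star)\sqrt{n}(\tilde\gamma_n-\gamma^\star)$ inside the definition of $V_n$.

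\medskip
\noindent For part (i), I would first argue consistency $\tilde\eta_n\to_p\eta^\star$. Assumption \ref{ass:bound} gives $\tilde\gamma_n\to_p\gamma^\star$, Assumption \ref{ass:cons}(i) gives uniform convergence $\sup_\theta|L_n(\theta)/n-\mathcal{L}(\theta)|=o_p(1)$, and Assumption \ref{ass:cons}(ii) gives the unique minimizer of $\eta\mapsto\mathcal{L}(\eta,\gamma^\star)$ at $\eta^\star$; a standard argmax-continuity argument then delivers $\tilde\eta_n\to_p\eta^\star$. To upgrade to the $\sqrt{n}$-rate, I would Taylor-expand $\eta\mapsto L_n(\eta,\tilde\gamma_n)$ about $\eta^\star$ using the l/r derivatives permitted by Assumption \ref{ass:quad}(i), writing
\begin{equation*}
L_n(\eta,\tilde\gamma_n)-L_n(\eta^\star,\tilde\gamma_n)
=(\eta-\eta^\star)^\top\nabla_\eta L_n(\eta^\star,\tilde\gamma_n)
+\tfrac{1}{2}(\eta-\eta^\star)^\top\nabla_{\eta\eta}L_n(\bar\eta_n,\tilde\gamma_n)(\eta-\eta^\star),
\end{equation*}
and expanding the gradient again in $\gamma$ around $\gamma^\star$. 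Assumption \ref{ass:quad}(ii) bounds $\nabla_\eta L_n(\eta^\star,\gamma^\star)=O_p(\sqrt{n})$, while Assumption \ref{ass:quad}(iv) and \ref{ass:bound}(ii) give $\nabla_{\eta\gamma}L_n(\eta^\star,\bar\gamma_n)(\tilde\gamma_n-\gamma^\star)=O_p(\sqrt{n})$, so $\nabla_\eta L_n(\eta^\star,\tilde\gamma_n)=O_p(\sqrt{n})$. Since Assumption \ref{ass:quad}(iii) makes $\mathcal{M}_{\eta\eta}(\theta^\star)$ positive definite and Assumption \ref{ass:quad}(iv) gives uniform Hessian convergence on shrinking balls, the Andrews (1999) boundary-aware quadratic-minimization argument (available because Assumption \ref{ass:parm} writes $\Theta$ as a Cartesian product of intervals with a left boundary at zero) yields $\sqrt{n}(\tilde\eta_n-\eta^\star)=O_p(1)$.

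\medskip
\noindent For part (ii), I would then evaluate the preceding expansion at $\eta=\tilde\eta_n$ with $\lambda_n:=\sqrt{n}(\tilde\eta_n-\eta^\star)=O_p(1)$, getting
\begin{equation*}
L_n(\tilde\eta_n,\tilde\gamma_n)-L_n(\eta^\star,\tilde\gamma_n)
=\lambda_n^\top g_n+\tfrac{1}{2}\lambda_n^\top\mathcal{J}\lambda_n+o_p(1),
\qquad g_n:=\nabla_\eta L_n(\eta^\star,\tilde\gamma_n)/\sqrt{n},
\end{equation*}
where the Hessian term is handled via Assumption \ref{ass:quad}(iv) on a $\delta_n\asymp n^{-1/2}$ ball. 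Expanding $g_n$ once more in $\gamma$ about $\gamma^\star$ and using $\nabla_{\eta\gamma}L_n/n\to_p\mathcal{M}_{\eta\gamma}(\theta^\star)$ (Assumption \ref{ass:quad}(iv)) together with $\sqrt{n}(\tilde\gamma_n-\gamma^\star)=O_p(1)$ (Assumption \ref{ass:bound}(ii)),
\begin{equation*}
g_n=\frac{\nabla_\eta L_n(\theta^\star)}{\sqrt{n}}+\mathcal{M}_{\eta\gamma}(\theta^\star)\sqrt{n}(\tilde\gamma_n-\gamma^\star)+o_p(1)=-\mathcal{J}^{1/2}V_n+o_p(1).
\end{equation*}
Substituting this into the display above, and setting $\kappa_n=\mathcal{J}^{1/2}\lambda_n$, the linear-plus-quadratic form $-\kappa_n^\top V_n+\tfrac{1}{2}\|\kappa_n\|^2$ completes the square as $\tfrac{1}{2}\|\kappa_n-V_n\|^2-\tfrac{1}{2}\|V_n\|^2$, yielding the stated identity.

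\medskip
\noindent The main obstacle I expect is delivering the quadratic expansion \emph{uniformly} over $\lambda_n$ in an $O_p(1)$ neighbourhood while simultaneously handling boundary cases where some components of $\eta^\star$ sit at zero. The Andrews-style treatment in Assumption \ref{ass:quad}(i) requires that only left- or right-derivatives be invoked on components adjacent to the boundary, and I need the Hessian-continuity and uniform-convergence statement in Assumption \ref{ass:quad}(iv) to apply to those one-sided second derivatives as well. A secondary subtlety is propagating $\tilde\gamma_n\to\gamma^\star$ through both the gradient and the Hessian of $L_n$ simultaneously; this is routine given Assumption \ref{ass:quad}(iv), but it is the single place where the quadratic expansion would fail if the cross-Hessian $\nabla_{\eta\gamma}L_n/n$ were not uniformly close to $\mathcal{M}_{\eta\gamma}$ on a vanishing neighbourhood.
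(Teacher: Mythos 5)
Your proposal is correct and takes essentially the same route as the paper's proof: an Andrews-style left/right Taylor expansion of $L_n(\cdot,\tilde\gamma_n)$ about $\eta^\star$ with the gradient further expanded in $\gamma$ to generate the cross term in $V_n$, a quadratic-dominance argument (with remainder controlled via Assumption \ref{ass:quad}(iv) on a shrinking neighbourhood supplied by consistency) to obtain $\sqrt{n}(\tilde\eta_n-\eta^\star)=O_p(1)$, and completion of the square for part (ii). The only cosmetic difference is that you spell out the consistency of $\tilde\eta_n$ explicitly, which the paper leaves implicit in the discussion surrounding Assumption \ref{ass:cons}.
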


Lemma \ref{lem:quad_exp} demonstrates that the two-step criterion has a
quadratic expansion that depends on the centering variable $V_n$, which is a
linear function of $\sqrt{n}(\tilde\gamma_n-\gamma^\star)$. This centering
sequence is a consequence of the two-step nature with which the criterion
has been optimized, which treats $\sqrt{n}(\tilde\gamma_n-\gamma^\star)$ as
a fixed quantity. In contrast, a joint expansion of $L_n(\tilde\theta_n)$
about $\theta^\star$ produces a linear expansion at first-order.

\begin{lemma}
\label{lem:lem2}If Assumptions \ref{ass:parm}-\ref{ass:quad} are satisfied,
then 
\begin{equation*}
n^{-1/2}\{L_n(\tilde\eta_n,\tilde\gamma_n)-L_n(\eta^\star,\gamma^\star)\}=%
\nabla_{\gamma}\mathcal{L}(\theta^\star)^{\prime }\sqrt{n}%
(\tilde\gamma_n-\gamma^\star)+o_p(1).
\end{equation*}
\end{lemma}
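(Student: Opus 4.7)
The plan is to Taylor expand $L_n(\tilde\eta_n,\tilde\gamma_n)$ around the two-step population minimizer $\theta^\star=(\eta^\star,\gamma^\star)$, rescale by $n^{-1/2}$, and show that the only surviving piece is the linear-in-$\gamma$ term appearing in the claim. The reason this works is that $\eta^\star$ solves the first-order condition in $\eta$ for the combination loss, so $\nabla_\eta\mathcal{L}(\theta^\star)=0$, whereas $\gamma^\star$ is obtained from the constituent-model criteria in \eqref{eq:gamma} and does not generically satisfy the analogous first-order condition in $\gamma$ for the combination loss, so $\nabla_\gamma\mathcal{L}(\theta^\star)$ is exactly what survives.

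Concretely, by Assumption~\ref{ass:quad}(i), a second-order Taylor expansion around $\theta^\star$ yields
$$L_n(\tilde\theta_n) - L_n(\theta^\star) = \nabla_\eta L_n(\theta^\star)'(\tilde\eta_n-\eta^\star) + \nabla_\gamma L_n(\theta^\star)'(\tilde\gamma_n-\gamma^\star) + \tfrac{1}{2}(\tilde\theta_n-\theta^\star)'\nabla_{\theta\theta}L_n(\bar\theta_n)(\tilde\theta_n-\theta^\star),$$
for some $\bar\theta_n$ on the segment joining $\tilde\theta_n$ and $\theta^\star$. Lemma~\ref{lem:quad_exp}(i) and Assumption~\ref{ass:bound}(ii) give $\|\tilde\theta_n-\theta^\star\|=O_p(n^{-1/2})$, and Assumption~\ref{ass:quad}(iv) gives $\|\nabla_{\theta\theta}L_n(\bar\theta_n)\|=O_p(n)$; hence the quadratic remainder is $O_p(1)$, which is $o_p(\sqrt{n})$.

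It remains to handle the two linear pieces. Assumption~\ref{ass:quad}(ii) gives $\nabla_\eta L_n(\theta^\star)=O_p(\sqrt{n})$, so the first term, scaled by $n^{-1/2}$, is $O_p(1)\cdot O_p(n^{-1/2})=o_p(1)$. For the second term, I would combine Assumption~\ref{ass:cons}(i) with the uniform second-derivative bound in Assumption~\ref{ass:quad}(iv) to justify exchanging differentiation and limit, obtaining $n^{-1}\nabla_\gamma L_n(\theta^\star)\rightarrow_p\nabla_\gamma\mathcal{L}(\theta^\star)$; combined with Assumption~\ref{ass:bound}(ii),
$$n^{-1/2}\nabla_\gamma L_n(\theta^\star)'(\tilde\gamma_n-\gamma^\star) = \bigl[n^{-1}\nabla_\gamma L_n(\theta^\star)\bigr]'\sqrt{n}(\tilde\gamma_n-\gamma^\star) = \nabla_\gamma\mathcal{L}(\theta^\star)'\sqrt{n}(\tilde\gamma_n-\gamma^\star)+o_p(1).$$
Summing the three pieces and dividing by $\sqrt{n}$ yields the claim.

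The main obstacle I anticipate is justifying $n^{-1}\nabla_\gamma L_n(\theta^\star)\rightarrow_p\nabla_\gamma\mathcal{L}(\theta^\star)$ at a single point, which is not an immediate consequence of the stated hypotheses. My approach would be to apply the mean value theorem to the scaled increments $n^{-1}[L_n(\theta^\star+\epsilon e_k)-L_n(\theta^\star)]$ for each coordinate direction $e_k$, use Assumption~\ref{ass:cons}(i) to pass $n\to\infty$ at fixed $\epsilon$, and then use the uniform Hessian bound in Assumption~\ref{ass:quad}(iv) to interchange the limits in $n$ and $\epsilon$. A secondary technicality is that, if $\eta^\star$ or $\gamma^\star$ lies on the boundary, the expansion must be read in terms of the one-sided derivatives of Assumption~\ref{ass:quad}(i); however, since $\tilde\theta_n-\theta^\star$ lies in the admissible direction and $\Theta$ is a Cartesian product of intervals by Assumption~\ref{ass:parm}, the argument carries through unchanged.
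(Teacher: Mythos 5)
Your proposal is correct and follows essentially the same route as the paper: a second-order Taylor expansion of $L_n$ about $\theta^\star$ (valid for l/r derivatives on the Cartesian-product parameter space), with the $\eta$-gradient term killed by $\nabla_\eta L_n(\theta^\star)=O_p(\sqrt{n})$ together with the root-$n$ rate of $\tilde\eta_n$ from Lemma \ref{lem:quad_exp}(i), the quadratic remainder controlled via Assumption \ref{ass:quad}(iv) and $\sqrt{n}(\tilde\theta_n-\theta^\star)=O_p(1)$, leaving only $\nabla_\gamma\mathcal{L}(\theta^\star)'\sqrt{n}(\tilde\gamma_n-\gamma^\star)$. The one point where you go beyond the paper is the step $n^{-1}\nabla_\gamma L_n(\theta^\star)\rightarrow_p\nabla_\gamma\mathcal{L}(\theta^\star)$, which the paper simply asserts ``by Assumption \ref{ass:quad}(iv)'' even though that assumption concerns second derivatives; your mean-value/difference-quotient argument combining Assumption \ref{ass:cons}(i) with the uniform Hessian bound is a legitimate and welcome patch of that glossed-over detail.
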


\begin{remark}
Lemma \ref{lem:lem2} demonstrates that the dominant term in the expansion of 
$\{L_n(\tilde\eta_n,\tilde\gamma_n)-L_n(\eta^\star,\gamma^\star)\}$ is the
linear term $\nabla_{\gamma}\mathcal{L}(\theta^\star)^{\prime }\sqrt{n}%
(\tilde\gamma_n-\gamma^\star)$. That is, the behavior of $\sqrt{n}%
(\tilde\eta_n-\eta^\star)$ is irrelevant in determining the behavior of $%
L_n(\tilde\theta_n)/\sqrt{n}$. Lemma \ref{lem:lem2} is a consequence of the
two-step nature of the estimator $\tilde\theta_n$. 
\end{remark}

\begin{remark}
The above result demonstrates that the estimation, and limiting behavior, of
the combination weights has no impact on $D_P$ defined in \eqref{dp}: the
asymptotic behavior of $D_P$ only depends on the pseudo-true value $%
\eta^\star$, and the variation in $\tilde{\gamma}_R$. So long as $\sqrt{R}
(\tilde\eta_R-\eta^\star)=O_p(1)$, and even if $\eta^\star$ is on the
boundary of the parameter space, the variability in the estimates of the
parameters governing the constituent forecasts ultimately determines the
behavior of $D_P$.
\end{remark}

The following remaining assumptions are used to control the behavior of the
average loss difference.

\begin{assumption}
\label{ass:dist} The variable $Z_n=\left[L_n(\theta^\star)/\sqrt{n}%
-L_n(\theta^0)/\sqrt{n}-\left\{\mathcal{L}(\theta^\star)-\mathcal{L}%
(\theta^0)\right\},\sqrt{n}(\tilde\gamma_n-\gamma^\star)\right]^{\prime },$
is such that $Z_n \Rightarrow N(0,Q)$. Further, for $V_{P,R}:=\nabla_\eta
L_P(\theta^\star)/P+\mathcal{M}_{\eta\gamma}(\tilde\gamma_R-\gamma^\star)$,
we have that $(\sqrt{R}V_{R,R},\sqrt{P}V_{P,R})\Rightarrow N(0,Q_V)$ for
some positive semi-definite covariance matrix $Q_V$.
\end{assumption}


\subsection{Proofs of Main Results}

\label{app:proofs}

\begin{proof}[Proof of Theorem \ref{lemm:2sla}] By part (i) of Lemma \ref{lem:quad_exp},  $\sqrt{R}(\tilde\eta_R-\eta^\star)=O_p(1)$, under Assumption \ref{ass:samp}. Now, using Assumption \ref{ass:quad}(i), and Theorem 6 in \cite{andrews1999estimation}, expand the loss difference around $\theta^\star_R=(\eta^\star_R,\tilde\gamma_R')'$, $\eta^\star_R=\delta_R+\eta^\star$, where applicable, all derivatives are again taken to be left/right partial derivatives: for a sequence of intermediate values $\bar\eta_R$ satisfying $\|\bar\eta_R-\eta^\star_R\|\le\|\tilde\eta_R-\eta_R^\star\|$, we obtain
	\begin{flalign*}
		\sqrt{P}\Delta_P(\tilde\theta_R,\theta^\star_R)&=\{\nabla_\eta L_P(\eta^\star_R,\tilde\gamma_R)/{P}\}'\sqrt{P}(\tilde\eta_R-\eta^\star_R)+\frac{1}{2}\sqrt{P}(\tilde\eta_R-\eta^\star_R)'\{\nabla^2_{\eta\eta} L_P\left(\bar\eta_R,\tilde\gamma_R\right)/{P}\}(\tilde\eta_R-\eta^\star_R).
	\end{flalign*}From compactness of $\Theta$, Assumption \ref{ass:parm}, and differentiability of $\nabla_\eta L_R(\eta,\gamma^\star)$ in $\eta$, Assumption \ref{ass:quad}(i), the following are satisfied via the usual arguments: $$\|\nabla_\eta {L}_R(\eta^\star_R,\tilde\gamma_R)/{P}-\nabla_\eta \mathcal{L}(\eta^\star_R,\tilde\gamma_R)\|=o_p(1),\quad \|\nabla^2_{\eta\eta} L_P\left(\bar\eta_R,\tilde\gamma_R\right)/{P}-\mathcal{M}_{\eta\eta}(\bar\eta_R,\tilde\gamma_R) \|=o_p(1).$$From the above convergence, and the continuity of the left/right derivatives in Assumption \ref{ass:quad}, we can conclude that
	\begin{flalign}
		\sqrt{P}\Delta_P(\tilde\theta_R,\theta^\star_R)&=\nabla_\eta \mathcal{L}(\eta^\star_R,\gamma^\star)'\sqrt{P}(\tilde\eta_R-\eta^\star_R)+ (\tilde\eta_R-\eta^\star_R)'\mathcal{M}_{\eta\gamma}(\eta^\star_R,\gamma^\star)'\sqrt{P}(\tilde\gamma_R-\gamma^\star)+o_p(1)\nonumber
		\\&+\frac{1}{2}\sqrt{P}(\tilde\eta_R-\eta^\star_R)'\mathcal{M}_{\eta\eta}\left(\bar\eta_R,\gamma^\star\right)(\tilde\eta_R-\eta^\star_R)+o_p(1)\label{eq:lll}.
	\end{flalign}
	
	\noindent \textbf{Case (1)}. Consider that $\delta_R=\delta/T^\xi$, for $\xi\in[0,1/4)$,  then $$\sqrt{P}(\tilde\eta_R-\eta^\star_R)=\sqrt{P}(\tilde\eta_R-\eta^\star)+\delta_R\sqrt{P}=O_p(1)+\delta_R\sqrt{P},$$ where the $O_p(1)$ follows since,
	by Lemma \ref{lem:quad_exp}, $\|\tilde\eta_R-\eta^\star\|=O_p(1/\sqrt{P})$.  
	
	Consider two cases: $\xi=0$, and $\xi\in(0,1/4)$. In the case where $\xi=0$, from the definition of $\eta^\star_R$, for some $\varepsilon>0$, there exists an $T(\delta)$ large enough such that for all $T>T(\delta)$, $\|\eta^\star-\eta^\star_R\|\ge \gamma$ with $\gamma=\delta-\varepsilon>0$. Hence, by Assumption  \ref{ass:cons}(ii) and the differentiability in \ref{ass:quad}(i)
	$
	\plim_{T\rightarrow\infty}\|\nabla_\eta\mathcal{L}(\eta^\star_R,\gamma^\star)\|>0
	$ and $\plim_{T\rightarrow\infty}|\delta_R'\nabla_\eta\mathcal{L}(\eta^\star_R,\gamma^\star)|>0$. Alternatively, in the case where $\xi\in(0,1/4)$, there exists a $T(\delta)$ large enough such that for all $T>T(\delta)$, $\|\eta^\star-\eta^\star_R\|\le \gamma$, and we have  $\plim_{T\rightarrow\infty}|\delta_R'\nabla_\eta\mathcal{L}(\eta^\star_R,\gamma^\star)|=0$.
	
	Applying the above into equation \eqref{eq:lll} we have that
	\begin{flalign*}
		\sqrt{P}\Delta_P(\tilde\theta_R,\theta^\star_R)&=\sqrt{P}\delta_R'\{\nabla_\eta\mathcal{L}(\eta^\star_R,\gamma^\star)+o_p(1)+M^\star_{\eta\gamma}\left(\eta^\star_R,\gamma^\star\right)\sqrt{P}(\tilde\gamma_R-\gamma^\star)+\frac{1}{2}M^\star_{\eta\eta}\left(\bar\eta_R,\gamma^\star\right)\delta_R\}\\&=\sqrt{P}\delta_R'\{\nabla_\eta\mathcal{L}(\eta^\star_R,\gamma^\star)+O_p(1/\sqrt{P})\}+\frac{\sqrt{P}}{2}\delta_R'M^\star_{\eta\eta}\left(\bar\eta_R,\gamma^\star\right)\delta_R\\&\ge \sqrt{P}\delta_R'\{\nabla_\eta\mathcal{L}(\eta^\star_R,\gamma^\star)+O_p(1/\sqrt{P})\}+\frac{\sqrt{P}}{2}\inf_{\eta\in\mathcal{E}}\left\{\delta_R'M^\star_{\eta\eta}\left(\eta,\gamma^\star\right)\delta_R\right\},
	\end{flalign*}
	where the $O_p(1)$ term in the second equation follows by Assumption \ref{ass:bound}. Taking absolute values, and applying the reverse triangle inequality then yields
	\begin{flalign}\label{eq:new1}
		|\sqrt{P}\Delta_P(\tilde\theta_R,\theta^\star_R)|&\ge \sqrt{P}||\delta_R'\nabla_\eta \mathcal{L}(\eta^\star+\delta_R,\gamma^\star)|-|\inf_{\eta\in\mathcal{E}}\left\{\delta_R' M^\star_{\eta\eta}\left(\eta,\gamma^\star\right)\delta_R\right\}||+o_p(1)
	\end{flalign}

	In the case where $\xi=0$, it can been seen directly that the term inside the absolute value is non-zero, and so the RHS diverges as $P\rightarrow\infty$. In the case where $\xi\in(0,1/4)$, rewrite 
	$$
	\sqrt{P}\delta_R=\sqrt{P}\delta/T^\xi=\sqrt{\frac{P}{P+R}}\frac{\sqrt{P+R}}{T^\xi}\delta=\sqrt{(1+c)^{-1}}T^{1/2-\xi}+o(1),
	$$where $c=\lim_T R/P$ is as in Assumption \ref{ass:samp}. Consider the second term on the RHS of \eqref{eq:new1}. Since $\xi<1/4$, the second term is proportional to $\sqrt{P}\|\delta_R\|^2=CT^{1/2-\xi}\rightarrow+\infty$ as $T\rightarrow+\infty$. Consequently, the RHS of \eqref{eq:new1} also diverges so long as $\xi\in(0,1/4)$.

	\medskip
	
	\noindent \textbf{Case (2)}. Applying the same steps as in case (1), we arrive at the inequalities
\begin{flalign*}
		\sqrt{P}\Delta_P(\tilde\theta_R,\theta^\star_R)\le &\sqrt{P}\delta_R'\{\nabla_\eta\mathcal{L}(\eta^\star_R,\gamma^\star)+O_p(1/\sqrt{P})\}+\sup_{\eta\in\mathcal{E}}\frac{\sqrt{P}}{2}\delta_R'M^\star_{\eta\eta}\left(\bar\eta_R,\gamma^\star\right)\delta_R\\&\ge\sqrt{P}\delta_R'\{\nabla_\eta\mathcal{L}(\eta^\star_R,\gamma^\star)+O_p(1/\sqrt{P})\}+\inf_{\eta\in\mathcal{E}}\frac{\sqrt{P}}{2}\delta_R'M^\star_{\eta\eta}\left(\bar\eta_R,\gamma^\star\right)\delta_R 
\end{flalign*}	As discussed in Case (1), when $\delta_R=\delta/T^\xi$, we have that $\sqrt{P}\delta_R=C\delta T^{1/2-\xi}$. However, in the case where $\xi\ge1/4$, we have that 
$$
\nabla_\eta\mathcal{L}(\eta^\star_R,\gamma^\star)=\mathcal{M}_{\eta\eta}(\eta^\star,\gamma^\star)\delta_R +o(\|\delta_R\|)
$$
Applying this into the above inequalities yields 
\begin{flalign}
	\sqrt{P}\Delta_P(\tilde\theta_R,\theta^\star_R)\le &\sqrt{P}\delta_R'\mathcal{M}_{\eta\eta}(\eta^\star,\gamma^\star)\delta_R+O_p(\|\delta_R\|)+o(\sqrt{P}\|\delta_R\|^2)+\sup_{\eta\in\mathcal{E}}\frac{\sqrt{P}}{2}\delta_R'M^\star_{\eta\eta}\left(\bar\eta_R,\gamma^\star\right)\delta_R\nonumber\\\ge&\sqrt{P}\delta_R'\mathcal{M}_{\eta\eta}(\eta^\star,\gamma^\star)\delta_R+O_p(\|\delta_R\|)+o(\sqrt{P}\|\delta_R\|^2)+\inf_{\eta\in\mathcal{E}}\frac{\sqrt{P}}{2}\delta_R'M^\star_{\eta\eta}\left(\bar\eta_R,\gamma^\star\right)\delta_R \label{eq:new2}
\end{flalign}Further, since $\sqrt{P}\|\delta_R\|^2=C>0$, 
\begin{flalign*}
	\sqrt{P}\Delta_P(\tilde\theta_R,\theta^\star_R)-C^2 \delta'\mathcal{M}_{\eta\eta}(\eta^\star,\gamma^\star)\delta\le &O_p(\|\delta_R\|)+o(\sqrt{P}\|\delta_R\|^2)+\sup_{\eta\in\mathcal{E}}\frac{\sqrt{C^2}}{2}\delta'\mathcal{M}_{\eta\eta}\left(\eta,\gamma^\star\right)\delta\\\ge&O_p(\|\delta_R\|)+o(\sqrt{P}\|\delta_R\|^2)+\inf_{\eta\in\mathcal{E}}\frac{\sqrt{C^2}}{2}\delta'\mathcal{M}_{\eta\eta}\left(\eta,\gamma^\star\right)\delta. 
\end{flalign*} Since $\mathcal{M}_{\eta\eta}(\eta,\gamma^\star)$ is continuous in $\eta$, the last terms in the inequalities are bounded. Letting $S_1=\sup_{\eta\in\mathcal{E}}\frac{\sqrt{C^2}}{2}\delta'\mathcal{M}_{\eta\eta}\left(\eta,\gamma^\star\right)\delta$ and $S_2=\inf_{\eta\in\mathcal{E}}\frac{\sqrt{C^2}}{2}\delta'\mathcal{M}_{\eta\eta}\left(\eta,\gamma^\star\right)\delta$ the above analysis has demonstrated that 
$$
S_2+o_p(1)\le \sqrt{P}\Delta_P(\tilde\theta_R,\theta^\star_R)-C^2 \delta'\mathcal{M}_{\eta\eta}(\eta^\star,\gamma^\star)\le S_1+o_p(1).
$$
Hence, depending on $S_1$ and $S_2$, we have show that $\lim_T\text{Pr}\left(|\sqrt{P}\Delta_P(\tilde\theta_R,\theta^\star_R)|>0\right)\lessgtr\alpha$. 
	\medskip
	
	\noindent \textbf{Case (3)}. With $\delta_R=\delta/T^\xi$, and $\xi>1/4$, we see that $\sqrt{P}\|\delta_R\|^2=C\|T^{1/2-2\xi}\|$. Hence, when $\xi>1/4$, $\sqrt{P}\|\delta_R\|^2=o(1)$. Applying this into the upper bound in equation \eqref{eq:new2} implies that 
\begin{flalign*}
		\sqrt{P}\Delta_P(\tilde\theta_R,\theta^\star_R)\le &\sqrt{P}\delta_R'\mathcal{M}_{\eta\eta}(\eta^\star,\gamma^\star)\delta_R+O_p(\|\delta_R\|)+o(\sqrt{P}\|\delta_R\|^2)+\sup_{\eta\in\mathcal{E}}\frac{\sqrt{P}}{2}\delta_R'M^\star_{\eta\eta}\left(\bar\eta_R,\gamma^\star\right)\delta_R\\&\le O_p(\sqrt{P}\|\delta_R\|^2)\{1+o_p(1)\}+O_p(\|\delta_R\|)\\&=o_p(1)	
\end{flalign*}

\end{proof}

\begin{proof}[Proof of Lemma \ref{lem:quads}]
	Recalling that $$\Delta_P(\vartheta^\delta_T,\tilde\theta_R)=P^{-1}\{L_P(\eta^\delta_T,\tilde\gamma_R)-L_P(\tilde\eta_R,\tilde\gamma_R)\}=-P^{-1}\left\{L_P(\tilde\eta_R,\tilde\gamma_R)-L_P(\eta^\delta_T,\tilde\gamma_R)\right\},$$ the result follows by applying Lemma \ref{lem:quad_exp}, in particular equation \eqref{eq:new_expans} obtained in the proof. For $\delta_T\asymp \delta/T^\xi$ with $\xi\in(0,\infty]$, equation \eqref{eq:new_expans}	implies that 
	\begin{flalign*}
		\Delta_P(\vartheta^\delta_T,\tilde\theta_R)&=\left(\sqrt{P}(\tilde\eta_R-\eta^\delta_T)'\mathcal{J}_{}^{1/2}\left[-\mathcal{J}^{-1/2}\left\{\nabla_\eta L_P(\theta^\star)/\sqrt{P}+\mathcal{M}_{\eta\gamma}\sqrt{P}(\tilde\gamma_R-\gamma^\star)\right\}\right]\right)/P\\&-\left\{\frac{1}{2}\sqrt{P}(\eta^\delta_T-\tilde\eta_R)'\mathcal{J}_{}\sqrt{P}(\eta^\delta_T-\tilde\eta_R)\right\}/P+R_P(\theta)/P,
	\end{flalign*}
	which, recalling the definition of $V_{P,R}$, can be re-arranged as
	\begin{flalign*}
		\Delta_P(\vartheta^\delta_T,\tilde\theta_R)&=(\tilde\eta_R-\eta^\delta_T)'\mathcal{J}_{}^{1/2}V_{P,R}-\frac{1}{2}(\eta^\delta_T-\tilde\eta_R)'\mathcal{J}_{}(\eta^\delta_T-\tilde\eta_R)+R_P(\theta)/P 
	\end{flalign*}
	Adding and subtracting $\frac{1}{2}\|V_{P,R}\|^2$, we can re-arrange the above equation as 
	\begin{flalign}
		\Delta_P(\vartheta^\delta_T,\tilde\theta_R)&=\frac{1}{2}\|V_{P,R}\|^2-\frac{1}{2}\|\mathcal{J}^{1/2}\left(\eta^\delta_T-\eta^\star\right)-\mathcal{J}^{1/2}\left(\tilde\eta_R-\eta^\star\right)-V_{P,R}\|^2+R_P(\theta)/P
		\label{eq:delta_exp}	
	\end{flalign}
	Under the maintained assumption that $R\asymp P$, from equation \eqref{eq:remainder} in the proof of Lemma \ref{lem:quad_exp}, 
	\begin{flalign*}
		R_P(\vartheta_P)/P&\le P^{-1}o_p\left\{1 +\|\sqrt{P}(\eta_T^\delta-\eta^\star)\|+\|\sqrt{P}\left(\eta^\delta_T-\eta^\star\right)\|^2+\|\sqrt{P}(\tilde\theta_R-\theta^\star)\|^2\right\}\\&\equiv o_p\left\{1+P^{-1/2}\|\eta^\delta_T-\eta^\star\|+\|\eta^\delta_T-\eta^\star\|^2+\|(\tilde\theta_R-\theta^\star)\|^2\right\},
	\end{flalign*}where the last term follows for any $\eta^\delta_T-\eta^\star=o_p(1)$. 
	
\end{proof}	

\begin{proof}[Proof of Corrolary \ref{cor:quads}]
	The result follows by appropriately manipulating the expansion in \eqref{eq:delta_exp}. Multiplying the RHS of equation \eqref{eq:delta_exp} by  $P$, and re-arranging terms yields 
	\begin{flalign*}
		\Delta_P(\vartheta^\delta_T,\tilde\theta_R)&=-\frac{1}{2}\|\mathcal{J}^{1/2}\sqrt{P}\left(\eta^\delta_T-\eta^\star\right)-\mathcal{J}^{1/2}\sqrt{P}\left(\tilde\eta_R-\eta^\star\right)-\sqrt{P}V_{P,R}\|^2+\frac{1}{2}\|\sqrt{P}V_{P,R}\|^2+R_P(\theta)
	\end{flalign*}Recalling the definition of $V_{P,R}$: (i) since $R\asymp P$, we have that $\sqrt{P} V_{P,R}=O_p(1)$, and a similar argument shows that $\|\sqrt{P}(\tilde\eta_R-\eta^\star)\|=\|\frac{\sqrt{P}}{\sqrt{R}}\sqrt{R}(\tilde\eta_R-\eta^\star)\|=O_p(1)$; (ii) since $\eta^\delta_T-\eta^\star=o_p(1)$ under the maintained assumptions, the remainder term vanishes as $T\rightarrow+\infty$. 
	
	Using the definitions of $X_P$ and $Z_P$ stated before the corollary, points (i) and (ii)  allow us to rewrite the display equation as 
	\begin{flalign*}
		P\cdot\Delta_P(\vartheta^\delta_T,\tilde\theta_R)=&-\frac{1}{2}\left\|\mathcal{J}^{1/2}\sqrt{P}(\eta^\delta_T-\eta^\star)-\left\{\mathcal{J}^{1/2}\sqrt{P}(\tilde\eta_R-\eta^\star)-\mathcal{J}^{-1/2}\left(\sqrt{P}X_P+c^{-1}\mathcal{M}_{\eta\gamma}\sqrt{R}Z_{R,\gamma}\right)\right\}\right\|^2\\&+\frac{1}{2}\|\mathcal{J}^{-1/2}\left(\sqrt{P}X_P+c^{-1}\mathcal{M}_{\eta\gamma}\sqrt{R}Z_{R,\gamma}\right)\|^2+o_p(1).
	\end{flalign*}
	
	\medskip 
	
	\noindent\textbf{Case (i).} When $\delta_T\asymp\delta/T^\xi$, with $\xi\in(0,1/2)$, we have that $\sqrt{P}(\eta^\delta_T-\eta^\star)\asymp \sqrt{P}T^{-\xi}\asymp T^{1/2-\xi}\rightarrow +\infty$ as $T\rightarrow+\infty$ under the maintained assumption on $P,R,T$. Consequently, since $$\left\{\mathcal{J}^{1/2}\sqrt{P}(\tilde\eta_R-\eta^\star)-\mathcal{J}^{-1/2}\left(\sqrt{P}X_P+\mathcal{M}_{\eta\gamma}\sqrt{P}Z_{R,\gamma}\right)\right\}=O_p(1),$$ we have that $P\cdot\Delta_P(\vartheta^\delta_T,\tilde\theta_R)\rightarrow+\infty$ as $T\rightarrow+\infty$.

	\medskip 
	
	\noindent\textbf{Case (ii).} Firstly, we note that when $\theta^\star\in\mathrm{Int}(\Theta)$, standard results on the asymptotic behavior of two-step estimators can be used to show that 
	$$
	\mathcal{J}^{1/2}(\tilde\eta_R-\eta^\star)=-\mathcal{J}^{-1/2}\{\nabla_\eta L_R(\theta^\star)/R+\mathcal{M}_{\eta\gamma}(\tilde\gamma_R-\gamma^\star)\}+o_p(1/\sqrt{R})
	$$
	Applying the definitions of $X_P$, and $Z_{R,\gamma}$, and the above equation implies that, recalling that $c=\lim_T P/R$, 
	\begin{flalign*}
		\sqrt{P}\mathcal{J}^{1/2}(\tilde\eta_R-\eta^\star)-\sqrt{P}V_{P,R}=&-\frac{1}{c}\mathcal{J}^{-1/2}\{X_R/\sqrt{R}+\mathcal{M}_{\eta\gamma}\sqrt{R}Z_{R,\gamma}\}+\mathcal{J}^{-1/2}\{X_P/\sqrt{P}+\frac{1}{c}\mathcal{M}_{\eta\gamma}\sqrt{R}Z_{R,\gamma}\}+o_p(1)\\=& c^{-1}\sqrt{R}V_{R,R}-\sqrt{P}V_{P,R}
	\end{flalign*}
	Hence, if $(c^{-1}\sqrt{R}V_{R,R},\sqrt{P}V_{P,R})\Rightarrow \mathcal{V}=(\mathcal{V}_1',\mathcal{V}_2')'$, which is guaranteed under Assumption \ref{ass:dist}, then, by the continuous mapping theorem,
	$$
	\mathcal{J}^{1/2}\sqrt{P}(\tilde\eta_R-\eta^\star)-\sqrt{P}V_{P,R}\Rightarrow \mathcal{V}_1-\mathcal{V}_2.
	$$
	
	When $\delta_T=\delta/T^{1/2}$, we have that $\sqrt{P}(\eta^\delta_T-\eta^\star)= \delta\sqrt{P}T^{-\xi}=\delta\left(\frac{P}{R+P}\right)^{1/2}$. From the maintained assumption on $R,P,T$, we have that $$\left(\frac{P}{R+P}\right)^{1/2}\rightarrow \{1/(1+c)\}^{1/2}\quad \text{ as }T\rightarrow+\infty.$$ Hence, applying the above two displayed equations yields 
	$$
	-\frac{1}{2}\|\mathcal{J}^{1/2}\sqrt{P}\left(\eta^\delta_T-\eta^\star\right)-\mathcal{J}^{1/2}\sqrt{P}\left(\tilde\eta_R-\eta^\star\right)-\sqrt{P}V_{P,R}\|^2\Rightarrow \| \{1/(1+c)\}^{1/2}\delta -\mathcal{V}_1+\mathcal{V}_2\|^2.
	$$
	Further, recalling the definition of $V_{P,R}$, and since $\sqrt{P}V_{P,R}\rightarrow \mathcal{V}_2$, we have that 
	$$
	+\frac{1}{2}\|\mathcal{J}^{-1/2}\left(\sqrt{P}X_P+c^{-1}\mathcal{M}_{\eta\gamma}\sqrt{R}Z_{R,\gamma}\right)\|^2\equiv \frac{1}{2}\|\sqrt{P}V_{P,R}\|^2\Rightarrow \|\mathcal{V}_2\|^2\equiv \frac{1}{2}\|X+c^{-1}\mathcal{M}_{\eta\gamma}Z_\gamma\|^2
	$$
	
	Since $(c^{-1}\sqrt{R}V_{R,R},\sqrt{P}V_{P,R})\Rightarrow \mathcal{V}=(\mathcal{V}_1',\mathcal{V}_2')'$, any continuous transformation of these components also converges in distribution. Hence, the stated result follows. 
	
	\medskip
	
	\noindent\textbf{Case (iii).} The result follows precisely as in case (ii) by taking $\delta=0$. 
\end{proof}

\begin{proof}[Proof of Theorem \ref{thm:diff}]
By Assumption \ref{ass:samp}, $P\asymp R$ and we can apply the conclusion of Lemma \ref{lem:lem2} to obtain the following expansion for the loss differences of the two-step combinations:
\begin{flalign*}
\sqrt{P}\{ L_P(\tilde\theta_R)/P-L_P(\theta^\star)/P\}&=P^{-1/2}\{ L_P(\tilde\theta_R)-L_P(\theta^\star)\}\\&=\sqrt{P}(\tilde\eta_R-\eta^\star)'\nabla_\eta L_P(\eta^\star,\gamma^\star)/P+\sqrt{P}(\tilde\gamma_R-\gamma^\star)'\nabla_\gamma L_P(\eta^\star,\gamma^\star)/P+o_p(1+\|\sqrt{P}(\tilde\theta_R-\theta^\star\|^2)
\end{flalign*}However, By part (i) of Lemma \ref{lem:quad_exp}, we have $\sqrt{P}(\tilde\theta_R-\theta^\star)\asymp\sqrt{R}(\tilde\theta_R-\theta^\star)=O_p(1)$ (under Assumption \ref{ass:samp}). Applying the above, and the fact that $\{\nabla_\eta L_P(\theta^\star)\}/P=o_p(1)$, yields
\begin{flalign}
\sqrt{P}\{ L_P(\tilde\theta_R)/P-L_P(\theta^\star)/P\}&=o_p\{\|\sqrt{P}(\tilde\eta_R-\eta^\star)\|\}+\nabla_\gamma \mathcal{L}(\theta^\star)\sqrt{P}(\tilde\gamma_R-\gamma^\star)+o_p(\|\sqrt{P}(\tilde\gamma_R-\gamma^\star)\|)\nonumber\\&=\nabla_\gamma \mathcal{L}(\theta^\star)\sqrt{P}(\tilde\gamma_R-\gamma^\star)+o_p(1)\label{eq:1exp_o}.
\end{flalign}

From part (ii) of Lemma \ref{lem:lem1}, with $R\asymp P$,
\begin{flalign}\label{eq:2exp_o}
\sqrt{P}\{ L_P(\widehat\theta_R)/{P}-L_P(\theta^0)/{P}\}=O_p(1/\sqrt{P})
\end{flalign}
Subtracting the two expansion in \eqref{eq:1exp_o} and \eqref{eq:2exp_o}, we have that
	\begin{flalign*}
	\sqrt{P}\Delta_P(\tilde\theta_R,\widehat\theta_R)&=\sqrt{P}\left\{L_P(\tilde\theta_R)/{P}-L_P(\widehat\theta_R)/{P}\right\}\\&=\sqrt{P}\left\{[L_P(\tilde\theta_R)-{L}_P(\theta^\star)]/{P}-[L_P(\widehat\theta_R)-{L}_P(\theta^0)]/{P}\right\}+\sqrt{P}\{L_P(\theta^\star)/{P}-L_P(\theta^0)/{P}\}\\&=\sqrt{P}\left\{L_P(\theta^\star)/{P}-L_P(\theta^0)/{P}\right\}+O_p({P}^{-1/2})+o_p({P}^{-1/2})+\nabla_\gamma\mathcal{L}(\theta^\star)'\sqrt{P}(\tilde\gamma_R-\gamma^\star)\\&=\sqrt{P}\left\{\mathcal{L}(\theta^\star)-\mathcal{L}(\theta^0)\right\}+O_p({P}^{-1/2})+[1:\nabla_\gamma\mathcal{L}(\theta^*)']Z_P
	\end{flalign*} with $Z_P$ as defined in Assumption \ref{ass:dist}. By the hypothesis in Assumption \ref{ass:dist}, $[1:\nabla_\gamma\mathcal{L}(\theta^*)']{Z}_P$ is asymptotically normal with zero mean and variance $\Omega=[1:\nabla_\gamma\mathcal{L}(\theta^*)']Q[1:\nabla_\gamma\mathcal{L}(\theta^*)']'$.
	
	Now, define $\tilde Z_P:=[1:\nabla_\gamma\mathcal{L}(\theta^*)']Z_P/\sqrt{\Omega}$, and consider the probability
	\begin{flalign*}
	\mathrm{Pr}\left[\Delta_P(\tilde\theta_R,\widehat\theta_R)/\sqrt{\Omega}\le 0\right]&=\mathrm{Pr}\left[ \tilde{Z}_{P}+O_p({P}^{-1/2})+\sqrt{P}\left\{\mathcal{L}(\theta^\star)-\mathcal{L}(\theta^0)\right\}/\sqrt{\Omega}\le0\right]\\&=\mathrm{Pr}\left[ \tilde{Z}_{P}+O_p({P}^{-1/2})\le\frac{\sqrt{P}}{\sqrt{\Omega}}\left\{\mathcal{L}(\theta^0)-\mathcal{L}(\theta^\star)\right\}\right].
	\end{flalign*}Define $z_{P}:={\sqrt{P}}\left\{\left[\mathcal{L}(\theta^0)-\mathcal{L}(\theta^\star)\right]\right\}/\sqrt{\Omega}$ and note that, for any $P\ge 1$, $z_{P}<0$, since, by Assumption \ref{ass:cons}, $\mathcal{L}(\theta^0)< \mathcal{L}(\theta^\star)$, and $z_{P}\rightarrow-\infty$ as $P\rightarrow\infty$.

	Since $\tilde{Z}_{P}$ is asymptotically standard normal,
	\begin{flalign*}
	\mathrm{Pr}\left[\Delta_{P}(\tilde\theta_R,\widehat\theta_R)/\sqrt{\Omega}\le 0\right]&=\Phi(z_k)+\left\{\mathrm{Pr}(\tilde{Z}_{P}\le z_{P})-\Phi(z_k)\right\}+\left\{\mathrm{Pr}[\tilde{Z}_{P}\le z_{P}+o_p(1)]-\mathrm{Pr}(\tilde{Z}_{P}\le z_{P})\right\}\\&\leq \Phi(z_k)+\sup_{z}|\mathrm{Pr}(\tilde{Z}_{P}\le z_{P})-\Phi(z_k)|+o(1).
	\end{flalign*}Fix $\varepsilon>0$. Since $z_k\rightarrow-\infty$ as $P\rightarrow\infty$,  for some $P$ large enough we can conclude that $\Phi(z_k)\le\varepsilon/2$. From the convergence $\tilde{Z}_{P}\Rightarrow N(0,1)$, and the continuity of $\Phi(z)$, Polya's Theorem implies that for some $P$ large enough,
	$$
	\sup_{z}\left|\mathrm{Pr}(\tilde{Z}_{P}\le z)-\Phi(z)\right|\le \varepsilon/2,
	$$and for some $P$ large enough $\mathrm{Pr}[\Delta_{P}(\tilde\theta_R,\widehat\theta_R)/\sqrt{\Omega}\le 0]\le \varepsilon$. Since $\varepsilon>0$ is arbitrary, the results follows.
\end{proof}

\section{Proofs of Key Lemmas}

\begin{proof}[Proof of Lemma \ref{lem:lem1}]
	To prove part (i), we first prove consistency of $\widehat\theta_n$ for some $\vartheta\in\Theta_{\mathrm{I}}$, which can be proven by verifying the sufficient conditions in Theorem 2 of \cite{yuan1998asymptotics}. By \ref{ass:quad}(i), the function $\nabla_\theta L_n(\vartheta)=O_p(1/\sqrt{n})$ for each $\vartheta\in\Theta_{\mathrm{I}}$, which satisfies Assumption 1 of \cite{yuan1998asymptotics}. From Assumption \ref{ass:quad}(i), for any $\vartheta\in\Theta_{\mathrm{I}}$, there exists a neighbourhood $\mathcal{N}(\vartheta)$, such that for each $\theta\in\mathcal{N}(\vartheta)$, $L_n(\theta)$ has continuous second-order left/right partial derivatives, $\nabla_{\theta\theta}L_n(\theta)$, and by Assumption \ref{ass:quad}(iii), $\nabla_{\theta\theta}L_n(\theta)/n$ converges uniformly to $\nabla_{\theta\theta}\mathcal{L}(\theta)$, which is non-singular for each $\vartheta\in\Theta_{\mathrm{I}}$. Thus, Assumption 2 of \cite{yuan1998asymptotics} is satisfied, and we can conclude that $\widehat\theta_n:=\arg_{\theta\in\Theta}\{\nabla_\theta L_n(\theta)=0\}$ converges to $\vartheta$, for some $\vartheta\in\Theta_{\mathrm{I}}$. 
	
	Given that $\widehat\theta_n=\vartheta+o_p(1)$, for some $\vartheta\in\Theta_{\mathrm{I}}$, the remainder of the result follows similar arguments to Theorem 1 of \cite{andrews1999estimation}. In particular, from Theorem 6 in \cite{andrews1999estimation}  for l/r differentiable functions, the following Taylor series expansion is valid: for some $\vartheta\in\Theta_{\mathrm{I}}$, 
	\begin{flalign}
		L_n(\theta)-L_n(\vartheta)&=\sqrt{n}(\theta-\vartheta)'\nabla_{\theta}L_n(\vartheta)/\sqrt{n}+\frac{1}{2}\sqrt{n}(\theta-\vartheta)'\left[\nabla_{\theta\theta}L_n(\bar\theta)/n\right]^{}\sqrt{n}(\theta-\vartheta)'\nonumber\\&=\sqrt{n}(\theta-\vartheta)'\nabla_{\theta}L_n(\vartheta)/\sqrt{n}+\frac{1}{2}\sqrt{n}(\theta-\vartheta)'\left[\nabla_{\theta\theta}\mathcal{L}(\vartheta)\right]^{}\sqrt{n}(\theta-\vartheta)\nonumber\\&+\frac{1}{2}\sqrt{n}(\theta-\vartheta)'\left[\nabla_{\theta\theta}\mathcal{L}_n(\bar\theta)/n-\nabla_{\theta\theta}\mathcal{L}(\vartheta)\right]^{}\sqrt{n}(\theta-\vartheta)\label{eq:one-expand}
	\end{flalign}for $\bar\theta$ an intermediate value satisfying $\|\vartheta-\bar\theta\|\le\|\theta-\vartheta\|$. Clearly,
	\begin{flalign*}
		R_n(\theta):=\frac{1}{2}\sqrt{n}(\theta-\vartheta)'\left[\nabla_{\theta\theta}\mathcal{L}_n(\bar\theta)/n-\nabla_{\theta\theta}\mathcal{L}(\vartheta)\right]^{}\sqrt{n}(\theta-\vartheta)\le \frac{1}{2}\|\sqrt{n}(\theta-\vartheta)\|^2\|\nabla_{\theta\theta}\mathcal{L}_n(\bar\theta)/n-\nabla_{\theta\theta}\mathcal{L}(\vartheta)\|, 
	\end{flalign*}so that for any $\|\theta-\vartheta\|=o(1)$, we have that the remainder term in \eqref{eq:one-expand} is $O(\|\sqrt{n}(\theta-\vartheta)\|^2)$ by Assumption \ref{ass:quad}(iii). 
	
	Recalling the notations $\mathcal{M}_{\theta\theta}(\theta)=\nabla_{\theta\theta}\mathcal{L}(\theta)$, $\mathcal{J}=\mathcal{M}_{\theta\theta}(\vartheta)$, and ${Z}_n=-\mathcal{J}^{-1/2}\nabla_\theta L_n(\vartheta)/\sqrt{n}$, we obtain
	\begin{flalign*}
		L_n(\theta)-L_n(\vartheta)&=-\sqrt{n}(\theta-\vartheta)'\mathcal{J}_{}^{1/2}{Z}_n+\frac{1}{2}\sqrt{n}(\theta-\vartheta)'\mathcal{J}_{}\sqrt{n}(\theta-\vartheta)+R_n(\theta). 
	\end{flalign*}
	From the definition of $\widehat\theta_n$, we have $L_n(\vartheta)\ge L_n(\widehat\theta_n)$, so that applying the above equation yields
	$$
	0\ge L_n(\widehat\theta_n)-L_n(\vartheta)=-\sqrt{n}(\widehat\theta_n-\vartheta)'\mathcal{J}^{1/2}Z_n+\frac{1}{2}\sqrt{n}(\widehat\theta_n-\vartheta)'\mathcal{J}\sqrt{n}(\widehat\theta_n-\vartheta)+R_n(\widehat\theta_n).	
	$$The above equation is precisely the (negative of) equation (7.3) in the proof of Theorem 1 in \cite{andrews1999estimation} and the remainder of the proof follows the same argument. 

Having shown that $\sqrt{n}(\widehat\theta_n-\vartheta)=O_p(1)$, for some $\vartheta\in\Theta_{\mathrm{I}}$, the proof of part (ii) of the stated result follows the same argument as in Theorem 2 part (b) of \cite{andrews1999estimation}. 
	
\end{proof}

\begin{proof}[Proof of Lemma \ref{lem:quad_exp}]

To prove the first stated result, we use Theorem 6 in \cite{andrews1999estimation}  for l/r differentiable functions to obtain the following Taylor series expansion:
	\begin{flalign*}
	L_n(\eta,\tilde\gamma_n)-L_n(\eta^\star,\tilde\gamma_n)&=\sqrt{n}(\eta-\eta^\star)'\nabla_{\eta}L_n(\eta^\star,\tilde\gamma_n)/\sqrt{n}+\frac{1}{2}\sqrt{n}(\eta-\eta^\star)'\left[\nabla_{\eta\eta}L_n(\bar\eta,\tilde\gamma_n)/n\right]^{}\sqrt{n}(\eta-\eta^\star)'\\&=\sqrt{n}(\eta-\eta^\star)'\left\{\nabla_{\eta}L_n(\theta^\star)/\sqrt{n}+n^{-1}\nabla_{\eta\gamma}L_n(\eta^\star,\bar\gamma)[\sqrt{n}(\tilde\gamma_n-\eta^\star)]\right\}\\&+\frac{1}{2}\sqrt{n}(\eta-\eta^\star)'\left[n^{-1}\nabla_{\eta\eta}L_n(\bar\eta,\tilde\gamma_n)\right]^{}\sqrt{n}(\eta-\eta^\star)'
	\end{flalign*}for $\bar\eta$, $\bar\gamma$,  intermediate values satisfying $\|\eta^\star-\bar\eta\|\le\|\eta-\eta^\star\|$ and $\|\gamma^\star-\bar\gamma\|\le\|\tilde\gamma_n-\gamma^\star\|$, respectively. Letting
	\begin{flalign}
	R_n(\eta,\tilde\gamma_n):=&\sqrt{n}\left(\eta-\eta^\star\right)'\left\{\left[\nabla_{\eta\eta}L_n(\bar\eta,\tilde\gamma_n)/n\right]^{}-\left[\nabla_{\eta\eta}\mathcal{L}(\bar\eta,\tilde\gamma_n)\right]^{}\right\}\sqrt{n}\left(\eta-\eta^\star\right)\nonumber\\&-\sqrt{n}\left(\eta-\eta^\star\right)'\left\{\left[\nabla_{\eta\eta}\mathcal{L}(\theta^\star)\right]^{}-\left[\nabla_{\eta\eta}\mathcal{L}(\bar\eta,\tilde\gamma_n)\right]^{}\right\}\sqrt{n}\left(\eta-\eta^\star\right)\nonumber\\&+\sqrt{n}(\eta-\eta^\star)\{n^{-1}\nabla_{\eta\gamma}L_n(\eta^\star,\bar\gamma)-\nabla_{\eta\gamma}\mathcal{L}(\theta^\star)\}\sqrt{n}(\tilde\gamma_n-\gamma^\star), \label{eq:remain}
	\end{flalign}we have
	\begin{flalign}
	L_n(\eta,\tilde\gamma_n)-L_n(\eta^\star,\tilde\gamma_n)&=\sqrt{n}(\eta-\eta^\star)'\left\{\nabla_{\eta}L_n(\eta^\star,\tilde\gamma_n)/\sqrt{n}+\nabla_{\eta\gamma}\mathcal{L}(\theta^\star)\sqrt{n}(\tilde\gamma_n-\gamma^\star)\right\}\nonumber\\&+\frac{1}{2}\sqrt{n}(\eta-\eta^\star)'\left[\nabla_{\eta\eta}\mathcal{L}(\theta^\star)\right]^{}\sqrt{n}(\eta-\eta^\star)'+R_n(\eta,\tilde\gamma_n).\label{eq:2sexpand}
	\end{flalign}
	
	Recalling the notations $\mathcal{M}_{\eta\eta}(\theta)=\nabla_{\eta\eta}\mathcal{L}(\theta)$, $\mathcal{J}=\mathcal{M}_{\eta\eta}(\theta^\star)$, and ${V}_n=-\mathcal{J}^{-1/2}\{\nabla_\eta L_n(\theta^\star)/\sqrt{n}+\mathcal{M}_{\eta\gamma}(\theta^\star)\sqrt{n}(\tilde\gamma_n-\gamma^\star)\}$, we obtain
	\begin{flalign}
	L_n(\eta,\tilde\gamma_n)-L_n(\eta^\star,\tilde\gamma_n)&=-\sqrt{n}(\eta-\eta^\star)'\mathcal{J}_{}^{1/2}(\theta^\star){V}_n+\frac{1}{2}\sqrt{n}(\eta-\eta^\star)'\mathcal{J}_{}\sqrt{n}(\eta-\eta^\star)+R_n(\theta).\label{eq:new_expans} 
	\end{flalign}
	From the definition of $\tilde\eta_n$, we have $L_n(\eta^\star,\tilde\gamma_n)\ge L_n(\tilde\eta_n,\tilde\gamma_n)$, so that applying the above equation yields
	$$
	0\ge L_n(\tilde\eta_n,\tilde\gamma_n)-L_n(\eta^\star,\tilde\gamma_n)=-\sqrt{n}(\tilde\eta_n-\eta^\star)'\mathcal{J}^{1/2}V_n+\frac{1}{2}\sqrt{n}(\tilde\eta_n-\eta^\star)'\mathcal{J}\sqrt{n}(\tilde\eta_n-\eta^\star)+R_n(\tilde\theta_n).	
	$$

	Now, consider $R_n(\tilde\theta_n)=R_{1n}(\tilde\theta_n)+R_{2n}(\tilde\theta_n)+R_{3n}(\tilde\theta_n)$, corresponding to each of the three terms in \eqref{eq:remain}. From the consistency of $\tilde\theta_n$, there exist some $\delta_n=o(1)$ such that $\|\tilde\theta_n-\theta^\star\|\le\delta_n$ with probability converging to one. For the first term, we have
	$$
	|R_{1n}(\tilde\theta_n)|\le \|\sqrt{n}(\tilde\eta_n-\eta^\star)\|^2\sup_{\|\eta-\eta^\star\|\le\delta_n}\| \nabla_{\eta\eta}L_n(\bar\eta,\tilde\gamma_n)/n -\nabla_{\eta\eta}\mathcal{L}(\bar\eta,\tilde\gamma_n)\|=o_p\left(\|\sqrt{n}(\tilde\eta_n-\eta^\star)\|^2\right)
	$$where the equality follows from Assumption \ref{ass:quad}(iv). Similarly, for the third term we have
	$$
	|R_{3n}(\tilde\theta_n)|\le \|\sqrt{n}(\tilde\eta_n-\eta^\star)\|^2\sup_{\|\eta-\eta^\star\|\le\delta_n}\| \nabla_{\eta\gamma}L_n(\eta^\star, \bar\gamma)/n -\nabla_{\eta\gamma}\mathcal{L}(\eta^\star,\gamma^\star)\|=o_p\left(\|\sqrt{n}(\tilde\eta_n-\eta^\star)\|^2 \vee \|\sqrt{n}(\tilde\gamma_n-\gamma^\star)\|\right),
	$$where the $o_p(1)$ term follows by consistency of $\tilde\gamma_n$, the definition of the intermediate value, $\bar\gamma$, i.e., $\|\bar\gamma-\gamma^\star\|\le\|\gamma^\star-\tilde\gamma_n\|$, and continuity of the second derivatives (Assumption \ref{ass:quad}(i)).
	For the second term, $|R_{2n}(\tilde\theta_n)|$, a similar argument to the above shows
	$$
	|R_{2n}(\tilde\theta_n)|\le o_p(1)(1+\|\sqrt{n}(\tilde\eta_n-\eta^\star)\|\|\sqrt{n}(\tilde\gamma_n-\gamma^\star)\|)=o_p(1+\|\sqrt{n}(\tilde\eta_n-\eta^\star)\|)O_p(1)=o_p(\|\sqrt{n}(\tilde\eta_n-\eta^\star)\|)
	$$where the $O_p(1)$ term in the second equality follows by Assumption \ref{ass:bound}. Putting all the terms together yields
	\begin{equation}\label{eq:remainder}
	|R_{n}(\tilde\theta_n)|\le o_p\{1+\|\sqrt{n}(\tilde\eta_n-\eta^\star)\|+\|\sqrt{n}(\tilde\eta_n-\eta^\star)\|^2\}.	
	\end{equation}

	Now, note that, for $\vartheta_n=(\eta^\star,\tilde\gamma_n)^\top$, we have that
	$$
	n\cdot\Delta_n(\tilde\theta_n,\vartheta_n)\equiv L_n(\tilde\eta_n,\tilde\gamma_n)-L_n(\eta^\star,\tilde\gamma_n).
	$$
	Recalling $\mathcal{J}:=\mathcal{M}_{\eta\eta}(\theta^\star)$, let $\kappa_n:=\mathcal{J}^{1/2}\sqrt{n}(\tilde\eta_n-\eta^\star)$, where $\mathcal{J}^{1/2}$ exists by Assumption \ref{ass:quad}(iii). Using the definitions, $V_n,\kappa_n$ and equation \eqref{eq:remainder}, we can re-arrange $$0\ge L_n(\tilde\eta_n,\tilde\gamma_n)-L_n(\eta^\star,\tilde\gamma_n)\equiv 	n \cdot \Delta_n(\tilde\theta_n,\vartheta_n)$$ as
	\begin{flalign*}
	0\ge 	n \Delta_n(\tilde\theta_n,\vartheta_n)&=-\kappa_n'{V}_n+\frac{1}{2}\kappa_n'\kappa_n+o_p\{1+\|\sqrt{n}(\tilde\eta_n-\eta^\star)\|+\|\sqrt{n}(\tilde\eta_n-\eta^\star)\|^2\}\\&=-O_p(\|\kappa_n\|)+\|\kappa_n\|^2/2+o_p(1)\{1+\|\mathcal{J}^{-1/2}\kappa_n\|+\|\mathcal{J}_{}^{-1/2}\kappa_n\|^2\}\\&=-\{O_p(1)-o_p(1)\}\|\kappa_n\|+\|\kappa_n\|^2/2+o_p(\|\kappa_n\|^2)+o_p(1),\\&=-\zeta_n\|\kappa_n\|+\|\kappa_n\|^2/2+o_p(\|\kappa_n\|^2)+o_p(\|\kappa_n\|),
	\end{flalign*}where $\zeta_n$ denotes the $\{O_p(1)-o_p(1)\} $ term. Rearranging the LHS of the above equation we have
	\begin{flalign*}
	\zeta_n^2\ge [\|\kappa_n\|\{1+o_p(1)\}-\zeta_n]^2+o_p(1)
	\end{flalign*}and we obtain $\|\kappa_n\|\le O_p(1)$. Since $\mathcal{J}$ is non-singular, by Assumption \ref{ass:quad}(iv), there exists some $c>0$ such that $$O_p(1)\ge \|\kappa_n\|=\|\mathcal{J}^{1/2}\sqrt{n}(\tilde\eta_n-\eta^\star)\|\ge c\|\sqrt{n}(\tilde\eta_n-\eta^\star)\|$$ and part (i) of the result follows.

	To establish part (ii), we use part (i) and the expansion
	\begin{flalign*}
	L_n(\eta,\tilde\gamma_n)-L_n(\eta^\star,\tilde\gamma_n)&=-\sqrt{n}(\eta-\eta^\star)'\mathcal{J}_{}^{1/2}{V}_n+\frac{n}{2}(\eta-\eta^\star)'\mathcal{J}(\eta-\eta^\star)+R_n(\tilde\theta_n)+o_p(\|\eta-\eta^\star\|),
	\end{flalign*}which, from part (i), we can re-arrange as
	\begin{flalign}
		n \cdot \Delta_n(\tilde\theta_n,\vartheta_n)=L_n(\tilde\eta_n,\tilde\gamma_n)-L_n(\eta^\star,\tilde\gamma_n)&=-\kappa_n' V_n+\frac{1}{2}\kappa_n'\kappa_n+o_p(1)\{1+\|\kappa_n\|+\|\kappa_n\|^2\}\nonumber\\&=\frac{1}{2}\|\kappa_n-{V}_n\|^2-\frac{1}{2}{V}_n'{V}_n+o_p(1).\label{eq:expands}
	\end{flalign}The $o_p(1)$ term follows since, by part (i) of the result and Assumption \ref{ass:bound}, we have $\|\tilde\theta_n-\theta^\star\|=O_p(1)$, so that we have $R_n(\tilde\theta_n)=o_p(1)$.

\end{proof}

\begin{proof}[Proof of Lemma \ref{lem:lem2}]
	Again, Theorem 6 of \cite{andrews1999estimation} allows us to expand $L_n(\eta,\gamma)$ around $\theta=\theta^\star$:
	\begin{flalign*}
	L_n(\eta,\gamma)-L_n(\eta^\star,\gamma^\star)&=\sqrt{n}(\eta-\eta^\star)'\nabla_\eta L_n(\theta^\star)/\sqrt{n}+\sqrt{n}(\gamma-\gamma^\star)'\nabla_\gamma L_n(\theta^\star)/\sqrt{n}\\&+\frac{n}{2}(\theta-\theta^\star)'\nabla_{\theta\theta}\mathcal{M}(\theta^\star)(\theta-\theta^\star)+R_n(\theta),
	\end{flalign*}where the remainder term $R_n(\theta)$ is given by
	$$
	R_n(\theta)=\frac{n}{2}(\theta-\theta^\star)'\left[\nabla_{\theta\theta}L_n(\bar\theta)/n-\mathcal{L}(\bar\theta)\right](\theta-\theta^\star)+\frac{n}{2}(\theta-\theta^\star)'\left[\nabla_{\theta\theta}\mathcal{L}(\bar\theta)-\mathcal{L}(\theta^\star)\right](\theta-\theta^\star).
	$$and $\bar\theta$ an intermediate value that satisfies $\|\theta^\star-\bar\theta\|\le\|\theta-\theta^\star\|$.
	
	However, since $\|\sqrt{n}(\tilde\theta_n-\theta^\star)\|=O_p(1)$, by Lemma \ref{lem:quad_exp}(a), Assumption \ref{ass:quad}(iv) and the definition of the intermediate value, we have that
	$
	|R_n(\tilde\theta_n)|\le o_p\{1+\|\sqrt{n}(\tilde\theta_n-\theta^\star)\|^2\}.
	$ Consequently,
	\begin{flalign*}
	n^{-1/2}\{L_n(\tilde\theta_n)-L_n(\theta^\star)\}&=\sqrt{n}(\tilde\theta_n-\theta^\star)'\frac{1}{n}\begin{pmatrix}\nabla_\eta L_n(\theta^\star)\\\nabla_\gamma L_n(\theta^\star)\end{pmatrix}	+\frac{n^{-1/2}}{2}\|\mathcal{M}_{\theta\theta}(\theta^\star)^{1/2}\sqrt{n}(\tilde\theta_n-\theta^\star)\|^2+o_p(1/\sqrt{n})\\&=(\tilde\eta_n-\eta^\star)'\nabla_\eta L_n(\theta^\star)/\sqrt{n}+\sqrt{n}(\tilde\gamma_n-\gamma^\star)'\nabla_\gamma L_n(\theta^\star)/n+O_p(1/\sqrt{n})+o_p(1).
	\end{flalign*}From Assumption \ref{ass:quad}(ii), $\nabla_\eta L_n(\theta^\star)/\sqrt{n}=O_p(1)$, while from Lemma \ref{lem:quad_exp}, $\|\tilde\eta_n-\eta^\star\|=o_p(1)$, and the first term above is $o_p(1)$.

	By Assumption \ref{ass:quad}(iv), $\nabla_\gamma L_n(\theta^\star)/n$ converges in probability to $\nabla_\gamma \mathcal{L}(\theta^\star)$. Moreover, by \ref{ass:cons}(iii), $\theta^\star$ is only a solution to $\nabla_\eta \mathcal{L}(\theta)=0$, so that $\nabla_\gamma \mathcal{L}(\theta^\star)\ne0$. Therefore, we can conclude that
\begin{flalign*}
n^{-1/2}\{L_n(\tilde\theta_n)-L_n(\theta^\star)\}&=o_p(1)+\sqrt{n}(\tilde\gamma_n-\gamma^\star)'\nabla_\gamma \mathcal{L}(\theta^\star).
\end{flalign*}

\end{proof}

\section{Numerical Implementation Details}

\subsection{Extending the Findings of \protect\cite{Smith2009}}

\label{subsec:nidextsmith}

\begin{figure*}[ht]
\includegraphics[width = \textwidth]{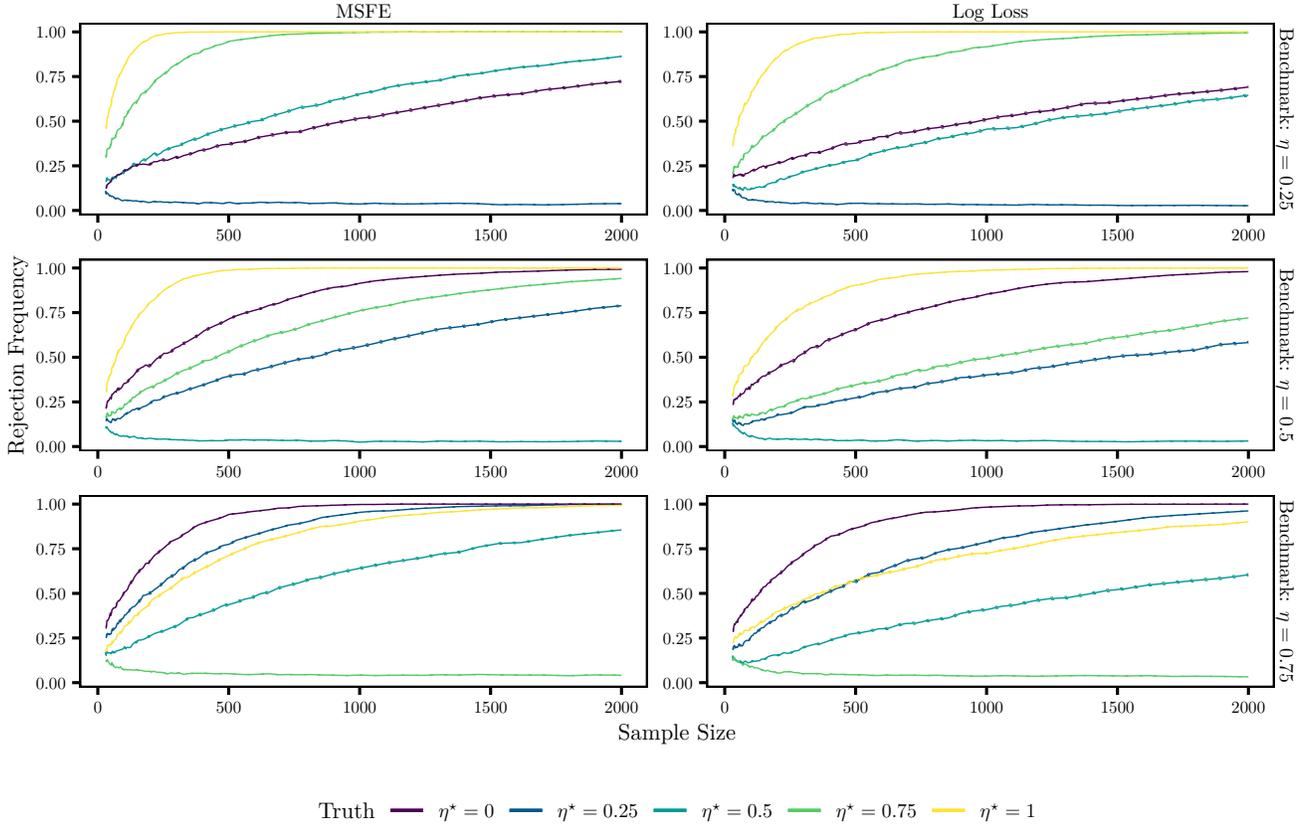}
\caption{Estimates (solid) and their 95\% confidence intervals (dotted) of
the Rejection Frequency ($y$-axis) for the hypothesis test of no inferior
predictive accuracy of a benchmark forecast combination with fixed weights
(rows) against the alternative optimal forecast combination. The test is
conducted with observations drawn from DGPs across a range of pseudo-true
weights (colors), and across a grid of sample sizes ($x$-axis). Results for
a point forecast combination with optimal weights minimizing the MSFE are
given in the first column, and results for a distributional forecast
combination with optimal weights minimizing the log loss are given in the
second column.}
\label{fig:smith2}
\end{figure*}

In this section, we detail the computational steps used to produce the
results in Section \ref{subsec:extsmith}, which extend the simulation
exercise given in Section 3.1 of \cite{Smith2009}. For convenience, the
results are reproduced in Figure \ref{fig:smith2}. Recall that this
simulation exercise is conducted based on data drawn from the following
zero-mean AR(2) process: 
\begin{equation*}
y_t = \phi_1 y_{t-1} + \phi_2 y_{t-2} + \epsilon_t,\ \epsilon_t \overset{%
i.i.d.}{\sim} N(0, \sigma^2_{\epsilon}),
\end{equation*}
where the data is used to estimate the parameters in the standard two-step
fashion, by minimizing the log loss or the MSFE of a distributional or point
forecast combination, respectively, based the following model: 
\begin{align*}
f^{(t)}_{1}(y) &= N\{y;\gamma_1 y_{t-1},1\}, \\
f^{(t)}_{2}(y) &= N\{y; \gamma_2 y_{t-2},1\}, \\
f^{(t)}(y) &= \eta f^{(t)}_{1}(y) + (1 - \eta) f^{(t)}_{2}(y),
\end{align*}
where $N\{x;\mu,\Sigma\}$ denotes the normal pdf evaluated at $x$ with mean $%
\mu$ and variance $\Sigma$, $\gamma_1$ and $\gamma_2$ are the parameters of
the constituent models, and $\eta$ is the weight assigned to the first model.

To produce the results, we must first produce the corresponding DGP
parameters given a selected value for the true optimal combination weight $%
\eta^{\star}$. Then for each such $\eta^{\star}$ value of interest, we draw
samples from the associated induced DGP to conduct the suite of tests
represented in Figure \ref{fig:smith2}, and iterate on these steps to
produce accurate estimates of the rejection frequency.

\subsubsection{Obtaining the DGP Parameters}

\label{subsubsec:dgpparams}

Given a desired value for $\eta^{\star}$, we conduct the following steps to
obtain DGP parameters $(\phi_1, \phi_2, \sigma^2_{\epsilon})$ for which the
actual value for $\eta^{\star}$ is as desired.

\begin{enumerate}
\item Draw $z_t \overset{i.i.d.}{\sim} N(0,1)$ for $t = 1, 2, \ldots, 10^7$.

\item Let $y_0 = y_{-1} = 0$ and $y_t = \phi_1 y_{t-1} + \phi_2 y_{t-2} +
\sigma_{\epsilon} z_t$ for $t = 1, 2, \ldots, 10^7$.

\item Solve the following equality-constrained optimization program: {\ 
{\footnotesize 
\begin{align}
\min_{\phi_1, \phi_2, \sigma_{\epsilon}^2, \gamma, \eta} \quad & - 1.1
\log(1 + \phi_2) - \log(1 - \phi_1 - \phi_2) - 2 \log(1 + \phi_1 - \phi_2) -
0.1 \log(\phi_1^2 + \phi_2^2)  \label{eqn:progcriterion} \\
\text{s.t.} \quad & \eta = \eta^{\star},  \label{eqn:progconst1} \\
& \mathrm{Var}(y_t) = 1,  \label{eqn:progconst2} \\
& \frac{1}{10^7} \sum_{t=1}^{10^7} \frac{\partial}{\partial \eta} S(\eta,
\gamma, y_t) = 0,  \label{eqn:progconst3} \\
& \frac{1}{10^7} \sum_{t=1}^{10^7} \frac{\partial}{\partial \gamma_j}
S_j(\gamma_j, y_t) = 0,\ j = 1, 2,  \label{eqn:progconst4}
\end{align}
}} where $S(\eta, \gamma, y_t)$ and $S_j(\gamma_j, y_t)$ is the score%
\footnote{%
A score, referring to either a scoring rule or scoring function, is a
measure of forecast accuracy. See Section \ref{sec:acc} for details.} of the
forecast combination and constituent forecast $j$, respectively, of $y_t$.
\end{enumerate}

We produce results for $S$ being either the MSFE, in which case 
\begin{equation*}
\begin{aligned} S(\eta, \gamma, y_t) &= (\mathbb{E}_{f^{(t)}}[y_t] - y_t)^2
\\ &= (\eta \mathbb{E}_{f^{(t)}_1}[y_t] + (1 - \eta)
\mathbb{E}_{f^{(t)}_2}[y_t] - y_t)^2 \\ &= (\eta \gamma_1 y_{t-1} + (1-\eta)
\gamma_2 y_{t-2} - y_t)^2, \\ S_j(\gamma_j, y_t) &=
(\mathbb{E}_{f^{(t)}_j}[y_t] - y_t)^2 \\ &= (\gamma_j y_{t-j} - y_t)^2,
\end{aligned}
\end{equation*}
or $S$ being the log loss, so that 
\begin{equation*}
{\footnotesize \begin{aligned} S(\eta, \gamma, y_t) &= - \log f^{(t)}(y_t)
\\ &= - \log ( \eta f^{(t)}_1(y_t) + (1 - \eta) f^{(t)}_2(y_t) ) \\ &= -
\log \bigg( \eta (2 \pi)^{-1} \exp \Big(-\frac{1}{2}(y_t - \gamma_1
y_{t-1})^2 \Big) + (1-\eta) (2 \pi)^{-1} \exp \Big(-\frac{1}{2}(y_t -
\gamma_2 y_{t-2})^2 \Big) \bigg), \\ S_j(\gamma_j, y_t) &= - \log
f^{(t)}_j(y_t) \\ &= - \log(2\pi) - \frac{1}{2} (y_t - \gamma_j y_{t-j})^2.
\end{aligned}}
\end{equation*}

The criterion function given in Equation \eqref{eqn:progcriterion} above is
designed to ensure that the DGP is strictly stationary and that $%
\eta^{\star} $ is identified. The DGP is stationary if and only if the roots
of its characteristic function lie inside the unit circle, which corresponds
to parameter values that satisfy $\phi_2 > -1$, $\phi_2 < 1 - \phi_1$ and $%
\phi_2 < 1 + \phi_1$. This space of parameter values is sometimes referred
to as the `stationarity triangle', because it forms a triangle in Euclidean
space. Inspecting the first three logarithmic terms in Equation %
\eqref{eqn:progcriterion} reveals that the criterion function is undefined
for parameter values that do not correspond to a stationary process, and
that the criterion function approaches infinity as the vector $(\phi_1,
\phi_2)$ approaches the boundary of the stationarity triangle. The final
logarithmic term in Equation \eqref{eqn:progcriterion} ensures that the
criterion function approaches infinity as the vector $(\phi_1, \phi_2)$
approaches zero, where $\eta^{\star}$ is not identified since $%
\gamma_1^{\star} = \gamma_2^{\star} = 0$. Figure \ref{fig:progcriterion}
displays the stationarity triangle and contours of the criterion function,
which is minimized at $(\phi_1, \phi_2) = (0.38, 0.14)$ when unconstrained.
Note that the constants $(-4, -1, -2, -0.1)$ with which the logarithmic
terms are multiplied in the criterion function are arbitrary, and any other
collection of negative constants would also suffice.

\begin{figure*}[ht]
\includegraphics[width = \textwidth]{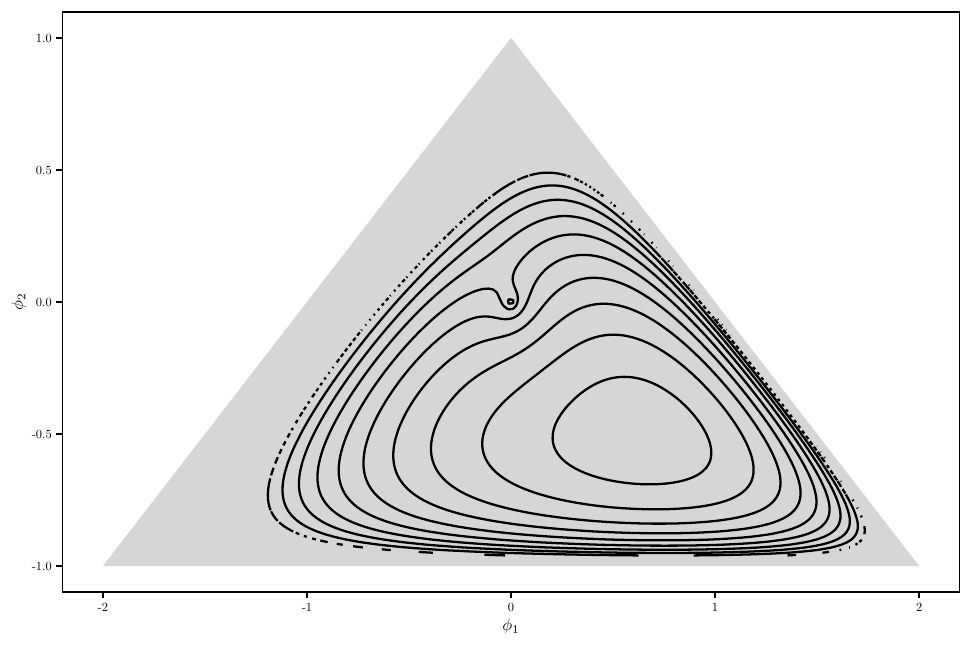}
\caption{The set of values for the AR(2) parameters $(\protect\phi_1, 
\protect\phi_2)$ for which the resulting stochastic process is strictly
stationary (gray), and contours of the criterion function in Equation 
\eqref{eqn:progcriterion} in a neighborhood of its minimum at $(\protect\phi%
_1, \protect\phi_2) = (0.38, 0.14)$ (black).}
\label{fig:progcriterion}
\end{figure*}

The constraints in Equations \eqref{eqn:progconst1} - \eqref{eqn:progconst4}
ensure that the desired value for the weight $\eta^{\star}$ is actually
obtained by the resulting DGP, and that the unconditional variance of the
DGP is identical across different values of $\eta^{\star}$. The value of the
weight itself is chosen in Equation \eqref{eqn:progconst1}, and the
unconditional variance of the DGP is fixed at unity by the constraint in
Equation \eqref{eqn:progconst2}. After solving the Yule-Walker equations for
the AR(2) process, we obtain the following expression for the unconditional
variance of $y_t$ as a function of the parameters $(\phi_1, \phi_2,
\sigma^2_{\epsilon})$: 
\begin{equation*}
\mathrm{Var}(y_t) = \frac{\sigma_{\epsilon}^2 (1 - \phi_2)}{(1 - \phi_1^2 -
\phi_2^2)(1 - \phi_2) - 2 \phi_1^2 \phi_2}.
\end{equation*}
The constraint in Equation \eqref{eqn:progconst3} is the first-order
condition for optimizing the weights according to the chosen score, across a
gigantic sample of $10^7$ observations. By the law of large numbers for
stationary processes, this average can approximate $\mathbb{E}%
_{DGP}[\partial S(\eta, \gamma, y_t)/\partial \eta]$ to any desired degree
of accuracy for a large enough sample size. Similarly, the constraint in
Equation \eqref{eqn:progconst4} ensures that the first-order condition for
optimizing the constituent model parameters are satisfied according to the
chosen score, across the same gigantic sample, and with a large enough
sample size the average in the constraint for the $j$\textsuperscript{th}
constituent model can approximate $\mathbb{E}_{DGP}[\partial S_j(\gamma_j,
y_t)/ \partial \gamma_j]$ to any desired degree of accuracy. Note that by
Assumptions \ref{ass:parm} - \ref{ass:quad}, $(\eta, \gamma) =
(\eta^{\star}, \gamma^{\star})$ is the solution to the problem of setting
the above expectations to zero, just as the constraints %
\eqref{eqn:progconst3} and \eqref{eqn:progconst4} set the corresponding
averages to zero.\footnote{%
See Section \ref{subsubsec:estforecomb} for more details on estimating
optimal forecast combinations, where Equations \eqref{eq:gamma} and %
\eqref{eq:est2s} define the two-step estimated combinations for which %
\eqref{eqn:progconst3} and \eqref{eqn:progconst4} are the first-order
conditions.} Since the sample size of $10^7$ is orders of magnitude larger
than the largest hypothesis test sample size of $2000$ considered in the
results displayed in Figure \ref{fig:smith2}, we assume that there is a
negligible difference between the desired weight imposed in Equation %
\eqref{eqn:progconst2} and the actual optimal weight $\eta^{\star}$
corresponding to the DGP which solves the optimization program.

We code the optimization program in R, and solve it using the \textit{nloptr}
package, a library of nonlinear optimization routines \citep{Johnson2022}.
In particular, we solve the optimization problem using sequential quadratic
programming, with an implementation based on \cite{Kraft1988, Kraft1994}.
See \cite{Gill2021} for a textbook treatment of sequential quadratic
programming.

\subsubsection{Estimating Rejection Frequency}

\label{subsubsec:estrejfreq}

Once we have a means of obtaining an AR(2) DGP with a selected pseudo-true
weight $\eta^{\star}$, producing the required estimates of the rejection
frequencies plotted in Figure \ref{fig:smith2} is straightforward. Briefly,
for each pseudo-true weight $\eta^{\star}$ (colors), each benchmark weight $%
\eta$ (rows), each score (columns), and each sample size ($x$-axis), we
conduct 5000 hypothesis tests of no inferior forecast accuracy of the
benchmark combination against the alternative optimal combination using 5000
samples of the required sample size drawn from a DGP with the required $%
\eta^{\star}$. The rejection frequency ($y$-axis) is then estimated as the
proportion of those tests that reject the null hypothesis. In detail, the
steps are as follows.

\begin{enumerate}
\item Obtain the DGP parameters $(\phi_1, \phi_2, \sigma^2_{\epsilon})$
corresponding to each desired pseudo-true weight $\eta^{\star} \in \{0,
0.25, 0.5, 0.75, 1\}$ for each score $S \in \{\mathrm{MSFE}, \mathrm{log\
loss}\}$ using the method detailed in Appendix \ref{subsubsec:dgpparams}
above.

\item For each score $S = \{\mathrm{MSFE}, \mathrm{log\ loss}\}$, each
pseudo-true weight $\eta^{\star} \in \{0, 0.25, 0.5, 0.75, 1\}$ and each
benchmark weight $\eta \in \{0.25, 0.5, 0.75\}$:

\begin{enumerate}
\item Draw the sample $y^{(1)}_{1:2000}$ from the DGP corresponding to $S$
and $\eta^{\star}$ that was found in Step 1.

\item For each such sample, consider the truncated samples $y^{(1)}_{1:T+1}$
for $T+1 = 30, 32, \ldots, 2000$, and for each of these conduct a hypothesis
test of no inferior forecast accuracy of: a) the benchmark forecast
combination with the weight fixed at $\eta$, against b) the alternative
forecast combination with optimal weights, where c) forecast accuracy is
measured according to $S$, and d) the in-sample and out-of-sample sizes are
equal. Record $R^{(1)}_{T+1} = 1$ where the test is rejected at the $5\%$
level, and $R^{(1)}_{T+1} = 0$ otherwise. This hypothesis test is described
in Section \ref{subsubsec:tests}, where $R = P = (T+1)/2$.
\end{enumerate}

\item Repeat Step 2 5000 times, drawing 5000 samples $y^{(1)}_{1:T+1},
y^{(2)}_{1:T+1}, \ldots, y^{(5000)}_{1:T+1}$ and obtaining hypothesis test
results $R^{(i)}_{T+1}$ for each $i = 1, 2, \ldots, 5000$, $T+1 = 30, 32,
\ldots, 2000$, $S$, $\eta^{\star}$ and $\eta$.

\item Estimate the rejection frequencies with the statistic $\hat{r}_{T+1} = 
\frac{1}{5000} \sum_{i=1}^{5000} R^{(i)}_{T+1}$, and calculate the
corresponding $95\%$ confidence intervals using standard asymptotics for
i.i.d. Bernoulli random variables.
\end{enumerate}

\subsection{Cause of the puzzle}

\label{subsec:nidpuzzlecause}

\begin{figure*}[ht]
\includegraphics[width = \textwidth]{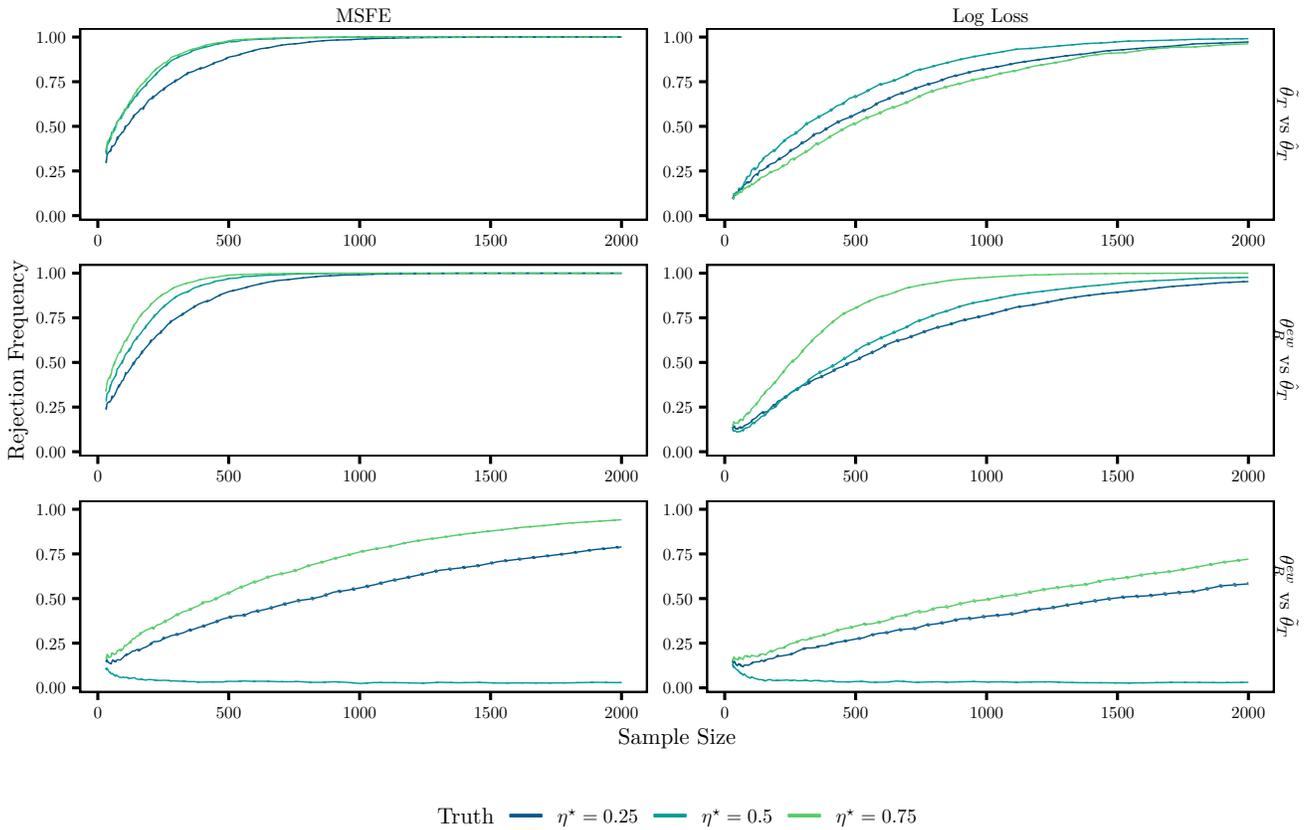}
\caption{Estimates (solid) and their 95\% confidence intervals (dotted) of
the Rejection Frequency ($y$-axis) for the hypothesis test of no inferior
predictive accuracy of a benchmark forecast combination against an
alternative combination (rows, benchmark vs alternative). The test is
conducted with observations drawn from DGPs across a range of pseudo-true
weights (colors), and across a grid of sample sizes ($x$-axis). Results for
a point forecast combination with optimal weights minimizing the MSFE are
given in the first column, and results for a distributional forecast
combination with optimal weights minimizing the log loss are given in the
second column.}
\label{fig:avoid2}
\end{figure*}

In this section, we address the computational steps used to produce the
results in Section \ref{subsec:smithone}, which are re-displayed in Figure %
\ref{fig:avoid2} for convenience. These steps are almost identical to those
described in Section \ref{subsec:nidextsmith} above, with one difference.
Whereas in Section \ref{subsubsec:estrejfreq} we conduct the hypothesis test
using a benchmark two-step combination with fixed weights $\eta \in \{0.25,
0.5, 0.75\}$ and an optimally-weighted two-step alternative, here we use the
(benchmark, alternative) pairs $\{(\tilde{\theta}_T, \hat{\theta}_T),
(\theta^{ew}_R, \hat{\theta}_T), (\theta^{ew}_R, \tilde{\theta}_T)\}$, which
you can see in the labels for the rows of Figure \ref{fig:avoid2}.

\end{document}